\newacro{CES}[CES]{Cover-El Gamal-Salehi}
\newacro{GKW}[GKW]{G\' acs-K\"orner-Witsenhausen}
\newacro{MAC}[MAC]{Multiple-Access Channel}
\newacro{MAC-FB}[MAC-FB]{MAC with Feedback}
\newacro{CL}[CL]{Cover-Leung}
\newacro{ptp}[PtP]{Point-to-Point}
\newtheorem{theorem}{Theorem}
\newtheorem{lem}{Lemma}
\newtheorem{fact}{Fact}
\newtheorem{proposition}{Proposition}
\theoremstyle{definition}
\newtheorem{definition}{Definition}
\newtheorem{example}{Example}
\newtheorem*{prob*}{Problem}
\theoremstyle{remark}
\newtheorem{remark}{Remark}
\def\old@comma{,}
    \old@comma\discretionary{}{}{}%
\definecolor{darkblue}{rgb}{0.1,0.1,0.8}
\definecolor{brickred}{rgb}{0.8, 0.25, 0.33}
\newcommand{\revcol}{\iftoggle{Blue_revision}{\color{darkblue}}{ }}
\newcommand{\add}[1]{\iftoggle{Blue_revision}{{\color{darkblue}#1}}{#1}}
\global\long\def\ZZ{\mathbb{Z}}
\global\long\def\EE{\mathbb{E}}
\global\long\def\PP{\mathbb{P}}
\global\long\def\FF{\mathbb{F}}
\global\long\def\11{\mathbbm{1}}
\newcommand{\bfb}{\mathbf{b}}
\newcommand{\bft}{\mathbf{t}}
\newcommand{\bfs}{\mathbf{s}}
\newcommand{\bfu}{\mathbf{u}}
\newcommand{\bfv}{\mathbf{v}}
\newcommand{\bfw}{\mathbf{w}}
\newcommand{\bfB}{\mathbf{B}}
\newcommand{\bfT}{\mathbf{T}}
\newcommand{\bfU}{\mathbf{U}}
\newcommand{\bfV}{\mathbf{V}}
\newcommand{\bfx}{\mathbf{x}}
\newcommand{\bfX}{\mathbf{X}}
\newcommand{\bfy}{\mathbf{y}}
\newcommand{\bfY}{\mathbf{Y}}
\global\long\def\+{\oplus}
\global\long\def\P{\mathsf{P}}
\def\<{\langle}
\def\>{\rangle}
\newcommand*{\medcap}{\mathbin{\scalebox{1}{\ensuremath{\bigcap}}}}%
\newcommand*{\medcup}{\mathbin{\scalebox{1}{\ensuremath{\bigcup}}}}%
\def\deq{\mathrel{\ensurestackMath{\stackon[1pt]{=}{\scriptstyle\Delta}}}}
\newcommand{\sourceS}{(\underline{S},  P_{\underline{S}})}
\newcommand{\MAC}{(\underline{\mathcal{X}}, \mathcal{Y}, P_{Y|X_1X_2X_3} )}
\newcommand{\MACFB}{(\underline{\mathcal{X}}, \mathcal{Y}, P_{Y|\underline{X}}, \mathcal{T})}
\begin{document}
\title{\huge{Structured Mappings and Conferencing Common Information for  Multiple-access Channels }}
\author{\IEEEauthorblockN{Mohsen Heidari, S.  Sandeep Pradhan
\thanks{This work was presented in part at IEEE International Symposium on Information Theory (ISIT), July 2016  and July 2017. }}\\
\IEEEauthorblockA{Department of Electrical Engineering and Computer Science,\\
University of Michigan, Ann Arbor, MI 48109, USA.\\
Email: \tt\small \href{mailto:mohsenhd@umich.edu}{mohsenhd@umich.edu}, \href{mailto:pradhanv@umich.edu}{pradhanv@umich.edu}}\\ \today{}}

%
%

%
%




\maketitle
\begin{abstract}
In this work, we study two problems: three-user \ac{MAC} with correlated sources, and \ac{MAC-FB} with independent messages.  
For the first problem, we identify a structure in the joint probability distribution of discrete memoryless sources, and define a new common information called ``conferencing common information". We develop a multi-user joint-source channel coding methodology based on structured mappings to encode this common information efficiently and to transmit it over a \ac{MAC}.  
We derive a new set of sufficient conditions for this coding strategy using single-letter information quantities for arbitrary sources and channel distributions. Next, we make a fundamental connection between this problem and the problem of communication of independent messages over three-user \ac{MAC-FB}. In the latter problem,
although the messages are independent to begin with,
they become progressively correlated given the channel output feedback.
Subsequent communication can be modeled as transmission of 
correlated sources over \ac{MAC}.
Exploiting this connection,  we develop a new coding scheme for the problem.  We characterize its performance using single-letter information quantities, and derive an inner bound to the capacity region. 
For both problems, we provide a set of examples 
where these rate regions are shown to be optimal. Moreover, 
we analytically prove that this performance is not achievable using random unstructured random mappings/codes.


\end{abstract}


\section{Introduction}
Many coding strategies for processing/transmitting sources of information in a distributed fashion harness structures in the statistical description of the sources.  Common information/randomness can be viewed as an example of such a  structure. Efforts in finding a measure of common information among distributed sources led to several definitions \cite{Gacs_Korner_Comm_info,Witsenhausen_comm_info,Wyner_comm_info,ElGamal_comm_info}.  A noteworthy definition of common information is due to G\'acs-K\"orner  \cite{Gacs_Korner_Comm_info} and Witsenhausen \cite{Witsenhausen_comm_info}, which is an information-theoretic measure of the amount of common randomness that can be extracted from two sources. 
\ac{GKW} \textit{common part} between two correlated memoryless sources $(S_1,S_2)$ is defined as a random variable $W$ with the largest entropy, for which there exist functions $f,g$ such that $W=f(S_1)=g(S_2)$ with probability one. 
The random variable $f(S_1)$ (or equivalently $g(S_2)$) represents the ``common randomness" generated from the sources, and the functions $(f, g)$ represent the extraction process applied on the sources. 

\ac{GKW} common part has been found to useful in many problems such as transmission of distributed sources over channels \cite{Gray_Wyner_1974,Wagner_comm_dist_scr,CES,Liu_interference_scr} and distributed key generation \cite{Li2017}. In \ac{MAC} with correlated sources, as shown in Figure \ref{fig:intro mac corr scr diagram}, there are multiple transmitters, each observing a source, and the sources are correlated with each other. The transmitters wish to send their observations in a distributed fashion via a MAC to a central receiver. The receiver reconstructs the sources losslessly. \ac{CES} showed that joint source-channel coding outperforms separation-based coding approaches \cite{Ahlswede1973,Slepian-Wolf_MAC}.  This was done by introducing a novel transmission scheme \cite{CES}, which exploits the common information between the sources. 
In this scheme,  \ac{GKW} common part between the sources 
is first extracted distributively at the encoders. 
The encoders can effectively `fully cooperate' to send this information to the receiver,  
as it is done in \ac{ptp} joint source-channel coding problem.
The rest of the sources are transmitted using distributed unstructured  random mappings. In summary, it employs a two-stage encoding strategy. 
\ac{CES} also characterized a set of sufficient conditions, in terms of single-letter information quantities, for transmission of sources over a \ac{MAC}. The scheme is known to be suboptimal \cite{Dueck} in general. There are a set of necessary conditions developed in \cite{Lapidoth_MAC_scr} and \cite{Lapidoth_dependence_balance}. However, characterizing the optimal necessary and sufficient conditions for transmission of discrete memoryless sources over \ac{MAC} is still an open problem. 

\begin{figure}[hbtp]
\centering
\includegraphics[scale=1]{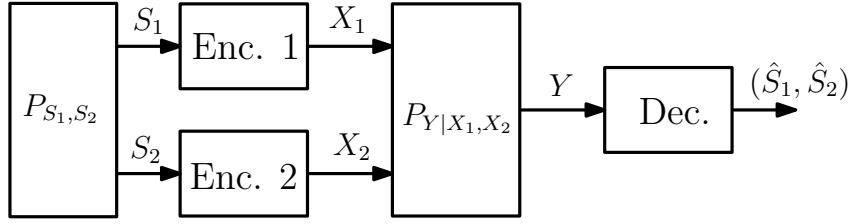}
\caption{A schematic of a two-user MAC with correlated sources. In this setup, the source sequences $(S_1^n, S_2^n)$ are observed by the corresponding encoders. The encoders produce $(X_1^n,X_2^n)$ which are channel input sequences. Upon observing the channel output $Y^n$, the decoder produces an estimate for the sources. 
}
\label{fig:intro mac corr scr diagram}
\end{figure}

Another fundamental problem in which common information plays a key role is communication of \emph{independent} messages over discrete memoryless \ac{MAC-FB}. In a \ac{MAC-FB} setup (see Figure \ref{fig:MAC-FB two-user}), after each channel use, the output of the channel is received at each transmitter noiselessly. This problem has been studied extensively in the literature \cite{Gaarder-Wolf,Cover-Leung,Kramer-thesis,Ramji-Sandeep-FB,Willems-FB,Willems_partialFB,Schalkwijk-Kailath,Ozarow}. Gaarder and Wolf \cite{Gaarder-Wolf} showed that feedback can expand the capacity region of discrete memoryless MAC. 
\ac{CL} \cite{Cover-Leung} studied two-user \ac{MAC-FB}, developed a coding strategy using unstructured random codes, and characterized an achievable rate region in terms of single-letter information quantities. Later, it was shown by Willems \cite{Willems-FB} that the CL scheme achieves the feedback capacity for a class of \ac{MAC-FB}. However, this is not the case for general MAC-FB \cite{Ozarow}. There are several improvements over CL achievable region, namely \cite{Lapidoth} and \cite{Ramji-Sandeep-FB}. A multi-letter characterization of the feedback-capacity of MAC-FB is given by Kramer \cite{Kramer-thesis}. However, the characterization is not computable, since it is an infinite-letter characterization. Finding a computable characterization of the capacity region remains an open problem.

\begin{figure}
    \centering
    \includegraphics[scale=0.9]{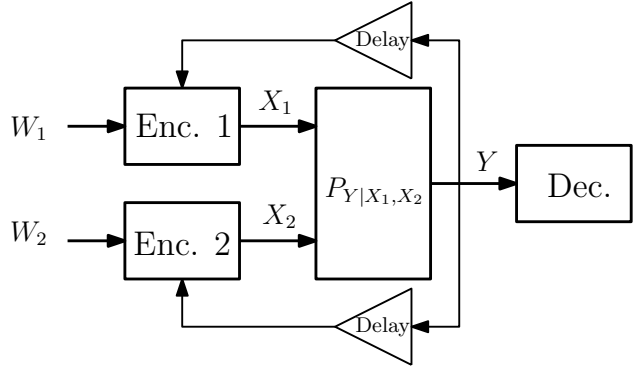}
    \caption{A schematic of a two-user MAC with feedback setup. The output of the channel is available, with one unit of delay, to the transmitters.}
    \label{fig:MAC-FB two-user}
\end{figure}

The main idea behind CL coding scheme is explained in the following. The scheme operates in two stages. In stage one, the transmitters send the messages with rates that lie outside the no-feedback capacity region (i.e. higher rates than what is achievable without feedback).  The transmission rates are taken such that each user can decode the other user's message using feedback. In this stage, the receiver is unable to decode the messages reliably; however, is able to form a list of ``highly likely" pairs of messages. The transmitters can also recreate this list. In the second stage, the encoders fully cooperate to send the index of the correct message-pair in the list, and help the receiver decode it. 

There is a connection between CES scheme for transmission of correlated sources over MAC and CL scheme for communications over MAC-FB. In a MAC-FB setup, after multiple uses of the channel, conditioned on feedback, the messages become statistically correlated. As explained above, at the end of the first stage in CL scheme, the messages are decoded at the transmitters. Hence, the decoded messages can be viewed as a GKW common part available at the two transmitters after the first stage. This common part is used in the second stage to resolve the uncertainty of the receiver. In connection with CES scheme, the common part is transmitted using identical random unstructured codebooks. 

In this work, we study three-user MAC with correlated sources, and  three-user MAC-FB with independent messages. Motivated by the notion of common information and its imperative role in these problems,  we start by identifying common information among a triplet of sources (say $S_1, S_2, S_3$). One can extend \ac{GKW} common part to define a (mutual) common part for $(S_1,S_2,S_3)$ in a straightforward way. In addition, one can define the pairwise GKW common parts between any pair $(S_i,S_j)$ as a part of the common information. The mutual common part together with the pairwise common parts characterize a vector of four components of common information which we refer to as \textit{univariate common parts}. 

We make the following contributions in this work. We, first, identify a new additional structure in the joint probability distribution of the sources, called ``conferencing common part''. This common part can be viewed as the GKW common part between a source (say $S_1$) and a pair of sources (say $S_2, S_3$). More explicitly, it is defined as the random variable $T$ with the largest entropy for which there exist a function $f(\cdot)$ and a bivariate function $g(\cdot, \cdot)$ such that $T=f(S_1)=g(S_2,S_3)$ with probability one.  Therefore, for the triplet $(S_1,S_2,S_3)$, there are three conferencing common parts, \add{one between each source and the other pair}. We also refer to these as bivariate common parts. Hence, in total, we identify the common parts among a triplet of the sources as a vector of seven components, including four univariate and three conferencing (bivariate) common parts.

Next, we develop a new coding strategy to exploit a particular form of the conferencing common parts among the sources, one given by additive functions. Efficient encoding of conferencing common parts is a more challenging task as compared to the univariate ones --- which is done using identical random unstructured mappings/codebooks.  This is because conferencing common parts are not available at any one transmitter--- rather a conference among a subset of the users is needed to extract these common parts. We develop a multiuser joint-source channel coding methodology  based on structured mappings to encode these common parts efficiently to be transmitted over a MAC.  

In particular, we design coding strategies based on random structured mappings for three-user MAC with correlated sources and MAC-FB. 
For the former problem, our  coding strategy exploits the univariate and the conferencing common information among the sources. We derive a new set of sufficient conditions for this coding strategy using single-letter information quantities for arbitrary sources and channel distributions.   For the latter problem, 
based on our notion for common information, we develop a new coding scheme for communications over three-user MAC-FB with independent messages. We characterize its performance using single-letter information quantities and derive an inner bound to the capacity region. 
For both problems we provide a set of examples, 
where these rate regions are shown to be optimal. Moreover, 
we analytically prove that this performance is not achievable using random unstructured mappings/codes. The main results of this paper are given in Proposition \ref{pro:linear cods better than CES} and Theorem \ref{them: achievable-rate-for -proposed scheme}-\ref{thm:MAC-FB structured}.

\noindent \textbf{Prior works on structured codes for multiuser problems:} Structured codes have been used in many problems involving either source coding or channel coding. For example, they have been used in  distributed source coding \cite{Dinesh_dist_source_coding,Ahlswede-Han,Han_Kobayashi_DSC,Han1987}, computation over MAC \cite{Nazer_Gasper_Comp_MAC,Arun_comp_over_MAC_ISIT13,Zhan2013,Appuswamy2013}, MAC with side information  \cite{Philosof2011,Philosof-Zamir,Arun_MAC_states_IT,Ahlswede-Han,ISIT17_MAC_States}, interference channels  \cite{Vishwanath_Jafar_Shamai_2008,hong_caire,Bresler2010a,Niesen2013,Jafarian2012,Ordentlich2012},  and broadcast channels  \cite{Arun_BC_18}.

\noindent\textbf{Notations:}
In this paper, random variables are denoted using capital letters such as $X,Y$, and their realizations are shown using lower case letters such as $x,y$, respectively.  Vectors are shown using lowercase bold letters such as $\mathbf{x}, \mathbf{y}$.  Calligraphic letters are used to denote sets such as $\mathcal{X}, \mathcal{Y}$. For any set $\mathcal{A}$, let $S_{\mathcal{A}}=\{S_a\}_{a\in \mathcal{A}}$. If $\mathcal{A}=\emptyset$, then $S_{\mathcal{A}}=\emptyset$.
  As a shorthand, we sometimes denote a triple $(s_1,s_2,s_3)$ by $\underline{s}$. We also denote a triple of sequences $(\mathbf{s}_1,\mathbf{s}_2,\mathbf{s}_3)$ by $\underline{\mathbf{s}}$. \add{Binary entropy function is denoted by $h_b(\cdot)$.}
By $\FF_q$, we denote the field of integers modulo-$q$, where $q$ is a prime number.  \add{Modulo-$q$ addition is denoted by $\oplus_q$, and, when it is clear from the context, the subscript $q$ is removed.}
For any mapping $\Phi: \mathcal{A} \mapsto \mathcal{B}$ and any integer $n$, define the mapping $\Phi^n: \mathcal{A}^n \mapsto \mathcal{B}^n$ such that $\Phi^n(a^n)\deq (\Phi(a_1), \Phi(a_2), ..., \Phi(a_n))$ for all $a^n\in \mathcal{A}^n$. 
Given a probability distribution $P_X$ on a finite alphabet $\mathcal{X}$, let $A_{\epsilon}^{(n)}(X)$ denote the set of strongly $\epsilon$-typical sequences of length $n$. We follow the definition of typical sequences as given in \cite{Csiszar,ElGamal-book}.

The rest of the paper is organized as follows: Section \ref{sec:Prelim} contains problem formulation and known results for MAC with correlated sources. We present our contributions for this problem in  Section \ref{sec:MAC_scr structured}. Similarly, we present the problem formulation and known results for MAC-FB in Section \ref{sec:MAC-FB prel}, and provide our contributions for this problem in \ref{sec:MAC-FB structured}. Lastly, Section \ref{sec: conclusion} concludes the paper.

\section{Transmission of Sources Over MAC: Preliminaries}\label{sec:Prelim}


\subsection{Problem Formulation}

As depicted in Figure \ref{fig:intro mac corr scr diagram}, the problem of MAC with correlated sources consists of multiple transmitters, each observing a source sequence statistically correlated to others. The source sequences are sent by the encoders  via a MAC to a central decoder. The objective of the receiver is to reconstruct the source sequences losslessly. It is assumed that the channel is a discrete memoryless MAC and the source sequences are discrete and  generated IID according to a known joint PMF. In what follows, we formulate this problem more precisely. 
%


\begin{definition}\label{def:DMS MAC}
A discrete memoryless MAC with $3$ users is defined by input alphabet $\mathcal{X}_1 \times \mathcal{X}_2 \times \mathcal{X}_3$, output alphabet $\mathcal{Y}$, and a transition probability matrix $P_{Y|X_1,X_2, X_3}$. The input and output alphabets are assumed to be finite sets.  The MAC is denoted by the triple $(\underline{\mathcal{X}}, \mathcal{Y}, P_{Y|\underline{X}}).$
\end{definition}
We assume that the channel is memoryless, stationary and used without feedback, and, hence, the transition probability of the $n$-length channel output vector given the $n$-length channel input vectors is given by 
\begin{equation*}
\prod_{i=1}^n P_{Y|X_1X_2X_3}(y_i|x_{1i}, x_{2i}, x_{3i}),
\end{equation*}
for all $\underline{\bfx}\in \underline{\mathcal{X}}^n$ and $\bfy\in \mathcal{Y}^n$.

%
%

\begin{definition} 
A discrete memoreless stationary source $(S_1,S_2,S_3)$ is defined by alphabet $\mathcal{S}_1 \times \mathcal{S}_2\times \mathcal{S}_3$ and a distribution $P_{S_1, S_2,S_3}$. The source is denoted by the pair $(\underline{S},  P_{\underline{S}})$
\end{definition}
The distribution of $n$-length source sequences is given by 
\begin{equation*}
\prod_{i=1}^n P_{S_1S_2S_3}(s_{1i}, s_{2i},s_{3i}),
\end{equation*}
for all $\underline{\mathbf{s}}\in \underline{\mathcal{S}}^n$.

In this paper, the bandwidth expansion factor is assumed to be unity, i.e., the channel is used $n$ times for transmission of $n$ samples of the sources. 
\begin{definition}
A coding scheme (without bandwidth expansion) with parameter $n$ for transmission of a source $(\underline{S},  P_{\underline{S}})$ over a MAC $(\underline{\mathcal{X}}, \mathcal{Y}, P_{Y|\underline{X}})$ consists of encoding functions $e_i:\mathcal{S}_i^n \rightarrow \mathcal{X}_i^n, i=1,2,3$, and a decoding function $d:\mathcal{Y}^n \rightarrow \mathcal{S}^n_1 \times \mathcal{S}^n_2 \times \mathcal{S}^n_3$. The parameter $n$ is called blocklength. 
\end{definition}

\begin{definition}
A source  $(\underline{S},  P_{\underline{S}})$ is said to be transmissible over a MAC $(\underline{\mathcal{X}}, \mathcal{Y}, P_{Y|\underline{X}})$, if for all $\epsilon>0$ and for all sufficiently large $n$, there exists a coding scheme with parameter $n$ such that
\[
\sum_{\underline{\mathbf{s}} \in \underline{\mathcal{S}}^n} P_{\underline{S}}^n(\underline{\mathbf{s}}) 
\sum_{\mathbf{y}:d(\mathbf{y}) \neq \underline{\mathbf{s}}}
P_{Y|\underline{X}}^n \Big(\mathbf{y}~|~ \mathbf{x}_i=e_i(\mathbf{s}_i),~ i=1,2,3\Big)\leq \epsilon.
\]
\end{definition}

%
%
%
%
%
%

\subsection{\ac{CES} Sufficient Conditions: Two-User Case}
The two-user version of MAC with correlated sources was investigated in \cite{CES} and \ac{CES} scheme was proposed based on unstructured random mappings. Further, a sufficient condition for transmissibility is derived in terms of single-letter information quantities. 
In this scheme the notion of \ac{GKW} \textit{common part} plays an important role. The formal definition of such common part and the \ac{CES} sufficient conditions are given below.

\begin{definition}[GKW Common part]\label{def: comm part}
A \textit{common part} between random variables $(S_1,S_2)$ is a random variable $W_{12}$ with the largest entropy for which there exist functions $f,g$ such that $W_{12}=f(S_1)$, and $W_{12}=g(S_2)$ with probability one. In this work, such a random variable $W_{12}$ is called a univariate common part.  
\end{definition}



\begin{fact}[\ac{CES} sufficient conditions]
A source $(\mathcal{S}_1, \mathcal{S}_2, P_{S_1S_2})$ is transmissible over a MAC $(\mathcal{X}_1, \mathcal{X}_2, \mathcal{Y}, P_{Y|X_1X_2} )$, if there exist distributions $P_{U_{12}}, P_{X_1|S_1,U_{12}}$ and $P_{X_2|S_2,U_{12}}$  such that,
\begin{align*}
H(S_1|S_2) & \leq I(X_1;Y|X_2, S_2,   U_{12}),\\
H(S_2|S_1) & \leq I(X_2;Y|X_1, S_1, U_{12}),\\
H(S_1,S_2|W_{12}) & \leq I(X_1X_2;Y|W_{12}, U_{12}),\\
H(S_1, S_2)  & \leq I(X_1 X_2;Y),
\end{align*} 
where, $U_{12}$ is an auxiliary random variable with a finite alphabet $\mathcal{U}_{12}$, and the joint distribution of all the random variables factors as  $$P_{S_1,S_2,U_{12},X_1,X_2, Y}=P_{S_1,S_2}P_{U_{12}} P_{X_1|S_1,U_{12}}P_{X_2|S_2,U_{12}} P_{Y|X_1,X_2}.$$
\end{fact}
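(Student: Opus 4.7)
The plan is to use a two-stage joint source-channel coding scheme that first transmits the \ac{GKW} common part $W_{12}$ using a cooperation codebook indexed by $W_{12}^n$, and then superposes the remaining source-dependent signals on top. I would first argue that since $W_{12}=f(S_1)=g(S_2)$ with probability one, both encoders can independently extract the identical sequence $W_{12}^n=f^n(S_1^n)=g^n(S_2^n)$ from their respective observations. This is the feature that allows the encoders to ``fully cooperate'' on transmitting $W_{12}^n$.

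For the codebook construction, I would generate a cloud-center codebook $\{\bfu_{12}(\bfw_{12})\}$ indexed by typical $W_{12}^n$ sequences, with entries drawn i.i.d.\ from $P_{U_{12}}$. Then, for each encoder $i\in\{1,2\}$, upon observing $\bfs_i$, compute $\bfw_{12}$ from $\bfs_i$, look up $\bfu_{12}(\bfw_{12})$, and generate the channel input $\bfx_i$ symbol-by-symbol using the conditional distribution $P_{X_i|S_i,U_{12}}$ applied to $(\bfs_i,\bfu_{12})$. The decoder uses joint typicality: declare $(\hat{\bfs}_1,\hat{\bfs}_2)$ such that there is a unique pair consistent with a typical $\bfw_{12}=f^n(\hat{\bfs}_1)=g^n(\hat{\bfs}_2)$ and with $(\hat{\bfs}_1,\hat{\bfs}_2,\bfw_{12},\bfu_{12}(\bfw_{12}),\bfx_1(\hat{\bfs}_1,\bfu_{12}),\bfx_2(\hat{\bfs}_2,\bfu_{12}),\bfy)$ jointly $\epsilon$-typical under the posited joint distribution.

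The core of the proof is the error probability analysis, which splits into four disjoint error events that match the four inequalities in the statement. Conditioned on the correct codeword being jointly typical with $\bfy$ (which holds with high probability by the law of large numbers and conditional typicality), a decoding error falls into one of: (i) $\hat{\bfs}_1\ne\bfs_1$ but $\hat{\bfs}_2=\bfs_2$, which by the packing lemma requires $H(S_1|S_2)<I(X_1;Y|X_2,S_2,U_{12})$ since the number of competing $\hat{\bfs}_1$ conditionally typical with $\bfs_2$ is roughly $2^{nH(S_1|S_2)}$ and each such candidate passes the test with probability $2^{-nI(X_1;Y|X_2,S_2,U_{12})}$; (ii) symmetric for user 2, giving the second bound; (iii) both $\hat{\bfs}_1\ne\bfs_1$ and $\hat{\bfs}_2\ne\bfs_2$ but the extracted common part agrees ($\hat{\bfw}_{12}=\bfw_{12}$), which gives $H(S_1,S_2|W_{12})<I(X_1X_2;Y|W_{12},U_{12})$; and (iv) even the common part disagrees, giving the final bound $H(S_1,S_2)<I(X_1X_2;Y)$. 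In events (iii) and (iv), an additional union bound over the cooperation codebook entries is needed.

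The main obstacle, as usual in joint source-channel schemes, is being careful about the dependencies: the channel inputs $\bfx_i$ are correlated with the sources $\bfs_i$, so standard packing-lemma statements for messages must be adapted to count competing source sequences of the appropriate joint type, and one must verify that, under the ensemble, a wrong candidate $(\hat{\bfs}_1,\hat{\bfs}_2)$ with the required marginal consistency is approximately independent of the true $(\bfx_1,\bfx_2,\bfy)$ conditioned on the shared piece (either $U_{12}$, or $(W_{12},U_{12})$). Once the typicality/independence structure is laid out along the lines of \cite{CES,ElGamal-book}, choosing $\epsilon$ and $n$ appropriately makes each of the four error probabilities vanish, concluding transmissibility.
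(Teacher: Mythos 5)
Your proposal is correct and follows essentially the same route as the paper: the paper states this result as a known Fact from \cite{CES} without reproving it, and the scheme it outlines for the three-user extension (Proposition \ref{prep: CES_three_user}) --- a common-part-indexed cloud-center codebook drawn i.i.d.\ from $P_{U_{12}}$, superposed channel inputs drawn from $P_{X_i|S_i,U_{12}}$, joint-typicality decoding, and an error decomposition according to which source estimates and which common part disagree --- is exactly your construction. Your four error events match the four inequalities in the same way, and your closing caveat (that the naive count-times-probability argument must be replaced by a joint-type accounting of competing source sequences, since the candidates must themselves be typical with the channel output) is precisely the refinement the standard \ac{CES}-style analysis, e.g.\ the one in Appendix \ref{sec: proof of them lin_ces}, carries out.
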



\subsection{A Sufficient Condition Based on Unstructured Mappings: Three-User Case}
One can extend \ac{CES} sufficient conditions for three-user case based on unstructured random codes. For that, first we need to generalize the definition of GKW common part for more than two random variables. 

\begin{definition} 
The common part among random variables $(S_1, S_2, S_3)$ is the random variable $W_{123}$ with the largest entropy for which there exist functions $f_i, i=1,2,3$ such that $W_{123}=f_i(S_i)$ holds with probability one. 
\end{definition}

It is worth noting that for the triple $(S_1,S_2,S_3)$ there are four common parts namely $(W_{12}, W_{13}, W_{23}, W_{123})$. For the case of multiple sources, say $(S_1,S_2,S_3)$, a similar idea as in CES can be used to encode the univariate common parts. In what follows we provide an extension of CES scheme to three-use case based on unstructured random mappings. 

\begin{definition} 
Given a source  $(\underline{\mathcal{S}}, P_{\underline{S}})$ and a MAC $(\underline{\mathcal{X}}, \mathcal{Y}, P_{Y|X_1X_2X_3} )$, let $\mathscr{P}_{CES}$ be the set of conditional distributions $P_{\underline{U}, \underline{X}| \underline{S}}$ defined on $\underline{\mathcal{U}}\times   \underline{\mathcal{X}}$ which factors as
\begin{align}\label{eq: CES scr joint dist}
P_{U_{123}}\Bigg[\prod_{b\in \{12,13,23\}}P_{U_b|W_{b}U_{123}}\Bigg]~ \Bigg[  \prod_{\substack{i, j, k \in \{1,2,3\}\\j<k, i\neq j, i\neq k}}   P_{X_i|S_iU_{123}U_{ij}U_{ik}}\Bigg],
\end{align}
where, with a slight abuse of notation,  $\underline{U}\deq (U_{123}, U_{12}, U_{13}, U_{23})$ and its alphabet is a finite set denoted by $\underline{\mathcal{U}}$.
\end{definition}

\begin{proposition}\label{prep: CES_three_user}
A source $(\underline{\mathcal{S}}, P_{S_1S_2S_3})$ is transmissible over a $\MAC$, if there exists a conditional distribution $P_{\underline{U}, \underline{X}| \underline{S}}\in \mathscr{P}_{CES}$ such that for any distinct $i,j,k\in \{1,2,3\}$ and any $\mathcal{B}\subseteq \{12 ,13,23\}$ the following inequalities hold
\begin{align*}
H(S_i|S_j S_k )&\leq I(X_i;Y|S_jS_kX_jX_k  U_{123} U_{12}U_{13} U_{23}),\\
H(S_iS_j|S_k  )&\leq I(X_iX_j;Y|S_k   U_{123}U_{ik}U_{jk}X_k),\\
H(S_iS_j|S_k W_{ij})&\leq I(X_iX_j;Y|S_k W_{ij} U_{123} U_{12}U_{13} U_{23}X_k),\\
H(S_1S_2S_3|W_{123} W_{\mathcal{B}})&\leq I(X_1X_2X_3;Y|W_{123}W_{\mathcal{B}}U_{123}U_{\mathcal{B}}),\\
H(S_1S_2S_3)&\leq I(X_1X_2X_3;Y),
\end{align*}
where we have identified $U_{ij}=U_{ji}$ and $W_{ij}=W_{ji}$. 
\end{proposition}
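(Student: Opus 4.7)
The plan is to establish the claim by a random coding argument that generalizes the two-user Cover--El Gamal--Salehi construction to three users, using a layered superposition structure dictated by the factorization in \eqref{eq: CES scr joint dist}. Fix a distribution $P_{\underline{U},\underline{X}\mid\underline{S}}\in\mathscr{P}_{CES}$ and a small $\epsilon>0$. Because $W_{123}$ and each $W_{ij}$ are deterministic functions of the sources, each encoder $i$ can compute $W_{123}$ from $S_i$ and $W_{ij}, W_{ik}$ from $S_i$. The codebook is then generated in four nested layers: (i) one $U_{123}^n$ codeword, drawn i.i.d.\ according to $P_{U_{123}}$, indexed by the (single, common) value of $W_{123}^n$; (ii) for each pair $b\in\{12,13,23\}$ and each typical sequence $\mathbf{w}_b$, a $U_b^n$ codeword drawn i.i.d.\ according to $P_{U_b\mid W_b U_{123}}$ conditioned on the chosen $U_{123}^n$; (iii) for each $i$ and each $\mathbf{s}_i$, an $X_i^n$ codeword drawn i.i.d.\ according to $P_{X_i\mid S_i U_{123} U_{ij} U_{ik}}$ conditioned on the previously chosen auxiliary codewords that user $i$ can locally reproduce. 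The encoder for user $i$ simply computes $W_{123}^n$, $W_{ij}^n$, $W_{ik}^n$ from $\mathbf{s}_i$, reads off the corresponding auxiliary codewords from the shared codebook, and transmits $\mathbf{x}_i=X_i^n(\mathbf{s}_i)$.

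The decoder looks for the unique triple $(\hat{\mathbf{s}}_1,\hat{\mathbf{s}}_2,\hat{\mathbf{s}}_3)$ that is jointly $\epsilon$-typical with $\mathbf{y}$ together with the induced $U_{123}^n$ and $U_b^n$ auxiliary codewords and the corresponding $X_i^n$ codewords under the joint distribution induced by \eqref{eq: CES scr joint dist} and the channel. By the usual source and channel typicality arguments, the probability that the true triple fails this test vanishes as $n\to\infty$. I would then decompose the error event according to which components of the source triple differ from the truth, which naturally parametrizes the candidate errors by a nonempty subset $\mathcal{E}\subseteq\{1,2,3\}$ of users whose $\mathbf{s}_i$ is wrong, and, correspondingly, by the subset $\mathcal{B}\subseteq\{12,13,23\}$ of pair-common parts $W_b$ that must also change (those $b$ containing at least one index in $\mathcal{E}$) and whether $W_{123}$ itself must change (when $\mathcal{E}=\{1,2,3\}$ and the common part actually takes a different value).

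For each such error pattern, a standard packing argument yields a rate constraint of the form \textit{sum of conditional source entropies $\le$ mutual information between the changing channel inputs and $Y$, conditioned on everything that did not change}. Writing these out case by case produces exactly the five families of inequalities in the proposition: single-user errors ($|\mathcal{E}|=1$) give the first bound on $H(S_i\mid S_j S_k)$; two-user errors with a matching pair common part give the bound on $H(S_iS_j\mid S_k)$, and, when the common part $W_{ij}$ is preserved, the bound on $H(S_iS_j\mid S_k W_{ij})$; three-user errors that keep $W_{123}$ and some subset $W_{\mathcal{B}}$ of the pair common parts fixed give the bound on $H(S_1S_2S_3\mid W_{123}W_{\mathcal{B}})$; and the fully unrestricted error gives $H(S_1S_2S_3)\le I(X_1X_2X_3;Y)$. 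In each case the conditioning on the ``frozen'' $U_{123}, U_b$'s together with the frozen $S_j$'s and $X_j$'s arises because these quantities are common knowledge among the relevant codebook layers when the corresponding error is evaluated, exactly matching the conditioning in the statement.

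The main obstacle is the bookkeeping for the intermediate cases, particularly the two-user error with and without a change in $W_{ij}$ and the three-user error parametrized by $\mathcal{B}$: one has to verify that the number of candidate tuples consistent with a given error pattern has the exponent predicted by the conditional source entropy, and that the conditional joint typicality with $Y$ yields precisely the mutual-information exponent on the right-hand side. This follows from the standard two-step decomposition (first controlling the probability that incorrect source sequences are jointly typical with each other and with the fixed auxiliary codewords, then controlling joint typicality with $Y$ via the induced Markov structure of $P_{\underline{U},\underline{X},Y}$), but each of the $2^3-1$ nonempty error subsets has to be tracked to confirm that the resulting constraint lines up with one of the five families listed. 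Choosing the codebook size and the typicality slack $\epsilon$ arbitrarily small, a standard expurgation then yields a deterministic coding scheme whose average distortion is below the required threshold, proving transmissibility.
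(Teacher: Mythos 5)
Your proposal is correct and follows essentially the same route as the paper's own (outline) proof: the identical layered codebook construction ($U_{123}$ indexed by $\mathbf{w}_{123}$, then $U_b$ superimposed on it indexed by $\mathbf{w}_b$, then $X_i$ superimposed on the auxiliaries and $\mathbf{s}_i$), the same encoding, joint-typicality decoding of the unique source triple, and an error analysis split by which sources and which common parts change — which the paper itself delegates to the ``standard argument as in'' the CES reference. Your sketch in fact spells out the error-pattern bookkeeping in more detail than the paper does.
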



The three-user extension of CES involves three layers of coding. In the first layer $W_{123}$ is encoded at each transmitter to $U_{123}$. Next, based on the output of the first layer, $W_{ij}$'s are encoded to $U_{ij}$.  Finally, based on the output of the first and the second layers, $S_1,S_2$ and $S_3$ are encoded. Figure \ref{fig: CES 3 RVs} shows the random variables involved in the extension of CES.

\begin{figure}[hbtp]
\centering
\includegraphics[scale=0.7]{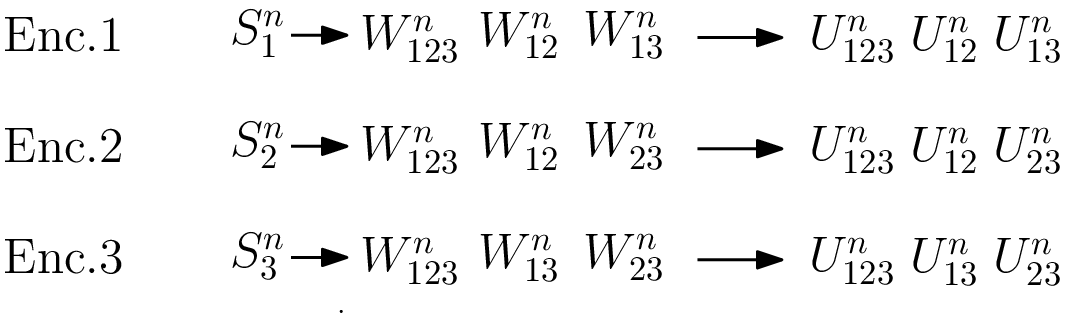}
\caption{The random variables involved in the three-user extension of CES.}
\label{fig: CES 3 RVs}
\end{figure}


\begin{proof}[Outline of the proof]

Fix a conditional distribution $P_{\underline{U}, \underline{X}| \underline{S}}\in \mathscr{P}_{CES}$. Let the sequence $\mathbf{s}_i\in \mathcal{S}_i^n$ be a realization of the $i$th source, where $i=1,2,3$.

\noindent {\textbf{Codebook Generation:}}
The construction of the codebooks at each transmitter is given below:
\begin{enumerate}
\item For each realization $\mathbf{w}_{123}$ of the mutual common part, a sequence $\bfU_{123}$ is generated randomly according to $\prod_{l\in [1,n]} P_{U_{123}}$. Such a sequence is indexed by $\bfU_{123}(\mathbf{w}_{123})$. 

\item Given  $b \in\{12,13,23\}$, and for each $\mathbf{u}_{123}$ and $\mathbf{w}_{b}$, a sequence $\bfU_b$ is generated randomly according to $\prod_{l\in [1,n]} P_{U_b|W_b U_{123}}$. Such a sequence is indexed by $\bfU_{b}(\mathbf{w}_{b}, \mathbf{u}_{123})$.

\item Given distinct elements  $i,j,k \in \{1,2,3\}$, any realization $\mathbf{s}_i$ of the source, the common parts  $(\mathbf{w}_{123}, \mathbf{w}_{ij}, \mathbf{w}_{ik})$, and the corresponding sequences $\bfU_{123}(\mathbf{w}_{123}),\allowbreak \bfU_{ij}(\mathbf{w}_{ij}, \bfU_{123})$ and $~\bfU_{ik}(\mathbf{w}_{ik}, \bfU_{123})$, a sequence $\bfX_i$ is generated randomly according to  $\prod_{l\in [1,n]} P_{X_i|S_i U_{123} U_{ij} U_{ik} }$. For shorthand, such  a sequence is denoted by $\bfX_i(\mathbf{s}_i, \bfU_{123},\allowbreak \bfU_{ij},\bfU_{ik})$.

\end{enumerate}

\noindent {\textbf{Encoding:}}
Upon observing a realization $\mathbf{s}_i$ of the $i$th source, transmitter $i$ first calculates the common part sequences $(\mathbf{w}_{123}, \mathbf{w}_{ij}, \mathbf{w}_{ik})$, where $i,j,k \in \{1,2,3\}$ are distinct. Then, the transmitter finds the corresponding sequences $$(\bfU_{123}(\mathbf{w}_{123}), \bfU_{ij}(\mathbf{w}_{ij}, \bfU_{123}), \bfU_{ik}(\mathbf{w}_{ik}, \bfU_{123}))$$ and sends $\bfX_i(\mathbf{s}_i, \bfU_{123},\bfU_{ij},\bfU_{ik})$ over the channel.

\noindent {\textbf{Decoding:}}
Upon receiving the channel output sequence $\mathbf{y}$, the decoder finds a unique triple $(\tilde{\mathbf{s}}_1,\tilde{\mathbf{s}}_2,\tilde{\mathbf{s}}_3)$ such that 
\begin{align*}
(\underline{\tilde{\mathbf{s}}}, \tilde{\bfU}_{123}, \tilde{\bfU}_{12}, \tilde{\bfU}_{13}, \tilde{\bfU}_{23}, \tilde{\bfX}_1,\tilde{\bfX}_2,\tilde{\bfX}_3,\mathbf{y})\in A_{\epsilon}^{(n)}(\underline{S},U_{123},U_{12},U_{13},U_{23},X_1,X_2,X_3,Y),
\end{align*}
where $ \tilde{\bfU}_{123}=\bfu_{123}(\tilde{\mathbf{w}}_{123}),\tilde{\bfU}_{ij}=\bfu_{ij}(\tilde{\mathbf{w}}_{ij}, \tilde{\bfU}_{123})$, $\tilde{\bfX}_i=\bfX_i(\tilde{\mathbf{s}}_i, \tilde{\bfU}_{123},\tilde{\bfU}_{ij},\tilde{\bfU}_{ik})$, and $i,j,k \in \{1,2,3\}$ are distinct. Note that $(\tilde{\mathbf{w}}_{123}, \tilde{\mathbf{w}}_{12},\tilde{\mathbf{w}}_{13},\tilde{\mathbf{w}}_{23})$ are the corresponding common parts sequences of $(\tilde{\mathbf{s}}_1,\tilde{\mathbf{s}}_2,\tilde{\mathbf{s}}_3)$.

A decoding error will be occurred, if no unique $(\tilde{\mathbf{s}}_1,\tilde{\mathbf{s}}_2,\tilde{\mathbf{s}}_3)$ is found. Using a standard argument as in \cite{CES}, it can be shown that the  probability of error can be made sufficiently small for large enough $n$, if the conditions in Proposition \ref{prep: CES_three_user} are satisfied.

\end{proof}

\section{Transmission of Sources Over MAC: Structured Mappings }\label{sec:MAC_scr structured}

In this section, we provide a new sufficient condition characterized using single-letter information quantities for transmissibility of the sources over MAC using structured mappings. The main results of this section are given in Proposition \ref{pro:linear cods better than CES}, Theorem \ref{them: achievable-rate-for -proposed scheme} and \ref{thm:CES is subopt}.  

\subsection{Conferencing Common Information}\label{sec:conf comm info}
The joint distribution of triple $(S_1,S_2,S_3)$ also has an additional structure which is not captured by the univariate common parts defined previously. This will be addressed by defining a new common part as follows.

\begin{definition} \label{def: bivariat comm}
The \textit{conferencing} common part of a triple of random variables $(S_1,S_2,\allowbreak S_3)$ is the triple of random variables $(T_1,T_2,T_3)$ with the largest joint entropy, for which there exist functions $f_i, g_i, i\in \{1,2,3\}$ such that $T_i = f_i(X_i)=g_i(X_j, X_k)$ hold with probability one for all distinct $i,j,k \in \{1,2,3\}$\footnote{{ Note that the conferencing common part random variables are unique upto a relabeling.}}.
\end{definition}
From definitions \ref{def: comm part} and \ref{def: bivariat comm}, the common parts among the three random variables $(S_1, S_2,S_3)$ are $(W_{12}, W_{13}, W_{23}, W_{123}, T_1,T_2,T_3)$, where $W_{ij}$ is the pairwise common part between $(S_i,S_j)$, $W_{123}$ is the mutual common part (all in the sense of Definition \ref{def: comm part} ), and $(T_1,T_2,T_3)$ are conferencing common parts (as in Definition \ref{def: bivariat comm}) among $(S_1,S_2,S_3)$.
In this work, we focus on a special class of conferencing common part which is defined as follows.

\begin{definition}\label{def: m-additive}
The additive common part of a triple of random variables $(S_1,S_2,S_3)$ is the triple of random variables $(T_1,T_2,T_3)$ with the largest entropy for which there exist a finite field $\FF_q$ and functions $f_i, i=1,2,3$ such that $T_i\in \FF_q$, $ T_1\oplus_q T_2 \oplus_q T_3=0$.
%
%
\end{definition}
%

The following example provides a triplet of binary sources with additive common part where the associated finite field is $\FF_2$.

\begin{example}\label{ex: bivariate}
Let $S_1,S_2$ and $S_3$ be three Bernoulli random variables. Suppose $S_1$ and $S_2$ are independent, with biases $p_1$ and $p_2$, respectively, and $S_3=S_1 \oplus_2 S_2$ with probability one. It is not difficult to show that univariate common parts are trivial, i.e., $(W_{12},W_{13},W_{23}, W_{123})$ is a constant.  As for the conferencing common parts, set $T_i=S_i, i=1,2,3$. Then $(T_1,T_2,T_3)$ satisfies the conditions in Definition \ref{def: m-additive} for $q=2$. Therefore,  $(T_1,T_2,T_3)$  is the additive common part of $(S_1,S_2,S_3)$. 
 \end{example}


Unlike univariate common information, conferencing common parts are not available at any terminal. This is due to the fact that conferencing common parts are bivariate functions of the sources. As a result, to exploit conferencing common information, a new coding technique needs to be developed. For this purpose, we use affine maps. The key concepts are described in the following.

We construct three affine maps for encoding of such common parts. Let $\mathbf{G}$ be a $n$ by $n$ matrix with elements in $\FF_q$. We, also, select vectors $\mathbf{b}_1, \mathbf{b}_2, \mathbf{b}_3\in \FF_q^n$ such that $\mathbf{b}_1\+\mathbf{b}_2 \oplus \mathbf{b}_3=\mathbf{0}$. The additive common parts are encoded as  $\bfV_i^n=\bfT_i^n\mathbf{G}\+\mathbf{b}_i$, for  $i=1,2,3$, and hence, the equality $\bfV^n_1\oplus \bfV^n_2 \oplus \bfV^n_3 =\mathbf{0}$ holds with probability one. One may adopt a randomized affine map to encode the additive common parts. For that, we can select the matrix $\mathbf{G}$ and the vectors $\mathbf{b}_1, \mathbf{b}_2, \mathbf{b}_3$ randomly and uniformly from the set of all matrices and vectors with elements in $\FF_q$.

\subsection{Sub-optimality of Unstructured Mappings}\label{subsec:subopt unstructures}
     In what follows, we show that applications of affine maps for transmission of additive common parts improves upon the scheme based on unstructured random mappings given in the previous section. 

\begin{example}\label{ex: additive source MAC}
Suppose $(S_1, S_2, S_3)$ are as in Example \ref{ex: bivariate}. The sources are to be transmitted via a MAC with binary inputs $\mathcal{X}_1\times \mathcal{X}_2 \times \mathcal{X}_3 $, binary outputs $\mathcal{Y}_1 \times \mathcal{Y}_2$, and a conditional probability distribution that satisfies
\begin{align}
 (Y_1,Y_2)=\begin{cases} 
      (X_1\+N_\delta, X_2\+N'_{\delta}), &\text{if}~ X_3 = X_1\+ X_2, \\
      (N_{1/2}, N'_{1/2}), &  \text{if}~ X_3 \neq X_1\+ X_2,
   \end{cases}
\end{align}
where $N_{\delta}, N'_{\delta}, N_{1/2}$ and $N'_{1/2}$ are independent Bernoulli random variables with parameter $\delta, \delta, \frac{1}{2}$, and $\frac{1}{2}$, respectively. 

As explained in Example \ref{ex: bivariate}, the univariate common parts are trivial, and the $2$-additive common parts are $T_i=S_i, i=1,2,3$. For such a setup, we use random affine maps explained above. The following lemma provides a necessary and sufficient condition for reliable transmission of $(S_1,S_2,S_3)$. The achievability is obtained using the above approach. 
\end{example} 
\begin{proposition}\label{pro:linear cods better than CES}
Consider the source given in Example \ref{ex: bivariate} with $p_1=p_2=p$. Such a source is transmissible over the MAC given in Example \ref{ex: additive source MAC}, if and only if $h_b(p)\leq 1-h_b(\delta), i=1,2 $. Moreover, the source with parameter $p=h_b^{-1}(1-h_b(\delta))$ does not satisfy the sufficient condition in Proposition \ref{prep: CES_three_user}. 
\end{proposition}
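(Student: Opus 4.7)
The proposition splits naturally into three claims: sufficiency of $h_b(p)\leq 1-h_b(\delta)$, its necessity, and the failure of the CES conditions at the boundary value $p^\circ:=h_b^{-1}(1-h_b(\delta))$.

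For sufficiency, I would instantiate the random affine scheme sketched in Section \ref{sec:conf comm info}. Since $S_3=S_1\oplus S_2$, the additive common parts over $\FF_2$ coincide with the sources themselves ($T_i=S_i$). Draw $G\in\FF_2^{n\times n}$ uniformly at random together with independent uniform $b_1,b_2\in\FF_2^n$, and set $b_3=b_1\oplus b_2$. The encoder sends $X_i^n=S_i^n G\oplus b_i$. Because $X_1^n\oplus X_2^n\oplus X_3^n=(S_1^n\oplus S_2^n\oplus S_3^n)G\oplus 0=0$ with probability one, the channel always stays in the ``good'' regime and produces $Y_j^n=X_j^n\oplus N_\delta^n$ for $j=1,2$, where $N_\delta^n$ is i.i.d.\ $\mathrm{Bern}(\delta)$. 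A typical-set linear decoder searches for the unique $\tilde{s}_j^n\in A_\epsilon^{(n)}(P_{S_j})$ such that $Y_j^n\oplus b_j\oplus \tilde{s}_j^n G\in A_\epsilon^{(n)}(P_{N_\delta})$. A standard random-coding analysis---union bounding roughly $2^{nh_b(p)}$ wrong candidates, each matched with probability roughly $2^{-n(1-h_b(\delta))}$---gives vanishing expected error probability whenever $h_b(p)<1-h_b(\delta)$, and the boundary case follows from standard capacity-achieving-code arguments. Necessity is a one-line computation: Fano plus data-processing give $2h_b(p)\leq \max_{P_{X_1X_2X_3}} I(X_1X_2X_3;Y_1Y_2)$, and since $H(Y_1Y_2)\leq 2$ and $H(Y_1Y_2|X_1X_2X_3)$ equals $2h_b(\delta)$ when $X_3=X_1\oplus X_2$ and $2$ otherwise, the right-hand side equals $2(1-h_b(\delta))$.

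The most substantive part is the failure of the CES conditions at $p^\circ$ (assume $\delta\in(0,1/2)$ so $p^\circ\in(0,1/2)$). All univariate common parts for this source are trivial, so taking $\mathcal{B}=\{12,13,23\}$ in the fourth family of inequalities of Proposition \ref{prep: CES_three_user} reduces it to $2h_b(p^\circ)\leq I(X_1X_2X_3;Y|U)$, where $U=(U_{123},U_{12},U_{13},U_{23})$. Running the capacity computation conditionally on $U$ gives $I(X;Y|U)\leq 2q(1-h_b(\delta))$ with $q=\prob(X_3=X_1\oplus X_2)$, and at the boundary this forces $q=1$. The key structural step is then: under the CES factorization, $X_1,X_2,X_3$ are conditionally independent given $(\underline{S},U)$, so the almost-sure parity relation $X_3=X_1\oplus X_2$ forces each $X_i$ to be deterministic as a function of $(S_i,U)$; evaluating the parity over the four realizations of $(S_1,S_2)$ then pins down the admissible form to $X_i=a_i(U)\oplus b(U)S_i$, with common slope $b$ and $a_3=a_1\oplus a_2$.

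Substituting this structure back: if $b(U)=0$ no information flows, while if $b(U)=1$ the two arms decompose into independent BSCs yielding $I(X;Y|U=u)=2(h_b(p^\circ\ast\delta)-h_b(\delta))$ (by independence of $S_1,S_2,N_\delta,N'_\delta$), where $\ast$ denotes binary convolution. Averaging over $U$ therefore gives $I(X;Y|U)\leq 2(h_b(p^\circ\ast\delta)-h_b(\delta))\leq 2(1-h_b(\delta))$, with equality in the required CES inequality only if $h_b(p^\circ\ast\delta)=1$, i.e., $p^\circ\ast\delta=1/2$; for $\delta\in(0,1/2)$ this forces $p^\circ=1/2$ and hence $h_b(\delta)=0$, contradicting the assumption $\delta>0$. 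The main obstacle in this plan is the structural lemma---cleanly deducing from conditional independence plus the parity constraint that each $X_i$ is a binary-affine deterministic function of $(S_i,U)$. Once that lemma is in hand, everything else reduces to the capacity and Mrs.-Gerber-flavored manipulations above.
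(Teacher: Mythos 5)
Your decomposition into three claims and your handling of the first two coincide with the paper's proof: achievability via the random affine maps $X_i^n=S_i^nG\oplus b_i$ with $b_1\oplus b_2\oplus b_3=0$ (so the channel stays in its additive state and the problem reduces to point-to-point linear joint source--channel coding over a BSC), and the converse via Fano plus data processing against the sum capacity $2(1-h_b(\delta))$; both treatments also gloss over the boundary case $h_b(p)=1-h_b(\delta)$ in the same way. Where you genuinely diverge is the third claim. The paper shows that equality in the sum-rate bound requires $X_3=X_1\oplus X_2$ almost surely together with uniform $X_1,X_2$, and then asserts that uniformity of $X_i$ under the CES factorization forces $P_{X_i|S_i}(x|s)=\tfrac12$, hence independence of $(X_1,X_2)$ from $X_3$, contradicting the parity relation. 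That assertion is false as a standalone implication: with $p\in(0,\tfrac12)$, taking $P_{X_1|S_1}(1|1)=1$ and $P_{X_1|S_1}(1|0)=\frac{1/2-p}{1-p}$ also makes $X_1$ uniform. It is only the conjunction of uniformity with the parity constraint that is contradictory, and that conjunction is exactly what your route exploits. Moreover, the structural lemma you flag as the main obstacle has a three-line proof that closes it: conditioned on any $(\underline{s},u)$ of positive probability, $X_1,X_2,X_3$ are mutually independent under the CES factorization; since $X_1\oplus X_2=X_3$ almost surely and the left side is independent of the right, $X_1\oplus X_2$ (and $X_3$) must be almost surely constant; and an almost surely constant sum of two independent binary variables forces each summand to be constant. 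Hence $X_i=f_i(S_i,U)$ deterministically, and evaluating the parity over the four positive-probability realizations of $(S_1,S_2)$ yields your affine form with common slope $b(U)$. Your finish via $I(X_1X_2X_3;Y|U)\leq 2\left(h_b(p\ast\delta)-h_b(\delta)\right)$ is then airtight and in fact stronger than what the paper extracts, since it gives a quantitative gap below $2(1-h_b(\delta))$ for the boundary source rather than merely excluding exact equality. In short: same skeleton and identical first two parts, but your key lemma for the CES-failure part is a different argument from the paper's, and it repairs a loose step in the paper's own proof.
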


\begin{proof}
The proof for the direct part follows using random affine maps. { For that, set $X_i^n=S_i^n\mathbf{G}\+\mathbf{B}_i, i=1,2,3$, where $\mathbf{G}, \mathbf{B}_1, \mathbf{B}_2, \mathbf{B}_3$ are selected randomly, and uniformly with elements from $\FF_q$ and satisfying $\mathbf{B}_1\+\mathbf{B}_2\+ \mathbf{B}_3=\mathbf{0}$.  In this case, $X_3^n=X_1^n\+X_2^n$ which implies that  $Y_1^n=X_1^n\+N_\delta^n$ and $Y_2^n=X_2^n\+N_{\delta'}^n$. Hence, from the properties of random linear maps for the point-to-point joint source-channel setting, $(S_1,S_2)$ can be decoded with arbitrary small error probability, if $h_b(p_i)\leq 1-h_b(\delta), i=1,2$. }

For the converse part, suppose $(S_1,S_2,S_3)$ are transmissible. Therefore, for any $\epsilon>0$ there exists a coding scheme with error probability at most $\epsilon$. Suppose  $(e_1,e_2,e_3)$ are the encoders  and $d$ is the decoder of such a scheme. Then, from Fano's inequality,
\begin{align*}
2h_b(p)=\frac{1}{n}H(S_1^n,S_2^n)&\leq \frac{1}{n} I(S_1^n,S_2^n;Y_1^n,Y_2^n) +2\epsilon +\frac{1}{n}h_b(\epsilon)\\
&\stackrel{(a)}{\leq}\frac{1}{n}I( X_1^n, X_2^n,X_3^n;Y_1^n,Y_2^n) +2\epsilon +\frac{1}{n}h_b(\epsilon)\\
&\stackrel{(b)}{\leq} 2-2h_b(\delta)+2\epsilon +\frac{1}{n}h_b(\epsilon),
\end{align*}
where $(a)$ follows because of the Markov chain $(S_1,S_2,S_3)\leftrightarrow (X_1,X_2,X_3)\leftrightarrow(Y_1,Y_2)$. Inequality $(b)$ holds as the mutual information does not exceed the sum-capacity of the MAC which equals to $2-2h_b(\delta)$. The proof for the converse is complete as the inequalities hold for arbitrary $\epsilon>0$. 

Next, we prove the last statement of the proposition by contradiction. Suppose the sources  with parameter $p_1=p_2=h_b^{-1}(1-h_b(\delta))$ satisfy the conditions in Proposition \ref{prep: CES_three_user}. Then, from the fourth inequality in Proposition \ref{prep: CES_three_user},
\begin{equation*}
2-2h_b(\delta)\leq \max_{P_{\underline{U}, \underline{X}| \underline{S}}\in \mathscr{P}_{CES}}I(X_1,X_2,X_3;Y|\underline{U})= \max_{P_{\underline{U}} P_{\underline{X}|\underline{U}\underline{S}}} I(X_1X_2 X_3;Y|\underline{U}).
\end{equation*}
where $P_{\underline{X}|\underline{U}\underline{S}}=\prod_{i=1}^3 P_{X_i|S_i, \underline{U}}$. The equality holds as there is no univariate common part, and hence, $\underline{U}$ is independent of the sources. Since,  $\underline{U}$ appears in the conditioning in the mutual information term, the above inequality is equivalent to 
\begin{equation}
2-2h_b(\delta) \leq \max_{P_{X_1|S_1}P_{X_2|S_2}P_{X_3|S_3}} I(X_1X_2 X_3;Y).
\end{equation}
One can verify that $I(X_1,X_2,X_3;Y)\leq 2-2h_b(\delta)$, with equality, if and only if, $X_3=X_1\+X_2$ with probability one, and $X_1$ and $X_2$ are uniform over $\{0,1\}$. However, we show that such distribution cannot be generated by taking the marginal of $P_{\underline{S}}P_{X_1|S_1}P_{X_2|S_2}P_{X_3|S_3}$. This is because, to get $X_1$ and $X_2$ to be uniform over $\{0,1\}$, we need to set $P_{X_1|S_1}(x|s)=P_{X_2|S_2}(x|s)=\frac{1}{2}$ for all $x,s \in \{0,1\}$. This implies that, $X_1$ and $X_2$ are independent of each other and of $S_1$ and $S_2$, respectively. Hence, $P_{\underline{S}, \underline{X}}=P_{\underline{S}}P_{X_1}P_{X_2}P_{X_3|S_3}$, which means that $(X_1,X_2)$ are independent of $X_3$. This contradicts with the condition that $X_3=X_1\+X_2$.
\end{proof}

\subsection{New Sufficient Condition}
We use the intuition behind the argument in Subsection \ref{subsec:subopt unstructures} and propose a new coding strategy in which a combination of random linear codes (as in Example \ref{ex: additive source MAC}) and the extension of CES scheme is used. The coding scheme uses both univariate and additive common information among the sources. In the next Theorem, we derive sufficient conditions for transmission of correlated sources over three-user MAC.

\begin{definition} 
Given a source  $\sourceS$ with an additive common part $(T_1,T_2,T_3)$, and a MAC $\MAC$, let $\mathscr{P}$ be the set of conditional distributions $P_{\underline{U}, \underline{V}, \underline{X}| \underline{S}}$ defined on $\underline{\mathcal{U}}\times \FF_q^3 \times  \underline{\mathcal{X}}$ which can be factored as
\begin{align}\label{eq: thm mac corr scr joint dist}
P_{U_{123}}\Bigg[\prod_{b\in \{12,13,23\}}P_{U_b|W_{b}U_{123}}\Bigg]~P_{V_1V_2V_3}~ \Bigg[  \prod_{\substack{i, j, k \in \{1,2,3\}\\j<k, i\neq j, i\neq k}}   P_{X_i|S_iU_{123}U_{ij}U_{ik}V_i}\Bigg],
\end{align}
where $\FF_q$ is the finite field associated with the additive common part, the random variables $(W_{123}, W_{12}, W_{13}, W_{23})$ are the univariate common parts of the sources,  $P_{V_1V_2V_3}=\frac{1}{q^2}\11\{V_3\+_qV_1\oplus_q V_2=0\}$, and with slight abuse of notation $\underline{U}\deq (U_{123}, U_{12}, U_{13}, U_{23})$. $\underline{\mathcal{U}}$ and $\underline{\mathcal{V}}$ are finite alphabets associated with the auxiliary random variables $\underline{U}$ and $\underline{V}$, respectively. 
\end{definition}

\begin{theorem}\label{them: achievable-rate-for -proposed scheme}
A source  $\sourceS$ with an additive common part $(T_1,T_2,T_3)$  is reliably transmissible over a MAC $\MAC$, if there exists a conditional distribution $P_{\underline{U}, \underline{V}, \underline{X}| \underline{S}}\in \mathscr{P}$ such that for all $a,b\in \FF_q$, any distinct $i,j,k \in \{1,2,3\}$, and for any $\mathcal{B}\subseteq \{12,13, 23\}$ the following inequalities hold:
{ \begin{subequations}\label{eq: transmittable bounds}
\begin{align}
H(S_i|S_j,S_k)&\leq I(X_i;Y|S_j, S_kU_{123}, U_{12}, U_{13}, U_{23}, V_1, V_2, V_3, X_j, X_k)\\
H(S_i,S_j|S_k, W_{\mathcal{B}})&\leq I(X_i,X_j;Y|S_k, W_{\mathcal{B}}, U_{123}, U_{ik}, U_{jk}U_{\mathcal{B}},  V_{k}, X_k)\\
H(S_i,S_j|S_k, W_{\mathcal{B}}, \underline{T})&\leq I(X_i,X_j;Y|S_k, W_{\mathcal{B}}, U_{123}, U_{ik}, U_{jk}U_{\mathcal{B}}, \underline{T}, \underline{V}, X_k)\\
H(S_1, S_2, S_3|W_{123}, W_\mathcal{B}, \underline{T} )&\leq I(X_1, X_2, X_3; Y|W_{123}, W_\mathcal{B}, U_{123}, U_\mathcal{B}, \underline{T}, \underline{V})\\
H(S_1, S_2, S_3|\underline{T})&\leq I(X_1, X_2, X_3; Y|\underline{T}, \underline{V})\\
H(S_1, S_2, S_3|a T_1\+_q bT_2)&\leq I(X_1, X_2, X_3;Y|a T_1\+_q bT_2, a V_1\+_qbV_2)\\
H(S_1, S_2, S_3|W_{123},W_\mathcal{B}, aT_1\+_qbT_2 )&\leq I(X_1, X_2, X_3;Y|W_{123}, W_\mathcal{B}, U_{123}, U_\mathcal{B}, aT_1\+_qbT_2, aV_1\+_qbV_2)
\end{align}
\end{subequations} }
\end{theorem}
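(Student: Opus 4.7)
The plan is to extend the three-user CES coding strategy (Proposition~\ref{prep: CES_three_user}) by superimposing a layer of random affine coding dedicated to the additive common parts, in addition to the unstructured layers used for the univariate common parts. The codebook construction proceeds in stages. First, generate unstructured random codebooks indexed by $W_{123}, W_{12}, W_{13}, W_{23}$ that produce the auxiliary sequences $\bfU_{123}, \bfU_{12}, \bfU_{13}, \bfU_{23}$ exactly as in the proof of Proposition~\ref{prep: CES_three_user}. Next, draw a matrix $\mathbf{G} \in \FF_q^{n\times n}$ uniformly and bias vectors $\bfb_1,\bfb_2,\bfb_3 \in \FF_q^n$ uniformly subject to $\bfb_1 \+ \bfb_2 \+ \bfb_3 = \mathbf{0}$. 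Encoder $i$ computes $\bfV_i = \bfT_i \mathbf{G} \+ \bfb_i$; since $T_{1\ell} \+ T_{2\ell} \+ T_{3\ell} = 0$ coordinate-wise, we get $\bfV_1 \+ \bfV_2 \+ \bfV_3 = \mathbf{0}$ with probability one. Finally, for each source realization $\bfs_i$ and its induced $(\bfU_{123}, \bfU_{ij}, \bfU_{ik}, \bfV_i)$, generate $\bfX_i$ in an i.i.d.\ fashion according to $P_{X_i|S_i U_{123} U_{ij} U_{ik} V_i}$, and transmit it.

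The decoder receives $\bfy$ and seeks a unique $(\tilde{\bfs}_1, \tilde{\bfs}_2, \tilde{\bfs}_3)$ whose induced $(\tilde{W},\tilde{T},\tilde{U},\tilde{V},\tilde{X})$ sequences are jointly $\epsilon$-typical with $\bfy$ under the target distribution. By the standard AEP, the true triple passes the typicality test with probability tending to one. The task is then to bound the probability that a wrong triple passes. I would handle this via a union bound that partitions incorrect candidates according to (i) which of the three source indices disagree with the truth, (ii) which subset $\mathcal{B}\subseteq\{12,13,23\}$ of univariate common parts happens to coincide, and, when all three source indices differ, (iii) whether none, all, or only a nontrivial linear combination $a\bfT_1 \+ b\bfT_2$ of the additive common parts coincides with the true value. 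These partition classes map one-to-one onto the seven inequalities (17a)--(17g).

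For the unstructured classes underlying (17a), (17b), (17d), and (17e), the arguments parallel the proof of Proposition~\ref{prep: CES_three_user}: the size of each class is governed by the appropriate conditional source entropy on the left, and the probability that a competing sequence is jointly typical is the reciprocal of an exponential in the matching mutual information on the right, giving the displayed bounds. The genuinely new step is the analysis of the classes generated by the linearity of $\{\bfV_i\}$, namely (17c), (17f), and (17g). There, when an incorrect $(\tilde{\bfT}_1, \tilde{\bfT}_2, \tilde{\bfT}_3)$ satisfies a linear relation such as $a\tilde{\bfT}_1 \+ b\tilde{\bfT}_2 = a\bfT_1 \+ b\bfT_2$ componentwise, the affine map automatically enforces $a\tilde{\bfV}_1 \+ b\tilde{\bfV}_2 = a\bfV_1 \+ b\bfV_2$, while the remaining components of $\underline{\tilde{\bfV}}$ are distributed uniformly on the corresponding coset of the kernel of $\mathbf{G}$. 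This uniform-on-coset property, together with the fact that $\underline{\tilde{\bfV}}$ is a deterministic function of $(\mathbf{G},\underline{\bfb},\underline{\tilde{\bfT}})$ and otherwise independent of the $\bfU$-codebooks, is precisely what produces the conditioning structure seen on the right-hand sides of (17c), (17f) and (17g).

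The main obstacle will be a clean bookkeeping of the error events: for each subset $\mathcal{B}\subseteq\{12,13,23\}$ paired with each of the three possible additive-matching patterns (no $T_i$ matches, all three match, or exactly one nontrivial combination $a\bfT_1 \+ b\bfT_2$ matches) one has to verify that the product of the number of typical competing sequences in the class and the probability of the class passing the joint typicality test is exponentially small under the corresponding inequality in (17). Once this taxonomy is laid out, each individual bound reduces to a packing-lemma-style calculation combined with the uniformity of $\mathbf{G}$ on the relevant cosets; letting $\epsilon \downarrow 0$ and invoking a standard expurgation over the random codebook then completes the proof.
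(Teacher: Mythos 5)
Your proposal follows essentially the same route as the paper's proof: the identical codebook construction (CES-style unstructured codebooks for the univariate parts superimposed with random affine maps $\bfV_i = \bfT_i\mathbf{G}\+\mathbf{b}_i$ for the additive part), joint-typicality decoding, and an error analysis that partitions wrong candidates by which source indices disagree and which linear combinations $a\bfT_1\+_q b\bfT_2$ are preserved --- exactly the case decomposition (matching one source, matching the sum, matching a nontrivial combination, matching nothing) that the paper carries out in Appendix A, with your uniform-on-coset observation playing the role of the paper's Lemma \ref{lem: prob of sG}. The only presentational difference is that the paper executes the bookkeeping in the special case of trivial univariate parts and $T_i = S_i$, deferring the general case to standard CES arguments, whereas you sketch the full taxonomy directly; this is not a substantive divergence.
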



\begin{remark}\label{rem:mac corr scr}
The set of sufficient conditions given in Theorem \ref{them: achievable-rate-for -proposed scheme} includes the one in Proposition \ref{prep: CES_three_user}. For that select the joint distribution in \eqref{eq: thm mac corr scr joint dist} such that $X_i$ be independent of $V_i$ for all $i=1,2,3$. 
\end{remark}
%
%
%

\begin{proof}[Outline of the proof]
We use a new approach which is based on affine maps to encode additive common parts. Suppose the random variables $(\underline{S},\underline{X},U_{123}, U_{12},U_{13}, U_{23}, \underline{V})$ are distributed according to a joint distribution that factors as in \eqref{eq: thm mac corr scr joint dist}.

\noindent {\textbf{Codebook Generation:}}
At each transmitter five different codebooks are defined, one codebook for the additive common part $T_i$, three codebooks for univariate common parts $(W_{123}, W_{ij}, W_{ik})$, where $i,j, k$ are distinct elements of $\{1,2,3\}$, and one codebook for generating the total output $X_i^n$.
Fix $\epsilon>0$. 
\begin{enumerate}

\item The codebooks for encoding of univariate common parts are as in the proof of Proposition \ref{prep: CES_three_user}. 

\item The codebook for encoding of $(T_1,T_2,T_3)$ is defined using affine maps. Generate two vectors $\mathbf{B}_1, \mathbf{B}_2$ of length $n$, and an $n \times n$ matrix $\mathbf{G}$ with elements selected randomly, uniformly and independently from $\FF_q$. Set $\mathbf{B}_3=-(\mathbf{B}_1\oplus_q \mathbf{B}_2)$.  For each sequence $\mathbf{t}_i\in \FF_q^n$, define $\bfV_i(\mathbf{t}_i)=\mathbf{t}_i \mathbf{G}\oplus \mathbf{B}_i$, where $i=1,2,3$, and all the additions and multiplications are modulo-$q$.

\item Given distinct $i,j,k \in \{1,2,3\}$, any realization $\mathbf{s}_i$ of the source, the common parts  $(\mathbf{w}_{123}, \mathbf{w}_{ij}, \mathbf{w}_{ik}, \mathbf{t}_i)$, and the corresponding sequences $$\big(\bfU_{123}(\mathbf{w}_{123}), \bfU_{ij}(\mathbf{w}_{ij}, \bfU_{123}),\bfU_{ik}(\mathbf{w}_{ik}, \bfU_{123}), \bfV_i(\mathbf{t}_i)\big)$$ generate a random IID sequence $\bfX_i$ according to  $\prod_{l\in [1,n]} P_{X_i|S_i U_{123} U_{ij} U_{ik} V_i }$. For shorthand, such a sequence is denoted by $\bfX_i(\mathbf{s}_i, \bfU_{123},\bfU_{ij},\bfU_{ik}, \mathbf{V}_i)$.
\end{enumerate}

\noindent {\textbf{Encoding:}}
Assume $\mathbf{s}_i$ is a realization of the $i$th source, where $i=1,2,3$. Transmitter $i$ first calculates the common part sequences $(\mathbf{w}_{123}, \mathbf{w}_{ij}, \mathbf{w}_{ik}, \mathbf{t}_i)$, where $i,j,k \in \{1,2,3\}$ are distinct. Next, the transmitter finds the corresponding sequences $$\big(\bfU_{123}(\mathbf{w}_{123}), \bfU_{ij}(\mathbf{w}_{ij}, \bfU_{123}),\bfU_{ik}(\mathbf{w}_{ik}, \bfU_{123}), \bfV_i(\mathbf{t}_i)\big)$$ and sends $\bfX_i(\mathbf{s}_i, \bfU_{123},\bfU_{ij},\bfU_{ik}, \mathbf{V}_i)$ to the channel. 

\noindent {\textbf{Decoding:}}
Upon receiving the channel output vector $\mathbf{y}$ from the channel, the decoder finds sequences $\tilde{\mathbf{s}}_i \in \mathcal{S}_i^n, i=1,2,3$, such that 
\begin{align}\label{eq:MAC scr decoder}
(\underline{\tilde{\mathbf{s}}}, \tilde{\bfU}_{123}, \tilde{\bfU}_{12}, \tilde{\bfU}_{13}, \tilde{\bfU}_{23}, \underline{\tilde{\mathbf{v}}}, \underline{\tilde{\bfX}},\mathbf{y})\in A_{\epsilon}^{(n)}(\underline{S},U_{123},U_{12},U_{13},U_{23},\underline{V}, \underline{X},Y),
\end{align}
where $ \tilde{\bfU}_{123}=\bfu_{123}(\tilde{\mathbf{w}}_{123}),\tilde{\bfU}_{ij}=\bfu_{ij}(\tilde{\mathbf{w}}_{ij}, \tilde{\bfU}_{123}),  \tilde{\mathbf{v}}_i=\mathbf{v}_i(\tilde{\mathbf{t}}_i)$, $\tilde{\bfX}_i=\bfX_i(\tilde{\mathbf{s}}_i, \tilde{\bfU}_{123},\tilde{\bfU}_{ij},\tilde{\bfU}_{ik}, \tilde{\mathbf{t}}_i)$, and $i,j,k \in \{1,2,3\}$ are distinct. Note that $(\tilde{\mathbf{w}}_{123}, \tilde{\mathbf{w}}_{12},\tilde{\mathbf{w}}_{13},\tilde{\mathbf{w}}_{23})$ and $(\tilde{\mathbf{t}}_1, \tilde{\mathbf{t}}_2, \tilde{\mathbf{t}}_3)$ are the univariate and additive common part sequences of $(\tilde{\mathbf{s}}_1,\tilde{\mathbf{s}}_2,\tilde{\mathbf{s}}_3)$, respectively.

A decoding error will be occurred, if no unique $(\tilde{\mathbf{s}}_1,\tilde{\mathbf{s}}_2,\tilde{\mathbf{s}}_3)$ is found. It is shown in Appendix \ref{sec: proof of them lin_ces} that the probability of error approaches zero as $n\rightarrow \infty$, if the inequalities in (\ref{eq: transmittable bounds}) are satisfied. 
\end{proof}

\begin{remark}
The coding strategy explained in the proof of Theorem \ref{them: achievable-rate-for -proposed scheme} subsumes the extension of CES scheme and identical random linear coding strategy.   
\end{remark}


\subsection{Example with Structural Mismatch} \label{sec: improvements over CES}

%

{\ In Example \ref{ex: additive source MAC}, the structure in the sources matches with that of the channel. In other words, the source correlation is 
captured via the relation given by $S_3=S_1\+S_2$, and when 
$X_3=X_1 \+ X_2$, the channel behaved obligingly. 
In this section, we consider an example where there is a mismatch between the structures of the source and the channel. In other words, the source correlation is still governed by $S_3=S_1\+S_2$, whereas, the channel fuses $X_3$ and $X_1 \+ X_2$ 
in a nonlinear fashion. In what follows, we provide an application of our coding scheme in scenarios where there is a structural mismatch between the sources and the channel. 
%
}
%
%

\begin{example}\label{ex: CES is suboptimal}
Consider the sources denoted by $(S_1, S_2, S_3)$, where  $S_1$ and $S_3$ are  independent Bernoulli random variables with parameter $\sigma, \gamma\in [0,\frac{1}{2}]$, respectively. Suppose the third source satisfies  $S_3=S_1\oplus_2 S_2$ with probability one. For shorthand we associate such sources with the parameters $(\sigma, \gamma)$.    The sources are to be transmitted trough a MAC with binary inputs as shown in Figure \ref{fig: Exp1}. In this channel the noise random variable $N$ is assumed to be independent of other random variables. The PMF of $N$ is given in Table \ref{tab: N}, {where the parameter $ \delta\in (0, \frac{1}{4}]$. As a result, $H(N)=1+\frac{1}{2}h_b(2\delta)$. }
\begin{table}
\caption {Distribution of $N$}\label{tab: N}
\begin{center}
\begin{tabular}{c|c|c|c|c}
N & 0 & 1 & 2 & 3\\
\hline
$P_N$ & $\frac{1}{2}-\delta$ & $\frac{1}{2}$ & $\delta$ & $0$
\end{tabular}
\end{center}
\end{table}

\begin{figure}
\centering
\includegraphics[scale=0.8]{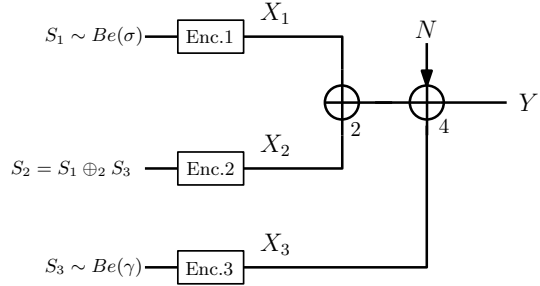}
\caption{The diagram the setup introduced in Example \ref{ex: CES is suboptimal}. Note the input alphabets of this MAC are restricted to $\{0,1\}$.}
\label{fig: Exp1}
\end{figure}
\end{example}

For this setup,  we show that there exist parameters $(\sigma, \gamma)$ whose corresponding sources in Example \ref{ex: CES is suboptimal} cannot be transmitted reliably using the CES scheme. However,  according to Theorem \ref{them: achievable-rate-for -proposed scheme}, such sources can be reliably transmitted. This emphasizes the fact that efficient encoding of  conferencing common information  contributes to improvements upon coding schemes solely based on univariate common information. In what follows, we explain the steps to show the existence of such parameters.

\begin{remark}\label{rem: gamma}
For the special case in which $\sigma=0$, the equalities $S_1=0$ and $S_2=S_3$ hold  with probability one. From Proposition \ref{prep: CES_three_user}, such $(S_1,S_2,S_3)$ can be transmitted using CES scheme, if $h_b(\gamma)\leq 2-H(N)$ holds. 
\end{remark}

Let $\gamma^*\in [h_b^{-1}(0.5),\frac{1}{2})$ be such that $\gamma^*=h_b^{-1}(2-H(N))$. Such a  $\gamma^*$ exists as $2-H(N)=1-\frac{1}{2}h_b(2\delta)$ and, thus, is a number between $\frac{1}{2}$ to $1$. By Remark \ref{rem: gamma}, the sources $(S_1,S_2,S_3)$ with parameter $(\sigma=0,\gamma=\gamma^*)$ can be transmitted reliably using CES scheme. However, we argue that for small enough $\epsilon>0$, the sources with parameter $(\sigma=\epsilon, \gamma=\gamma^*-\epsilon)$ cannot be transmitted using this scheme. Whereas, from Theorem \ref{them: achievable-rate-for -proposed scheme}, this source can be transmitted reliably. This is formally stated as follows.

\begin{theorem}\label{thm:CES is subopt}
There exist $\sigma \in (0, \frac{1}{2}]$ and $\gamma \in (0, \gamma^*]$ such that the triplet sources $(S_1,S_2,S_3)$ with these parameters satisfies the sufficient condition of Theorem \ref{them: achievable-rate-for -proposed scheme}, thus, transmissible over the channel in Example \ref{ex: CES is suboptimal}, but does not satisfy the sufficient condition in Proposition \ref{prep: CES_three_user}. 
\end{theorem}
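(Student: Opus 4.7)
The plan is a continuity/perturbation argument around the degenerate boundary point $(\sigma, \gamma) = (0, \gamma^*)$ of Remark \ref{rem: gamma}. As a first step, I would verify that for every $\sigma \in (0, 1/2)$ and $\gamma \in (0, 1/2)$, all four univariate common parts of $(S_1, S_2, S_3)$ are trivial. Indeed, independence of $S_1$ and $S_3$ kills $W_{13}$ and $W_{123}$; and any random variable that is simultaneously a function of $S_1$ and of $S_2 = S_1 \oplus_2 S_3$ must be constant, because conditioning on $S_1 = s_1$ would force a function of $S_3$ to be constant while $S_3$ is non-degenerate. This collapses the CES sum-rate inequality of Proposition \ref{prep: CES_three_user} to $H(S_1, S_2, S_3) \leq I(X_1 X_2 X_3; Y)$ over CES-admissible input laws; absorbing the auxiliary $\underline{U}$ into the maximization as in the proof of Proposition \ref{pro:linear cods better than CES}, the feasible set becomes the product form $\{P_{X_1|S_1} P_{X_2|S_2} P_{X_3|S_3}\}$.

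For the CES-impossibility side, define $C_{\mathrm{CES}}(\sigma, \gamma) \deq \max I(X_1 X_2 X_3; Y)$ over this product form, viewed as a function of the source parameters. By compactness of the feasible set and continuity of mutual information in the joint distribution, $(\sigma, \gamma) \mapsto C_{\mathrm{CES}}(\sigma, \gamma)$ is Lipschitz, so $C_{\mathrm{CES}}(\sigma, \gamma^* - \sigma) - C_{\mathrm{CES}}(0, \gamma^*) = O(\sigma)$, with $C_{\mathrm{CES}}(0, \gamma^*) = 2 - H(N) = h_b(\gamma^*)$ by Remark \ref{rem: gamma}. On the other hand, the source entropy satisfies $H(S_1, S_2, S_3) = h_b(\sigma) + h_b(\gamma^* - \sigma)$, so
\begin{equation*}
H(S_1, S_2, S_3) - h_b(\gamma^*) \;=\; h_b(\sigma) - h_b'(\gamma^*)\, \sigma + O(\sigma^2),
\end{equation*}
and $h_b(\sigma) = \Theta(\sigma \log(1/\sigma))$ dominates any linear term. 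Therefore, for all small enough $\sigma > 0$, $H(S_1, S_2, S_3) > C_{\mathrm{CES}}(\sigma, \gamma^* - \sigma)$, so the CES sum-rate bound is violated and the perturbed source is not transmissible via CES.

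For the structured-coding achievability side, I would invoke Theorem \ref{them: achievable-rate-for -proposed scheme} with $q = 2$ and take $(T_1, T_2, T_3) = (S_1, S_2, S_3)$, a valid additive common part since $S_1 \oplus_2 S_2 \oplus_2 S_3 = 0$ almost surely. Choosing $V_1, V_2$ independent and uniform on $\{0, 1\}$ (so that $V_3 = V_1 \oplus_2 V_2$) and setting $X_i = V_i$ forces $X_1, X_2$ uniform independent with $X_3 = X_1 \oplus_2 X_2$ deterministically, pushing the channel of Example \ref{ex: CES is suboptimal} onto its linear branch. A direct verification --- analogous to the achievability calculation in Proposition \ref{pro:linear cods better than CES} --- shows that at $(\sigma, \gamma) = (0, \gamma^*)$ the sum-rate inequality in \eqref{eq: transmittable bounds} binds while all other inequalities have strict slack. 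By continuity of both sides of \eqref{eq: transmittable bounds} in $(\sigma, \gamma)$, every inequality persists at $(\sigma, \gamma) = (\epsilon, \gamma^* - \epsilon)$ for sufficiently small $\epsilon > 0$.

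The main technical obstacle is the Lipschitz (or at worst $O(\sigma)$) bound on $C_{\mathrm{CES}}$ near $(0, \gamma^*)$ used in the impossibility paragraph, which requires a sensitivity analysis of the parametric maximum over the simplex of stochastic kernels $\{P_{X_i|S_i}\}$; this is standard but not entirely routine, and alternatively one can sidestep it by a weaker semicontinuity/compactness argument since we only need to compare against the super-linear $h_b(\sigma)$. A secondary subtlety is the enumeration of the inequalities in \eqref{eq: transmittable bounds} (indexed by $\mathcal{B} \subseteq \{12, 13, 23\}$ and $(a, b) \in \mathbb{F}_2 \times \mathbb{F}_2$) and the verification that each has strict slack at the boundary; once the clean-branch structure of the channel is identified, these checks are direct but tedious.
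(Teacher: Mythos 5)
Both halves of your argument have genuine gaps: the achievability half fails outright, and the impossibility half rests on a false identity plus an unjustified regularity claim. On achievability: with $T_i=S_i$ and the pure choice $X_i=V_i$, the inputs are undithered, identical linear encodings of the sources, so the channel output depends on $(S_1,S_2)$ only through the sum $S_3=S_1\oplus_2 S_2$, and no scheme of this form can resolve $S_1$ and $S_2$ individually once $\sigma>0$. This is visible in the single-letter conditions: the inequality of Theorem \ref{them: achievable-rate-for -proposed scheme} that controls decoding of $(S_1,S_2)$ given $S_3$ (specialized as \eqref{eq: transmittable sources bound 2} in the paper's proof) reads $h_b(\sigma)\leq I(X_1X_2;Y|X_3S_3V_3)$, and under your choice the right-hand side equals $H(2V_3\oplus_4 N \mid V_3)-H(N)=0$. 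So this inequality \emph{also} binds at $(\sigma,\gamma)=(0,\gamma^*)$ --- contradicting your claim that only the sum-rate inequality binds --- and it is violated for every $\sigma>0$. The sum-rate inequality is violated after perturbation as well: its left side is $h_b(\epsilon)+h_b(\gamma^*-\epsilon)\geq h_b(\gamma^*)+h_b(\epsilon)-O(\epsilon)$, while its right side remains $2-H(N)=h_b(\gamma^*)$. (In any case, a binding inequality never ``persists by continuity''; only strict ones do.) The missing idea is exactly the paper's Lemma \ref{lem: neigh of gamma is achievable}: dither one input, $X_1=V_1\oplus E_1$ with $E_1\sim Ber(\alpha)$, $\alpha>0$. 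The dither buys a positive private rate $\eta_1(\alpha)=H(E_1\oplus_4 N)-H(N)>0$ for $S_1$ at the price of a sum-rate loss $\eta_2(\alpha)>0$, and one must take $\gamma$ strictly below $\gamma^*$ so that the slack $h_b(\gamma^*)-h_b(\gamma)$ can absorb both $h_b(\sigma)$ and $\eta_2(\alpha)$.

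On impossibility: the identity $C_{\mathrm{CES}}(0,\gamma^*)=2-H(N)$ is false, and Remark \ref{rem: gamma} does not support it. At $\sigma=0$ the pairwise common part $W_{23}=S_2=S_3$ is non-trivial, so the CES transmissibility asserted in that remark uses inputs correlated through $U_{23}$; it says nothing about your product-form maximum. In fact the product-form maximum at $(0,\gamma^*)$ is \emph{strictly} below $2-H(N)$: the equality conditions of Lemma \ref{lem:MAC in example is unstructured} require $X_3=X_1\oplus_2 X_2$ a.s.\ with $X_3$ uniform, which conditionally independent inputs cannot produce when $S_3\sim Ber(\gamma^*)$ with $\gamma^*\neq \frac12$ (independence forces $X_1$, and then $X_2,X_3$, to be deterministic given $S_3$, so the bias of $X_3$ lies in $\{0,1,\gamma^*,1-\gamma^*\}$). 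Separately, compactness plus continuity yields only continuity of $C_{\mathrm{CES}}$, not Lipschitz continuity: entropy functionals have modulus of continuity $O(t\log(1/t))$, which is the same order as $h_b(\sigma)$, so your ``superlinear domination'' step does not close as stated. Your own fallback (semicontinuity) would suffice, but only granted the strict gap at $(0,\gamma^*)$ that your equality claim denies --- and establishing that gap, uniformly over $\sigma\in(0,\frac12]$ and $\gamma\in(0,\gamma^*]$, is precisely the content of the paper's Lemma \ref{lem:total variation dist}, a quantitative total-variation separation between source-induced product-form input laws and the capacity-achieving set, which then drives the contradiction argument of Lemma \ref{lem: neigh of gamma* is not achievable by CES}.
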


\begin{proof}
The proof is in Appendix \ref{appx: proof_ thm_CES is sub opt}. 
\end{proof}

\section{Communications over MAC with Feedback: Preliminaries}\label{sec:MAC-FB prel}
The problem of three user MAC with noiseless feedback is depicted in Figure \ref{fig: MAC with FB}. This communication channel consists of one receiver and multiple transmitters. After each channel use, the output of the channel is received at each transmitter noiselessly. Gaarder and Wolf \cite{Gaarder-Wolf} showed that the capacity region of the MAC can be expanded through the use of the feedback. This was shown in a binary erasure MAC.
Cover and Leung \cite{Cover-Leung} studied the two-user MAC with feedback, and developed a coding strategy using unstructured random codes.

\begin{figure}[hbtp]
\centering
\includegraphics[scale=0.9]{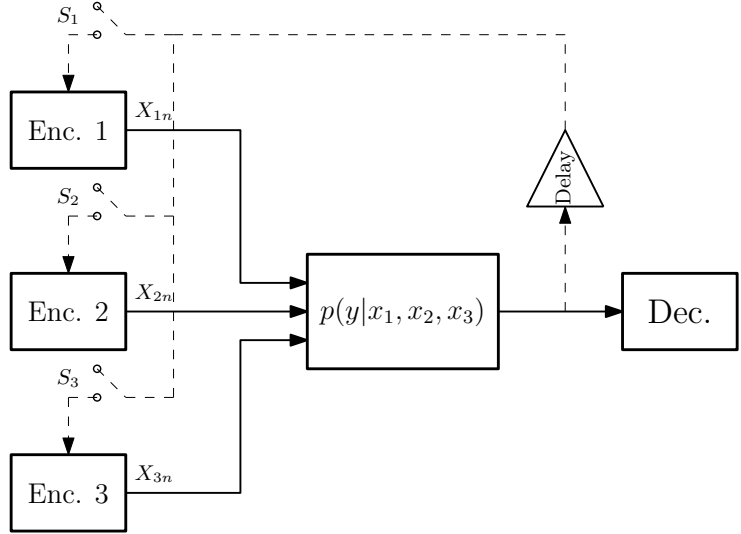}
\caption{The three-user MAC with noiseless feedback. If the switch $S_i$ is closed, the feedback is available at the $i$th encoder, where $i=1,2,3$.}
\label{fig: MAC with FB}
\end{figure}

\subsection{Model and Problem Formulation}\label{sec: prelim}
In what follows, we formulate the problem of communications over MAC-FB. We restrict ourselves to three-user MAC with noiseless feedback in which all or a subset of the transmitters have access to the feedback perfectly. Consider a three-user MAC identified by a transition probability matrix  $P_{Y|X_1, X_2, X_3}$ as in Definition \ref{def:DMS MAC}. Let $\bfy^n$ be a realization of the output of the channel after $n$ uses, where $\bfx^n_i$ is the $i$th input sequence of the channel, $i\in [1,3]$. Then, the conditional probability distribution of the channel output $y_n$ given the current and past input and output vectors is given by
{\revcol \begin{align}\label{eq: chann probabilities}
P_{Y_n|\bfY^{n-1}, \bfX^n_1, \bfX^n_2, \bfX^n_3}(y_n|\bfy^{n-1}, \bfx_1^n, \bfx^n_2, \bfx^n_3)=P_{Y|X_1, X_2,  X_3}(y_n|x_{1n}, x_{2n}, x_{3n}).
\end{align} }
It is assumed that noiseless feedback is made available, with one unit of delay, to a subset $\mathcal{T} \subseteq [1,3]$ of the transmitters. In Figure \ref{fig: MAC with FB}, the switches $S_i, i=1,2,3$ determine which transmitter receives the feedback. A formal definition of a MAC-FB setup is given in the following. 

\begin{definition} \label{def:MAC_FB setup}
A $3$-user MAC-FB setup is characterized by a $3$-user MAC $\MAC$ and a subset $\mathcal{T} \subseteq [1,3]$ determining the transmitters which have access to the feedback. It is assumed that at least one transmitter has access to the feedback, i.e., $|\mathcal{T}|\geq 1$. {\revcol Such a MAC-FB is denoted by $(\underline{\mathcal{X}}, \mathcal{Y}, P_{Y|\underline{X}}, \mathcal{T})$.}
\end{definition}

\begin{definition}\label{def:MAC_FB scheme}
For a $3$-user MAC-FB $(\underline{\mathcal{X}}, \mathcal{Y}, P_{Y|\underline{X}}, \mathcal{T})$, an $(N, \Theta_1, \Theta_2, \Theta_3)$ coding scheme consists of $3$ sequences of encoding functions defined as,  
\begin{align*}
e_{i,n}: [1, \Theta_i] \times \mathcal{Y}^{n-1}\rightarrow \mathcal{X}_i, \quad \text{for}~ i\in \mathcal{T}, \quad \text{and} \quad 
e_{j,n}: [1, \Theta_j] \rightarrow \mathcal{X}_j, \quad \text{for}~ j\in \mathcal{T}^c,  
\end{align*}
where $n\in [1, N]$ and a decoding function denoted by 
\begin{align*}
d: \mathcal{Y}^N\rightarrow [1,\Theta_1] \times [1, \Theta_2] \times [1, \Theta_3].
\end{align*}
We use a unified notation $e_{i,n}(m, y^{n-1})$ to denote the encoders, as it is understood  that for $i \notin \mathcal{T}$ the encoder $e_{i,n}$ is only a function of the message $m$. Moreover, for shorthand, the encoders of the coding scheme are denoted by $\underline{e}$.
\end{definition}
It is assumed that, transmitter $i$ receives a message index $M_i$ which is drawn randomly and uniformly from $[1,\Theta_i]$, where $i\in [1,3]$.  Furthermore, the message indexes $(M_1, M_2, M_3)$ are assumed to be mutually independent. 
For this setup, the average probability of error is defined as 
\begin{align}\label{eq:MAC_FB P_e}
P_{err}(\underline{e}) \deq   \PP\{d(Y^N) \neq (M_1, M_2, M_3)\}, 
\end{align} 
where $\underline{e}$ denotes the encoders of the coding scheme. 
\begin{definition}\label{def: achievable rate MAC_FB}
For a $3$-user MAC-FB, a rate-tuple $(R_1,R_2, R_3)$ is said to be achievable, if for any $\epsilon >0$ there exists, for all sufficiently large $N$, an $(N, \Theta_1, \Theta_2, \Theta_3)$ coding scheme such that 
\begin{align*}
P_{err}(\underline{e})  <\epsilon, \quad \frac{1}{N}\log_2 \Theta_i \geq R_i-\epsilon, \quad \text{where}~~i\in [1,3].
\end{align*} 
\end{definition}

\subsection{\ac{CL} Achievable Region: Unstructured Coding Approach }
  The main idea behind the CL scheme is to use superposition block-Markov encoding. The scheme operates in two stages. In stage one, the transmitters send the messages with rates outside the no-feedback capacity region, but small enough that 
each user can decode the other user's message using feedback.   In the second stage, the encoders fully cooperate to send the messages to disambiguate the information at the receiver. 
Using this approach, the following rate-region is achievable  for communications over a MAC with noiseless feedback available at at least on of the transmitters \cite{Cover-Leung}.
\begin{fact}
Given a two-user MAC-FB $(\mathcal{X}_1, \mathcal{X}_2, \mathcal{Y}, P_{Y|X_1,X_2}, \mathcal{T}\subseteq \{1,2\})$, a rate pair $(R_1,R_2)$ is achievable, if there exist distributions $P_U, P_{X_1|U}$, and  $P_{X_2|U}$ such that 
\begin{align*}
R_1 \leq I(X_1; Y|X_2, U), \qquad R_2 \leq I(X_2; Y|X_1, U),\qquad R_1+R_2 \leq I(X_1,X_2; Y), 
\end{align*} 
where $U$ takes values from a finite set $\mathcal{U}$, and the joint distribution of all the random variables factors as $P_UP_{X_1|U}P_{X_2|U}P_{Y|X_1,X_2}$.
\end{fact}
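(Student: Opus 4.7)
The plan is to reproduce the classical Cover--Leung block-Markov superposition construction, operating over $B$ consecutive blocks of length $N$ and letting $B\to\infty$ so that the rate loss from a terminating block vanishes. The random variable $U$ will represent a ``resolution index'' carrying the messages transmitted in the previous block, on which fresh messages of the current block are superimposed. It suffices to treat the case $\mathcal{T}=\{1,2\}$; the proof goes through verbatim when only one transmitter sees the feedback, since only one cross-user decoding is then actually needed, but the scheme can be run symmetrically.

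The codebook for each block is generated independently as follows. For each pair $(m_1',m_2')\in[1,2^{NR_1}]\times[1,2^{NR_2}]$, draw $\mathbf{u}(m_1',m_2')$ i.i.d.\ according to $\prod_n P_U$. Then, for each $\mathbf{u}(m_1',m_2')$ and each $m_1\in[1,2^{NR_1}]$, draw $\mathbf{x}_1(m_1\mid m_1',m_2')$ i.i.d.\ from $\prod_n P_{X_1\mid U}$; symmetrically draw $\mathbf{x}_2(m_2\mid m_1',m_2')$. In block $b$, I split user $i$'s overall message into $B-1$ sub-messages $m_{i,b}$, and set a dummy $(m_{1,0},m_{2,0})$. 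The encoding rule in block $b$ is that both transmitters form $j_b=(m_{1,b-1},m_{2,b-1})$ and send $\mathbf{x}_i(m_{i,b}\mid j_b)$; this requires each transmitter to know the \emph{other} user's past message, which will be established by feedback decoding below. In the final block both users send only the resolution indices $(m_{1,B-1},m_{2,B-1})$ using cooperation.

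Feedback decoding (at the encoders, after each block): given the feedback $\mathbf{y}_b$, user $1$ knows $j_b$ and its own $m_{1,b}$, and looks for a unique $\hat m_{2,b}$ such that $(\mathbf{u}(j_b),\mathbf{x}_1(m_{1,b}\mid j_b),\mathbf{x}_2(\hat m_{2,b}\mid j_b),\mathbf{y}_b)$ is jointly $\epsilon$-typical. A standard packing lemma argument gives vanishing error probability provided $R_2\le I(X_2;Y\mid X_1,U)$. Symmetrically, user $2$ recovers $m_{1,b}$ whenever $R_1\le I(X_1;Y\mid X_2,U)$. By induction on $b$, the two encoders therefore agree on $j_{b+1}$ at the start of block $b+1$, and the block-Markov scheme is well-defined.

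Receiver decoding is done by backward decoding. After all $B$ blocks, start from block $B$: the receiver uses $\mathbf{y}_B$ to decode only the resolution pair $j_B=(m_{1,B-1},m_{2,B-1})$, which succeeds because a single pair of cooperative codewords is sent and $R_1+R_2\le I(X_1,X_2;Y)$ controls the packing. For $b=B-1,\ldots,1$, the receiver already knows $j_{b+1}=(m_{1,b},m_{2,b})$; it then looks in block $b$ for the unique $(\hat m_{1,b-1},\hat m_{2,b-1})$, i.e.\ $\hat j_b$, such that $\bigl(\mathbf{u}(\hat j_b),\mathbf{x}_1(m_{1,b}\mid \hat j_b),\mathbf{x}_2(m_{2,b}\mid \hat j_b),\mathbf{y}_b\bigr)$ is jointly typical. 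The union-bound error analysis, conditioned on correctness of $j_{b+1}$, reduces to a single joint-typicality test with at most $2^{N(R_1+R_2)}$ candidate triples, and hence succeeds once $R_1+R_2\le I(X_1,X_2;Y)=I(U,X_1,X_2;Y)$. The combined rate penalty of a factor $(B-1)/B$ vanishes as $B\to\infty$, completing the proof. The main technical step to be careful about is this last one: verifying that, with backward decoding, the cross-user feedback constraints $R_i\le I(X_i;Y\mid X_j,U)$ do not reappear at the receiver and only the sum-rate bound is needed, which is what makes feedback enlarge the region beyond the no-feedback capacity.
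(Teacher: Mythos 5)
For the full-feedback case $\mathcal{T}=\{1,2\}$ your argument is correct: it is the standard block-Markov superposition proof of the Cover--Leung region with backward decoding at the receiver, and your packing analyses at the two encoders and at the receiver produce exactly the three stated inequalities. This is a genuinely different route from the one the paper describes (following \cite{Cover-Leung}), in which the receiver forms a list of highly likely message pairs in each block and the cooperative information of the next block resolves the list; both routes work, and backward decoding is arguably cleaner because, as you note, the cross-user constraints do not reappear at the receiver.

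The genuine gap is your treatment of partial feedback. The Fact is stated for $\mathcal{T}\subseteq\{1,2\}$ with, per the paper's definition of a MAC-FB setup, $|\mathcal{T}|\ge 1$, and the paper explicitly invokes \cite{Willems_partialFB} for the case $|\mathcal{T}|=1$ immediately after the statement. Your claim that the proof ``goes through verbatim'' when only one transmitter sees the feedback is false. In your scheme both encoders must compute the cloud center $\mathbf{u}(j_b)$ with $j_b=(m_{1,b-1},m_{2,b-1})$; a transmitter with no feedback has no way of ever learning the other user's past message, so it cannot form $j_b$, cannot compute $\mathbf{u}(j_b)$, and cannot generate its superposition codeword --- the encoding rule is simply not executable, independently of any decoding considerations. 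The remark that ``only one cross-user decoding is then actually needed'' misses the point: the obstruction is at the non-feedback \emph{encoder}, not at the feedback one. Repairing this requires a genuinely different construction (Willems--van der Meulen): let $U$ carry only the previous message of the transmitter \emph{without} feedback, which both encoders then know (that transmitter trivially, the other by feedback decoding, giving $R_2\le I(X_2;Y|X_1,U)$ when user $1$ holds the feedback), and let the receiver, in backward decoding of block $b$, jointly recover user $1$'s fresh message and user $2$'s previous message, which yields $R_1\le I(X_1;Y|X_2,U)$ and $R_1+R_2\le I(X_1,X_2;Y)$. With that modification the full region is recovered for $|\mathcal{T}|=1$; without it, your proof establishes the Fact only for $\mathcal{T}=\{1,2\}$.
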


It was shown in \cite{Willems_partialFB} that, in a two-user MAC-FB, the \ac{CL} rate region is achievable even if only one of the transmitters has access to the feedback ( $|\mathcal{T}|=1$).

As explained in \ac{CL} scheme, the decoded sub-messages $(M_{1,b}, M_{2,b})$ are used as a common information for the next block of transmission. One can extend this scheme for a multi-user MAC-FB setup (say a three-user MAC-FB) using unstructured codes.  In this setup, the transmitters send the messages with rates outside the no-feedback capacity region. Hence, the receiver is not able to decode the messages. However, the transmission rates are taken to be sufficiently low so that each user can decode the sub-messages of the other users. The decoded sub-messages at the end of each block $b$ are used as uni-variate common parts for the next block of transmission. One can derive a single-letter characterization of an achievable rate region based on such a scheme in a straightforward fashion. For conciseness we do not state this rate region in this paper. 


\section{Three-User MAC-FB: Structured Codes}\label{sec:MAC-FB structured}

In this section, we propose a new coding scheme for  three-user MAC-FB, and derive a computable single-letter achievable rate region  (an inner bound to the capacity region) using structured codes -- in particular, \textit{quasi-linear} codes that were introduced in \cite{QLC-ISIT16}. Note that prior to the start of the communication, the messages are mutually independent; whereas after multiple uses of the channel, they become statistically correlated conditioned on the feedback. Based on this observation, we make a connection to the problem of MAC with correlated sources to design coding strategies that exploit the statistical correlation among the messages. We use the notion of conferencing common information to propose a new coding strategy for $3$-user MAC-FB. The main results of this section are given in Theorem \ref{thm: MAC-FB achievable} and \ref{thm:MAC-FB structured}.

\subsection{New Achievable Rate Region}

In what follows, we give the intuition behind the use of conferencing common information in MAC-FB. Consider a three-user MAC-FB setup as depicted in Figure \ref{fig:MAC_FB conf comm}. Similar to the two-user version of the problem, the communications take place in $B$ blocks each of length $n$. Moreover, the message at Transmitter $i$ is divided into $B$ sub-messages denoted by $(M_{i,1}, M_{i,2}, ..., M_{i,B})$, where $i=1,2,3$. Suppose, the transmission rates are such that neither the decoder nor the transmitters can decode the messages.  However, at each block $b$, the rates are sufficiently low so that each transmitter is able to decode the modulo-$q$ sum of the other two sub-messages\footnote{{\revcol It is understood that the messages belong to a finite field $\FF_q$.}}. For instance, Transmitter 1 can decode $M_{2,b}\oplus M_{3,b}$ with high probability. Let $T_{i,b}$ denote the decoded sum at Transmitter $i$, where $i=1,2,3$. Then, for binary messages, $T_{1,b}\oplus T_{2,b} \oplus T_{3,b}=0$ with high probability. As a result, $(T_1,T_2,T_3)$ can be interpreted as additive conferencing common parts (see Definition \ref{def: m-additive}). Building upon this intuition, in what follows,  we propose a coding strategy for communications over 3-user MAC-FB. Further, we derive a new commutable achievable rate region for the three-user MAC with feedback problem.

\begin{figure}[hbtp]

\centering
\includegraphics[scale=1]{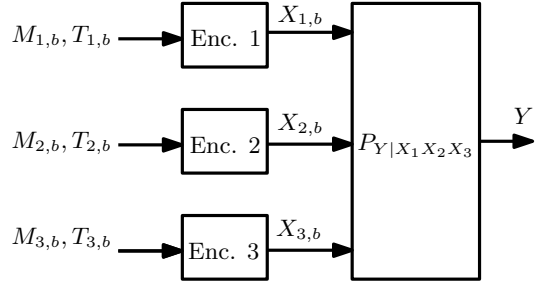}
\caption{Applications of conferencing common information for communications over MAC-FB. The new sub-messages at block $b$ are denoted by $M_{i,b}$. At the end of block $b-1$, each transmitter decode the modulo-two sum of the other two transmitters. The decoded sums are denoted by $T_{i,b}, i=1,2,3$. Note that $T_{1,b}\oplus T_{2,b} \oplus T_{3,b}=0$ with probability close to one.}
\label{fig:MAC_FB conf comm}
\end{figure}


{\revcol
We start by the following definition to characterize an achievable rate region.}
 \begin{definition}\label{def:MAC-FB P}
For a prime $q$ and a given set  $\mathcal{U}$ and a three-user MAC-FB $\MACFB$, define $\mathscr{P}$ as the collection of all distributions on $\mathcal{U}\times \FF_q^6 \times \underline{\mathcal{X}}, {\mathcal{Y}}$ factoring as
\begin{align}\label{eq:MAC-FB P}
P_U P_{V_1V_2V_3}\prod_{i=1}^3 P_{T_i} P_{X_i|UT_i V_i} P_{Y|X_1X_2X_3},
\end{align}
where  $(T_1, T_2, T_3)$ are mutually independent with uniform distribution over a finite field $\FF_q$, $(V_1, V_2, V_3)$ are pairwise independent each with uniform distribution over $\FF_q$, and $P_{V_1V_2V_3}(v_1,v_2,v_3)=\frac{1}{q^2}\11\{ v_1\oplus v_2 \oplus v_3=0\}$, {\revcol and for any $i\in \mathcal{T}^c$, we have $P_{X_i|UT_i V_i}=P_{X_i}$ for some distribution on $\mathcal{X}_i$.}
\end{definition}

{
Fix a distribution $\P\in \mathscr{P} $ that factors as in \eqref{eq:MAC-FB P}.
Denote $S_i=(X_i, T_i, V_i)$ for $i=1,2,3$. Consider two sets of random variables $(U, S_1,S_2,S_3, Y)$ and $(\tilde{U},\tilde{S}_1, \tilde{S}_2, \tilde{S}_3, \tilde{Y})$. {\revcol We describe the joint distribution of these random variables.} The distribution of each set of the random variables is $\P$, i.e.,  
\begin{align*}
P_{U S_1 S_2 S_3  Y}=P_{\tilde{U} \tilde{S}_1 \tilde{S}_2\tilde{S}_3 \tilde{Y}}=\P.
\end{align*}
In addition, conditioned on $(\tilde{U},\tilde{S}_1, \tilde{S}_2, \tilde{S}_3, \tilde{Y})$ we have 
\begin{align}\label{eq:P P tilde joint dist}
P_{U S_1 S_2 S_3 Y| \tilde{U} \tilde{S}_1 \tilde{S}_2\tilde{S}_3 \tilde{Y}}=P_U P_{V_1V_2V_3| \tilde{T}_1 \tilde{T}_2\tilde{T}_3 }\prod_{i=1}^3 P_{T_i} P_{X_i|UT_i V_i} P_{Y|X_1X_2X_3},
\end{align}
with $\underline{V}=\underline{\tilde{T}}\bf A$ with probability one, {\revcol where $\bf A$ is a $3\times 3$ matrix with elements in $\FF_q$ and }the multiplications are modulo $q$. Further, $\bf A$ is chosen such that $P_{V_1V_2V_3}=P_{\tilde{V}_1,\tilde{V}_2, \tilde{V}_3}$. 

\begin{definition} \label{def:MAC-FB rate}
Given a MAC-FB $\MACFB$, let $\mathcal{R}_{\text{MAC-FB}}$ be the set of triplets $(R_1,R_2,R_3)$ for which there exist $\alpha\in (0,1)$, random variables $(U, S_1,S_2,S_3, Y)$ and $(\tilde{U},\tilde{S}_1, \tilde{S}_2, \tilde{S}_3, \tilde{Y})$ distributed according to \eqref{eq:P P tilde joint dist} for some $P\in\mathscr{P} $ and matrix $\mathbf{A}\in \FF_q^{3\times 3}$ and mutually independent random variables $(W_1,W_2,W_3)$ which are also independent of other random variables such that the following inequalities hold for any subset $\mathcal{B}\subseteq \{1,2,3\}$ and any  distinct elements $i, j, k \in \{1,2,3\}$:  
\begin{align*}
\alpha H(W_i)&= R_i,\\
\alpha H(W_{\mathbf{A}_i}|W_i) &\leq I(T_{\mathbf{A}_i}; Y|U T_i V_i X_i),\\
\alpha H(W_j,W_k|W_{\mathbf{A}_i}, W_i)& \leq I(\tilde{T}_j \tilde{X}_j \tilde{T}_k \tilde{X}_k; Y \tilde{Y}| \tilde{U} \tilde{S}_i {U} {S}_i  \tilde{V}_j \tilde{V}_k ),\\
\alpha H(W_{\mathcal{B}}) &\leq I(X_\mathcal{B}; Y| U S_{\mathcal{B}^c} \tilde{V}_1, \tilde{V}_2, \tilde{V}_3)+ I(U;Y),
\end{align*}
{\revcol where $W_{\mathbf{A}_i}$ and $T_{\mathbf{A}_i}, i=1,2,3,$ are the $i$th element of the vector $\underline{W}\mathbf{A}$ and $\underline{T}\mathbf{A}$, respectively.}

\end{definition}

\begin{theorem}\label{thm: MAC-FB achievable}
For a MAC-FB $\MACFB$, the rate-region $\mathcal{R}_{\text{MAC-FB}}$  is achievable.
%
\end{theorem}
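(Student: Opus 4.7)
The plan is to use a $B$-block Markov superposition scheme in which each block employs the structured (quasi-linear) coding idea of Theorem \ref{them: achievable-rate-for -proposed scheme}, with the roles of ``sources'' and ``conferencing common parts'' played by fresh message fragments and by linear combinations of the previous block's fragments that each transmitter decodes from the feedback. Split each message $M_i$ into $B$ independent sub-messages $M_{i,1},\ldots,M_{i,B}\in\FF_q^{k_i}$, so that after $B$ blocks of length $n$ the effective rate is $R_i\approx\frac{B}{B+1}\cdot\frac{k_i\log q}{n}=\alpha H(W_i)$ once one pads one extra block for closure; this explains the factor $\alpha\in(0,1)$. In each block $b$, the sub-message $M_{i,b}$ is interpreted as the realization of the random variable $W_{i}$, and the sub-messages decoded from block $b-1$ via feedback are used to produce the conferencing layer for block $b$.

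At each block $b$, I would generate the codebooks in three layers, mirroring the proof of Theorem \ref{them: achievable-rate-for -proposed scheme}: (i) a unstructured cloud codebook $\bfU^n$ carrying the fully-cooperative common information (the reconciled information from block $b-1$) that is a function only of $\tilde U$ and hence available to all encoders in $\mathcal T$; (ii) a structured affine layer, the same quasi-linear code at each transmitter, producing $\bfT_{i,b}^n=M_{i,b}\mathbf G\oplus \bfB_i$ together with $\bfV_{i,b}^n$ obtained by an affine image of the previous block's decoded linear combination (through the matrix $\mathbf A$) so that $\bfV_{1,b}^n\oplus\bfV_{2,b}^n\oplus\bfV_{3,b}^n=\mathbf 0$ with probability one; and (iii) the outer layer $\bfX_{i,b}^n$ generated i.i.d.\ conditioned on $(\bfU^n,\bfT_{i,b}^n,\bfV_{i,b}^n,M_{i,b})$ according to $P_{X_i|UT_iV_i}$, with $i\in\mathcal T^c$ encoders restricted to be independent of $(U,T_i,V_i)$ as enforced by the definition of $\mathscr P$. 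After block $b$ is transmitted, each encoder $i\in\mathcal T$ uses the feedback $\bfY_b^n$ together with its own $(\bfU^n,\bfT_{i,b}^n,\bfX_{i,b}^n,\bfV_{i,b}^n)$ to jointly typicality-decode the linear combination $\bfT_{\mathbf A_i,b}^n$ of the other users' sub-messages; by the standard random-coset analysis (as in Example \ref{ex: additive source MAC} and the proof of Theorem \ref{them: achievable-rate-for -proposed scheme}), this succeeds provided $\alpha H(W_{\mathbf A_i}|W_i)\le I(T_{\mathbf A_i};Y|UT_iV_iX_i)$, which is precisely the second family of inequalities.

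At the receiver I would use backward decoding over pairs of consecutive blocks. Starting from the last block, the decoder already knows the reconciled common information transmitted forward into the next block; it then jointly decodes the sub-messages of blocks $b$ and $b+1$ from $(\bfY_b^n,\bfY_{b+1}^n)$ using the joint typicality relation between the ``next-block'' random variables $(U,S_1,S_2,S_3,Y)$ and the ``current-block'' ones $(\tilde U,\tilde S_1,\tilde S_2,\tilde S_3,\tilde Y)$ that was built into the joint distribution in \eqref{eq:P P tilde joint dist}. A case analysis over error events in which a subset $\mathcal B\subseteq\{1,2,3\}$ of sub-messages is in error, combined with the dual role played by the linear combinations (they contribute $I(U;Y)$ via the cooperative layer of the next block but also leak $I(T_{\mathbf A_i};\cdot)$ in the current one), yields exactly the remaining two families of inequalities: the pairwise two-block decoding bound and the sum-style bound $\alpha H(W_\mathcal B)\le I(X_\mathcal B;Y|US_{\mathcal B^c}\tilde V_1\tilde V_2\tilde V_3)+I(U;Y)$. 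Letting $B\to\infty$ drives $\alpha\to 1$ in the usual block-Markov sense.

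\textbf{Main obstacle.} The delicate step is the error analysis of the two coupled structured codes across consecutive blocks. Because $\bfV_{i,b+1}^n$ is a deterministic affine image of the decoded linear combinations from block $b$, the codewords $\bfX_{i,b+1}^n$ are not drawn independently in the uniform ensemble over the cosets, and a careful computation of the joint coset distribution (using the same ``projection to pairwise marginals are uniform'' trick that underlies the analysis in Appendix \ref{sec: proof of them lin_ces}) is needed to evaluate the probability that an atypical triple of decoded sub-messages is compatible with the observed $(\bfY_b^n,\bfY_{b+1}^n)$. The rest --- codebook generation, typicality checking, and the union bound over error events --- is then parallel to the proof of Theorem \ref{them: achievable-rate-for -proposed scheme}, extended across two blocks.
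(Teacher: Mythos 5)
Your overall architecture (block-Markov operation, a shared generator matrix $\mathbf{G}$ for the affine layer, transmitter-side decoding of the linear combinations $W_{\mathbf{A}_i}$ from feedback, and a superimposed cloud codebook for $U$) matches the paper's scheme, and your second family of inequalities arises for the same reason as in the paper. However, there is a genuine gap in how you handle the cooperative layer. In your sketch the transmitters only ever decode the \emph{linear combinations} $W_{\mathbf{A}_i}$; but then transmitter $i$ knows only the pair $(W_i, W_{\mathbf{A}_i})$, which is a \emph{different} pair at each transmitter, so the encoders share no common random variable from which to select a common cloud codeword --- the ``reconciled information from block $b-1$'' that you place on $U$ is never actually common to the encoders. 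The paper closes this hole with a \emph{second} transmitter-side decoding step that your sketch omits: using two consecutive feedback blocks, each transmitter $i$ decodes the \emph{individual} messages $(W_j, W_k)$ of the other two users (see \eqref{eq: Type 2 decoding error}), and it is exactly this step that produces the third family of inequalities. Indeed, the mutual information there is conditioned on $\tilde{U}\tilde{S}_i U S_i$, i.e., on user $i$'s own inputs $S_i=(X_i,T_i,V_i)$ in both blocks --- side information that only a transmitter possesses. Your proposal instead attributes that inequality to receiver-side backward decoding, which cannot be right, since the receiver never knows any $S_i$.

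Relatedly, the paper's receiver does not perform backward decoding at all; it runs a Cover--Leung-style list scheme forward with a two-block delay. The receiver forms the list $\mathcal{L}[l-2]$ of \eqref{eq:L list} of message triples jointly typical with $Y_{[l-2]}$; because all three transmitters have (by the second decoding step) recovered the full message triple, they agree on its index $m$ in that list and cooperatively transmit $\bfu_{[l]}(m)$ at block $l$; the receiver then merely decodes $U$ from $Y_{[l]}$. The last family of inequalities is precisely the requirement that the list size, which concentrates around $2^{n\max_{\mathcal{B}}F_{\mathcal{B}}}$ with $F_{\mathcal{B}}$ as in \eqref{eq:F_A def}, stay below $2^{nI(U;Y)}$. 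A backward-decoding variant might conceivably be made to work, but you would have to re-derive the receiver bounds, and more importantly you would still need a mechanism by which the transmitters establish common knowledge for $U$, which your sketch does not provide. A minor additional discrepancy: in the paper $\alpha$ is the quasi-linear code parameter $k/n$, not a $B/(B+1)$ block-Markov loss factor.
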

\begin{proof}
The proof is given in Appendix \ref{sec: thm 1}.
\end{proof}
}



\subsection{Necessity of Structured Codes for MAC-FB}

In this section, we show that coding strategies based on structured codes are necessary for certain instances of MAC with feedback. We first provide an example of a MAC with feedback. Then, we apply Theorem \ref{thm: MAC-FB achievable}  and  show that the inner bound achieves optimality.

\begin{example}\label{ex: example}
Consider the three-user MAC-FB problem depicted in Figure \ref{fig: Exp. setup}. In this setup, there is a MAC with three pairs of binary inputs, where the $i$th input is denoted by the pair $(X_{i1}, X_{i2})$ for $i=1,2,3$. The output of the channel is denoted by a binary vector $(Y_1, Y_{21}, Y_{22})$. Assume that noiseless feedback is available only at the third transmitter.  
\begin{figure}[hbtp]
\centering
\includegraphics[scale=0.9]{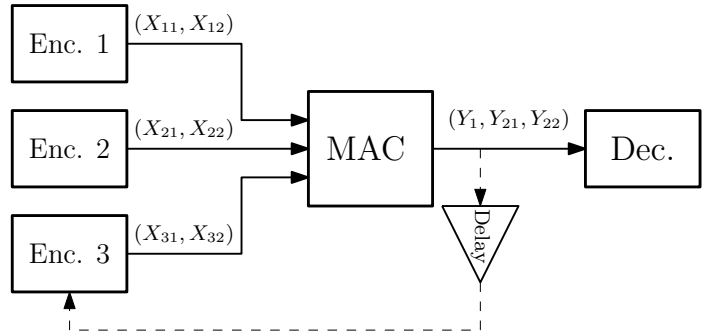}
\caption{The MAC with feedback setup for Example \ref{ex: example}.}
\label{fig: Exp. setup}
\end{figure}

The MAC in this setup consists of two parallel channels. The first channel is a three-user binary additive MAC with inputs $(X_{11}, X_{21}, X_{31})$, and output $Y_1$. The transition probability matrix of this channel is described by the following relation:
\begin{align*}
Y_1=X_{11}\oplus X_{21} \oplus X_{31}\oplus \tilde{N}_\delta,
\end{align*}
where $\tilde{N}_\delta$ is a Bernoulli random variable with bias $\delta$, and is independent of the inputs. The second channel is a MAC with $(X_{12}, X_{22}, X_{32})$ as the inputs, and  $(Y_{21},Y_{22})$ as the output. The conditional probability distribution of this channel satisfies
\begin{align}
 (Y_{21},Y_{22})=\begin{cases} 
      (X_{12}\+N_\delta, X_{22}\+N'_{\delta}), &\text{if}~ X_{32} = X_{12}\+ X_{22}, \\
      (N_{1/2}, N'_{1/2}), &  \text{if}~ X_{32} \neq X_{12}\+ X_{22},
   \end{cases}
\end{align}
where $N_{\delta}, N'_{\delta}, N_{1/2}$ and $N'_{1/2}$ are independent Bernoulli random variables with parameter $\delta, \delta, \frac{1}{2}$, and $\frac{1}{2}$, respectively.  The relation between the output and the input of the channel is depicted in Figure \ref{fig: Exp. second chann}. The channel operates in two states. If the condition $X_{31}=X_{12}\oplus X_{22}$ holds, the channel would be in the first state (the left channel in Figure \ref{fig: Exp. second chann}); otherwise it would be in the second state (the right channel in Figure \ref{fig: Exp. second chann}). In this channel, $N_\delta $ and $N'_\delta$ are Bernoulli random variables with identical bias $\delta$. Whereas, $N_{1/2} $ and $N'_{1/2}$ are Bernoulli random variables with bias $\frac{1}{2}$. We assume that $\tilde{N}_\delta, N_\delta, N'_\delta,N_{1/2} $, and $N'_{1/2}$ are mutually independent, and are independent of all the inputs. 
\begin{figure}[hbtp]
\centering
\includegraphics[scale=0.9]{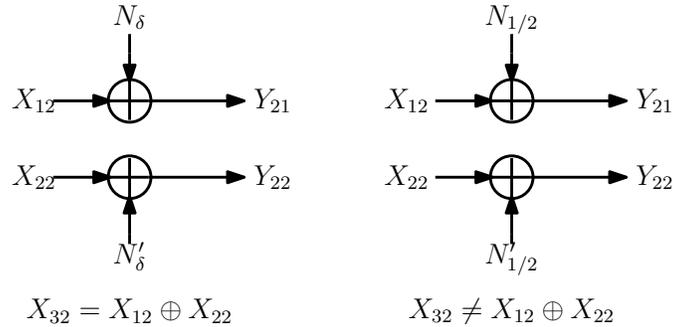}
\caption{The second channel for Example \ref{ex: example}. If the condition $X_{31}=X_{12}\oplus X_{22}$ holds, the channel would be the one on the left; otherwise it would be the right channel.}
\label{fig: Exp. second chann}
\end{figure}
\end{example}

We use linear codes to propose a new coding strategy for the setup given in Example \ref{ex: example}.  The scheme uses a large number $L$ of blocks{\revcol , each of length $n$}. Each encoder has two outputs, one for each channel. We use identical linear codes  with length $n$ and rate $\frac{k}{n}$ for each transmitter.  The coding scheme at each block is performed in two stages. In the first stage, each transmitter encodes the fresh message at the beginning of the block $l$, where $1\leq l \leq L$. The encoding process is performed using identical linear codes. At the end of block $l$, feedback is received by the third user. In stage 2, the third user uses the feedback from the first channel (that is $Y_1$) to decode the binary sum of the messages of the other encoders. Then, it encodes the summation, and sends it through its second output. If the decoding process is successful at the third user, then the relation $X_{32}=X_{12}\oplus X_{22}$ holds with probability one. This is because identical linear codes are used to encode the messages.  As a result of this equality, the channel in Figure \ref{fig: Exp. second chann} is in the first state with probability one. 
In the following theorem, we show that the rate $$(1-h(\delta), 1-h(\delta), 1-h(\delta))$$ is achievable using this strategy. Further, we prove in the followng theorem that any coding scheme achieving these rates must have codebooks that are almost closed under the binary addition. Since unstructured random codes do not have this property, any coding scheme solely based in them is suboptimal.  
%
%
%

\begin{theorem}\label{thm:MAC-FB structured}
For the channel given in Example \ref{ex: example}, the rate triple $(1-h(\delta), 1-h(\delta), 1-h(\delta))$ is achievable if and only if 1) user 3 decodes $X_1\+X_2$ with average probability of error approaching zero, and 2) the codebooks in user 1 and 2 must satisfy 
\begin{align*}
\lim_{N\rightarrow \infty}\frac{1}{N}\big| ~\log ||\mathcal{C}_{12}\oplus \mathcal{C}_{22}||- \log ||\mathcal{C}_{12}||~\big| =0, \quad \text{for} ~~i=1,2.
\end{align*}
\end{theorem}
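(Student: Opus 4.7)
\emph{Direct part.} I construct a block-Markov scheme using $L$ super-blocks of length $n$ and a single linear generator matrix $G$ of rate close to $1-h(\delta)$ at every transmitter for both output coordinates. In block $\ell$, fresh messages go on the first-channel inputs, $X_{i1}^{(\ell)} = GM_i^{(\ell)}$ for $i=1,2,3$, while the second-channel inputs carry one-block-delayed information: $X_{i2}^{(\ell)} = GM_i^{(\ell-1)}$ for $i=1,2$, and $X_{32}^{(\ell)} = G\,\widehat{(M_1^{(\ell-1)} \oplus M_2^{(\ell-1)})}$, where the hat denotes user~3's estimate produced from the first-channel feedback $Y_1^{(\ell-1)}$ after subtracting its own codeword $GM_3^{(\ell-1)}$. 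Because identical linear codes are used, the first channel appears to user~3 as a BSC$(\delta)$ carrying $G(M_1^{(\ell-1)} \oplus M_2^{(\ell-1)})$, so the intra-block decoding succeeds with vanishing error. A correct estimate yields $X_{32}^{(\ell)} = X_{12}^{(\ell)} \oplus X_{22}^{(\ell)}$, putting the second channel in its good state with probability approaching one, and the receiver then recovers $(M_1^{(\ell-1)}, M_2^{(\ell-1)})$ from the two BSC$(\delta)$ outputs $Y_{21}^{(\ell-1)}, Y_{22}^{(\ell-1)}$ and $M_3^{(\ell-1)}$ from $Y_1^{(\ell-1)}$. A standard block-Markov rate analysis delivers the triple $(1-h(\delta), 1-h(\delta), 1-h(\delta))$.

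\emph{Converse.} Consider any achieving scheme with error probability $\epsilon_N \to 0$. Let $E_n = \mathbf{1}\{X_{32,n} = X_{12,n} \oplus X_{22,n}\}$ be the good-state indicator. Using Fano's inequality together with the parallel-channel decomposition and the per-letter mutual-information bound $I(X_{12,n}X_{22,n}X_{32,n}; Y_{21,n}Y_{22,n}) \leq 2(1-h(\delta))\Pr(E_n=1)$, obtained by conditioning on the channel's state, one derives
\[
3N(1-h(\delta)) - N\epsilon_N' \;\leq\; N(1-h(\delta)) + 2(1-h(\delta))\sum_{n=1}^{N}\Pr(E_n = 1),
\]
so that $\frac{1}{N}\sum_n \Pr(E_n=1) \to 1$ and, via a union bound, $\Pr\{X_{32}^N = X_{12}^N \oplus X_{22}^N\} \to 1$. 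Since $X_{32}^N$ is a causal function of $(M_3, Y^{N-1})$, Fano's inequality yields $\tfrac{1}{N}H(X_{12}^N \oplus X_{22}^N \mid M_3, Y^{N-1}) \to 0$, establishing condition~(1). For condition~(2), I observe that the above event confines $X_{12}^N(M_1) \oplus X_{22}^N(M_2)$ to user~3's effective output range. Combining this containment with the requirement that the receiver decode $M_3$ reliably at rate $1-h(\delta)$---which bounds the effective cardinality of user~3's output, conditioned on its own message, by $2^{N(1-h(\delta))+o(N)}$---one obtains $\log|\mathcal{C}_{12} \oplus \mathcal{C}_{22}| \leq N(1-h(\delta)) + o(N) = \log|\mathcal{C}_{12}| + o(N)$. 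Together with the trivial bound $|\mathcal{C}_{12} \oplus \mathcal{C}_{22}| \geq |\mathcal{C}_{12}|$, this yields the claimed limit.

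\emph{Main obstacle.} The hardest step is the last one: translating the operational statement that user~3's transmissions implement the codeword XOR into the combinatorial cardinality inequality on the sumset. User~3's codebook is feedback-dependent and can, a priori, be exponentially richer than $2^{NR_3}$; the delicate task is to show that only $2^{NR_3 + o(N)}$ codewords are \emph{effectively} used under the induced joint distribution. The key leverage is the sum-rate saturation from Step~1 combined with the requirement that the receiver recover $M_3$ even when $(M_1, M_2)$ is held fixed---an argument that must carefully separate user~3's own rate contribution ($1 - h(\delta)$ bits per channel use) from the passive relaying of feedback information about $(M_1, M_2)$, so that the effective support of $X_{32}^N$ (given $M_3$) is correctly exponentially bounded before the sumset containment is invoked.
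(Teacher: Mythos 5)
Your direct part is essentially the paper's own scheme (identical linear codes, block-Markov relaying of the decoded XOR through user 3's second output), and your sum-rate saturation step is also the paper's: it correctly forces the \emph{symbol-averaged} mismatch probability $\frac{1}{N}\sum_{n}\Pr\{X_{32,n}\neq X_{12,n}\oplus X_{22,n}\}$ to vanish as the rate gap $\epsilon\to 0$. The first genuine gap is the very next sentence: you cannot pass from this averaged statement to $\Pr\{X_{32}^N=X_{12}^N\oplus X_{22}^N\}\to 1$ ``via a union bound.'' The saturation argument only gives $\sum_{n}\Pr\{X_{32,n}\neq X_{12,n}\oplus X_{22,n}\}\leq N\kappa(\epsilon)$ with $\kappa(\epsilon)\to 0$ as $\epsilon\to 0$; for any fixed scheme this grows linearly in $N$, so the union bound is vacuous, and the block-level claim is in general false (nothing prevents an achieving scheme from having, say, independent per-letter mismatches each of probability $\kappa(\epsilon)$, whose block-match probability $(1-\kappa)^N$ decays exponentially). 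This is precisely why condition 1) of the theorem is stated, and proved in the paper, for the \emph{average} per-letter error $\bar P_e=\frac{1}{N}\sum_l \Pr\{X_{32,l}\neq X_{12,l}\oplus X_{22,l}\}$, and why the paper's cardinality bound is derived with the generalized Fano inequality (Kramer's Lemma 4.3), which is tailored to exactly this averaged criterion.

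The gap propagates into your condition 2): the containment step (``the above event confines $X_{12}^N\oplus X_{22}^N$ to user 3's effective output range'') presupposes exact block equality; with only per-letter average agreement, the sum codeword is merely close in normalized Hamming distance to user 3's outputs, and the counting would need an extra covering/blow-up argument you do not supply. Moreover, two essential ingredients are asserted rather than proved. First, the $2^{N(1-h(\delta))+o(N)}$ bound on user 3's effective output range given $M_3$: the paper gets the sumset bound by a different route, showing that $X_{32}^N$ may be taken to be a function of $Z^N=Y_1^N\ominus X_{31}^N$ alone (the second-channel feedback is provably useless for reproducing the XOR, and $M_3$ is independent of $Z^N$), which turns the problem into point-to-point coding over a BSC($\delta$) with codebook $\mathcal{C}_{12}\oplus\mathcal{C}_{22}$ and yields $\frac{1}{N}\log||\mathcal{C}_{12}\oplus\mathcal{C}_{22}||\leq 1-h_b(\delta)+\eta(\epsilon)$ by generalized Fano. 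Second, your final chain $N(1-h(\delta))+o(N)=\log||\mathcal{C}_{12}||+o(N)$ silently uses the lower bound $\frac{1}{N}\log||\mathcal{C}_{12}||\geq 1-h_b(\delta)-o(1)$, which is not trivial: a priori user 1 could concentrate its information on its first output and use a tiny second-output codebook. The paper devotes a separate lemma to this, arguing that the first channel is rate-saturated by $M_3$, so the receiver must decode $M_1,M_2$ from the second channel, whence Fano lower-bounds $||\mathcal{C}_{12}||$ and $||\mathcal{C}_{22}||$. Your ``main obstacle'' paragraph correctly identifies this difficulty but does not resolve it, so the converse as written is incomplete.
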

\begin{proof}
The proof is given in Appendix \ref{appx:MAC-FB structured proof}.
\end{proof}

\section{Conclusion}\label{sec: conclusion}
A new form of common information, called ``conferencing common information'', is defined among triplets of random variables. Based on this notion, two coding strategies are proposed for three-user version of two problems: transmission of correlated sources over \ac{MAC}, and \ac{MAC} with feedback.  Further,  achievable rate regions of such strategies are characterized in terms of single-letter information quantities.  It is shown analytically that the proposed strategies outperform conventional unstructured random coding approaches in terms of achievable rates.    

~\\
\noindent \textbf{Acknowledgment:} We would like to thank Farhad Shirani of New York University for extensive and insightful discussions related to this work. 
 
\appendices

\section{Proof of Theorem \ref{them: achievable-rate-for -proposed scheme}}\label{sec: proof of them lin_ces}
\begin{proof}
There are two error events, $E_0$ and $E_1$. $E_0$ occurs if no  triple $\underline{\mathbf{\tilde{s}}}$ was found. $E_1$ occurs  if 
there exists $\underline{\mathbf{\tilde{s}}} \neq \underline{\mathbf{s}}$ such that equation \eqref{eq:MAC scr decoder} is satisfied. 
We consider a special case in which all the uni-variate common parts are trivial and that $T_i=S_i, i=1,2,3$. This implies that $S_1\+_qS_2\+_q S_3=0$ with probability one. The proof for the general case follows by adopting this proof and  the standard arguments as in \cite{CES}.

Suppose $\bf{v}_i(\cdot)$ and $\bf{x}_i(\cdot)$ are the realizations of random functions generated as in the outline of the proof of Theorem \ref{them: achievable-rate-for -proposed scheme}.  Using standard arguments one can show that $E_0\rightarrow 0$ as $n\rightarrow \infty$. We find the condition under which  $P(E_1 \cap E_0^c)\rightarrow 0$. For a given $\underline{\mathbf{s}}\in A_{\epsilon_1}(\underline{S})$, using the definition of $E_1$ and the union bound we obtain,
\begin{align*}
P(E_1 \cap E_0^c|\underline{\mathbf{s}}) \leq &  \sum_{ (\underline{\mathbf{v}},\underline{\mathbf{x}},\mathbf{y})\in  A_{\epsilon_2}(\underline{V},\underline{X},Y|\underline{\mathbf{s}})} \mathbbm{1}\{\mathbf{v}_i=\mathbf{v}_i(\mathbf{s}_i), \mathbf{x}_i=\mathbf{x}_i(\mathbf{s}_i, \mathbf{v}_i), ~ i=1,2,3\} P^n_{Y|\underline{X}}(\mathbf{y}|\underline{\mathbf{x}})\\
 &\sum_{\substack{(\underline{\mathbf{\tilde{s}}},\underline{\tilde{\mathbf{v}}},\underline{\tilde{\mathbf{x}}})\in A_{\epsilon_3}(\underline{S},\underline{V},\underline{X}|\mathbf{y})\\ \underline{\tilde{\mathbf{s}}} \neq \underline{\mathbf{s}}}} \mathbbm{1}\{\tilde{\mathbf{v}}_j=\bfv_j(\tilde{\mathbf{s}}_j), \tilde{\mathbf{x}}_j=\bfx_j(\tilde{\mathbf{s}}_j, \tilde{\mathbf{v}}_j), j=1,2,3\}
\end{align*}

Taking expectation over random vector functions  $\bfX_i(,)$ and $\bfV_i()$ gives,
\begin{align}\label{eq: Pe}
p_e(\underline{\mathbf{s}})=\EE\{P(E_{1}|\underline{\mathbf{s}})\} \leq &  \sum_{(\underline{\mathbf{v}}, \underline{\mathbf{x}},\mathbf{y})\in A_{\epsilon_2}(\underline{V}, \underline{X},Y|\underline{\mathbf{s}})} P^n_{Y|\underline{X}}(\mathbf{y}|\underline{\mathbf{x}})
 \sum_{\substack{(\underline{\mathbf{\tilde{s}}},\underline{\tilde{\mathbf{v}}},\underline{\tilde{\mathbf{x}}})\in A_{\epsilon_3}(\underline{S},\underline{V},\underline{X}|\mathbf{y})\\ \underline{\tilde{\mathbf{s}}} \neq \underline{\mathbf{s}}}} \\ \nonumber
& P\{\mathbf{v}_l=\bfV_l(\mathbf{s}_l), \mathbf{x}_l=\bfX_l(\mathbf{s}_l, \mathbf{v}_l), \tilde{\mathbf{v}}_l=\bfV_l(\tilde{\mathbf{s}}_l), \tilde{\mathbf{x}}_l=\bfX_l(\tilde{\mathbf{s}}_l, \tilde{\mathbf{v}}_l) ~ \mbox{for } l=1,2,3\}
\end{align}
Let 
\begin{equation}\label{eq: epsilon}
\epsilon=\max_{i\in [1,3]} \epsilon_i,
\end{equation}
where $\epsilon_i$ is as in the above summations.  
Note that $V_i(\cdot)$ and $\mathbf{X}_i(\cdot, \cdot)$ are generated independently. So the most inner term in (\ref{eq: Pe}) is simplified to 
\begin{align}\label{eq: pp 1}
P\{\mathbf{v}_j=\bfV_j(\mathbf{s}_j), \tilde{\mathbf{v}}_j=\bfV_j(\tilde{\mathbf{s}}_j)~ j=1,2\} P\{\mathbf{x}_l=\bfX_l(\mathbf{s}_l, \mathbf{v}_l), \tilde{\mathbf{x}}_l=\bfX_l(\tilde{\mathbf{s}}_l, \tilde{\mathbf{v}}_l) ~  l=1,2,3\}.
\end{align}
Note that $j=3$ is redundant because, $\mathbf{v}_3 \+_q \mathbf{v}_1\oplus_q \mathbf{v}_2=\bf0$ and $\mathbf{\tilde{v}}_3 \+_q \mathbf{\tilde{v}}_1\+_q \mathbf{\tilde{v}}_2=\bf0$. By definition, $\bfV_j(\mathbf{s}_j)=\mathbf{s}_j\mathbf{G}+\mathbf{B}_j, j=1,2$, where $\bfB_1,\bfB_2$ are uniform and  independent of $\mathbf{G}$. Then 
\begin{align}\label{eq: PP 2}
P\{\mathbf{v}_j=\bfV_j(\mathbf{s}_j), \tilde{\mathbf{v}}_j=\bfV_j(\tilde{\mathbf{s}}_j),~ j=1,2\} = \frac{1}{q^{2n}} P\{(\tilde{\mathbf{s}}_j-\mathbf{s}_j)\mathbf{G}=\tilde{\mathbf{v}}_j-\mathbf{v}_j, ~j=1,2\}
\end{align}
The following lemma determines the above term.
\begin{lem}\label{lem: prob of sG}
Suppose $\mathbf{G}$ is a $n\times m$ matrix with elements generated randomly and uniformly from $\FF_q$. If $\mathbf{s}_1$ or $\mathbf{s}_2$ is nonzero, the following holds: 
{ \begin{align}\label{eq: joint prob G}
P\{\mathbf{s}_j\mathbf{G}=\mathbf{v}_j, ~j=1,2\}= \left\{ \begin{array}{ll}
\mathbbm{1}\{\mathbf{v}_j=\mathbf{0}, ~ l=1,2\}, & \mbox{if}~ \mathbf{s}_1 = \mathbf{0},  \mathbf{s}_2=\mathbf{0}.\\
q^{-n}\mathbbm{1}\{\mathbf{v}_j=\mathbf{0}\} , & \mbox{if}~ \mathbf{s}_j=\mathbf{0}, \mathbf{s}_{j^c}\neq \mathbf{0}. \\
q^{-n}\mathbbm{1}\{\mathbf{v}_1 = a \mathbf{v}_2\}, & \mbox{if}~ \mathbf{s}_1 \neq \mathbf{0},  \mathbf{s}_2 \neq \mathbf{0}, \mathbf{s}_1= a \mathbf{s}_2,~ a\in \FF_q.\\
q^{-2n}, & \mbox{if}~ \mbox{otherwise}.\\
\end{array} \right.
\end{align} }
\end{lem}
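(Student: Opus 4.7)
The plan is to reduce the joint probability to a product of per-column probabilities, and then analyze each column via the rank of the $2\times n$ matrix whose rows are $\mathbf{s}_1,\mathbf{s}_2$. Since the entries of $\mathbf{G}$ are i.i.d.\ uniform on $\FF_q$, its columns $\mathbf{G}_{\cdot 1},\ldots,\mathbf{G}_{\cdot m}$ are i.i.d.\ uniform on $\FF_q^n$. The $j$-th column of the stacked product $\begin{pmatrix}\mathbf{s}_1\mathbf{G}\\\mathbf{s}_2\mathbf{G}\end{pmatrix}$ equals $\mathbf{S}\,\mathbf{G}_{\cdot j}$, where $\mathbf{S}=\begin{pmatrix}\mathbf{s}_1\\\mathbf{s}_2\end{pmatrix}\in\FF_q^{2\times n}$. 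Consequently,
\[
\PP\{\mathbf{s}_1\mathbf{G}=\mathbf{v}_1,\ \mathbf{s}_2\mathbf{G}=\mathbf{v}_2\}=\prod_{j=1}^{m}\PP\!\left\{\mathbf{S}\mathbf{g}=\begin{pmatrix}v_{1,j}\\ v_{2,j}\end{pmatrix}\right\},
\]
where $\mathbf{g}$ is uniform on $\FF_q^n$. Hence it suffices to compute, for each fixed target $(a,b)\in\FF_q^2$, the probability $\PP\{\mathbf{S}\mathbf{g}=(a,b)^{\top}\}$, and then exponentiate by the number of columns.

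Next, I would invoke the standard linear-algebra fact that if $L:\FF_q^n\to\FF_q^k$ is any $\FF_q$-linear map of rank $r$ and $\mathbf{g}$ is uniform on $\FF_q^n$, then $L(\mathbf{g})$ is uniformly distributed over the image $\mathrm{Im}(L)$, which has cardinality $q^r$. This follows at once from the fact that the fibers of $L$ are cosets of $\ker L$ and therefore all have size $q^{n-r}$. Applying this to the map $\mathbf{g}\mapsto \mathbf{S}\mathbf{g}$ with rank $r=\mathrm{rank}(\mathbf{S})\in\{0,1,2\}$ yields
\[
\PP\{\mathbf{S}\mathbf{g}=(a,b)^{\top}\}=q^{-r}\,\mathbbm{1}\{(a,b)^{\top}\in\mathrm{Im}(\mathbf{S})\}.
\]

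Finally, I would enumerate the four cases of the lemma according to $r$ and the structure of $\mathrm{Im}(\mathbf{S})$. If $\mathbf{s}_1=\mathbf{s}_2=\mathbf{0}$, then $r=0$ and $\mathrm{Im}(\mathbf{S})=\{(0,0)\}$, giving the first case. If exactly one of $\mathbf{s}_1,\mathbf{s}_2$ is zero (say $\mathbf{s}_j=\mathbf{0}$, $\mathbf{s}_{j^c}\neq \mathbf{0}$), then $r=1$ and $\mathrm{Im}(\mathbf{S})$ is the coordinate axis where the $j$-th entry vanishes, so the single-column probability is $q^{-1}\mathbbm{1}\{a_j=0\}$ where $a_j$ is the $j$-th coordinate of the target. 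If both are nonzero and $\mathbf{s}_1=a\,\mathbf{s}_2$ with $a\in\FF_q^\times$, then $r=1$ and $\mathrm{Im}(\mathbf{S})=\{(a t,t)^{\top}:t\in\FF_q\}$, so the single-column probability is $q^{-1}$ on the line $v_1=a v_2$. In the remaining case $\mathbf{s}_1$ and $\mathbf{s}_2$ are linearly independent, so $r=2$, $\mathrm{Im}(\mathbf{S})=\FF_q^2$, and the probability is $q^{-2}$ unconditionally. Raising to the $m$-th power (since the indicator event must hold for every column) yields $q^{-rm}$, which matches the claimed $q^{-n}$ in the paper's convention $m=n$; the indicator factors $\mathbbm{1}\{\mathbf{v}_j=\mathbf{0}\}$ and $\mathbbm{1}\{\mathbf{v}_1=a\mathbf{v}_2\}$ arise by taking the intersection of the per-column indicators across all $m$ coordinates. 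The main (minor) obstacle is simply careful bookkeeping of which image subspace corresponds to which case; once the rank is identified, the probability follows immediately from the uniform-pushforward principle.
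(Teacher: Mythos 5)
Your proof is correct, and it takes a genuinely different route from the paper's. The paper decomposes over the \emph{rows} of $\mathbf{G}$: it writes $\mathbf{s}_j\mathbf{G}=\sum_i s_{ji}\mathbf{G}_i$ with the rows $\mathbf{G}_i$ i.i.d.\ uniform, deduces that any nonzero combination is uniform, handles the proportional case by the deterministic identity $\mathbf{s}_1\mathbf{G}=a\,\mathbf{s}_2\mathbf{G}$, and for linearly independent $\mathbf{s}_1,\mathbf{s}_2$ extracts a full-rank $2\times 2$ submatrix and argues independence of the two output vectors via the fact that a uniform summand independent of $Y$ makes $X\oplus_q Y$ uniform and independent of $Y$ --- a step the paper only sketches (``one can show that''). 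You instead decompose over the \emph{columns}: the joint event factorizes into i.i.d.\ per-column events of the form $\mathbf{S}\mathbf{g}=(v_{1,j},v_{2,j})^{\top}$ with $\mathbf{S}\in\FF_q^{2\times n}$ fixed and $\mathbf{g}$ uniform, and a single fiber-counting lemma (a linear map pushes the uniform distribution forward to the uniform distribution on its image, of size $q^{\mathrm{rank}}$) dispatches all four cases at once according to $\mathrm{rank}(\mathbf{S})$ and the shape of $\mathrm{Im}(\mathbf{S})$. What your approach buys is uniformity and rigor: the rank-$2$ case, which is the delicate one in the paper, requires no separate independence argument, since the product structure over columns plus the image being all of $\FF_q^2$ immediately gives $q^{-2m}$. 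What the paper's row-level view buys is that it directly exhibits the vector-level statement actually invoked later (independence of $\mathbf{s}_1\mathbf{G}$ and $\mathbf{s}_2\mathbf{G}$ as random vectors), and it makes the proportional case a one-line identity. Your handling of the dimension bookkeeping ($q^{-rm}$ with the paper's convention $m=n$) is also the correct reading of the lemma as stated.
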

\begin{proof}
We can write $\mathbf{s}_j\mathbf{G}=\sum_{i=1}^n \mathbf{s}_{ji}\mathbf{G}_i$, $j=1,2$, where $\mathbf{s}_{ji}$ is the $i$th component of $\mathbf{s}_j$ and $\mathbf{G}_i$ is the $i$th row of $\mathbf{G}$. Not that $\mathbf{G}_i$ are independent random variables with uniform distribution over $\FF_q^n$. Hence, if $\mathbf{s}_j\neq \mathbf{0}$, then $\mathbf{s}_j\mathbf{G}$ is uniform over $\FF_q^n$. Then, given the second condition in \eqref{eq: joint prob G}, $P\{\mathbf{s}_j\mathbf{G}=\mathbf{v}_j, ~j=1,2\}=q^{-n}\mathbbm{1}\{\mathbf{v}_j=\mathbf{0}\}$. {  ~If $\mathbf{s}_1=a\mathbf{s}_2$ with $a \in \FF_q$, then $\mathbf{s}_1\mathbf{G}=a \mathbf{s}_2\mathbf{G} $, with probability one and, thus, $P\{\mathbf{s}_j\mathbf{G}=\mathbf{v}_j, ~j=1,2\}=q^{-n}\mathbbm{1}\{\mathbf{v}_1 = a \mathbf{v}_2\}$.} 

{ If $\mathbf{s}_1\neq a \mathbf{s}_2$ for any $a\in \FF_q$, then $(\mathbf{s}_1, \mathbf{s}_2)$ are linearly independent. This implies that there exist indices $(l,k)$ such that the $2\times 2$ matrix $\mathbf{A}$ with elements $a_{11}=s_{1l}, a_{12}=s_{1,k}, a_{21}=s_{2l}$ and $a_{22}=s_{2k}$ is full rank.  As a result, $s_{1l} \mathbf{G}_l\+s_{1k}\mathbf{G}_k$ and $s_{2l} \mathbf{G}_l\+s_{2k}\mathbf{G}_k$ are independent random vectors with uniform distribution over $\FF^k_q$.} In this case, one can show that $\mathbf{s}_1\mathbf{G}$ is independent of $\mathbf{s}_2\mathbf{G}$. The proof follows by arguing that if a random variables $X$ is independent of $Y$ and is uniform over $\FF_q$, then $X\oplus_q Y$ is also uniform over $\FF_q$ and is independent of $Y$. 
\end{proof}
Finally, we are ready to characterize the conditions under which $p_e\rightarrow 0$. { Let $\mathcal{L}(\underline{\mathbf{s}})$ denote the set of all the variables $(\underline{\mathbf{v}}, \underline{\mathbf{x}},\mathbf{y}, \underline{\mathbf{\tilde{s}}},\underline{\tilde{\mathbf{v}}},\underline{\tilde{\mathbf{x}}})$ included in the summations in (\ref{eq: Pe}); more precisely, 
\begin{equation}\label{eq:L(s)}
\mathcal{L}(\underline{\mathbf{s}})\deq \Big\{ (\underline{\mathbf{v}}, \underline{\mathbf{x}},\mathbf{y}, \underline{\mathbf{\tilde{s}}},\underline{\tilde{\mathbf{v}}},\underline{\tilde{\mathbf{x}}}):  (\underline{\mathbf{v}}, \underline{\mathbf{x}},\mathbf{y})\in A_{\epsilon_2}(\underline{V}, \underline{X},Y|\underline{\mathbf{s}}), ~ (\underline{\mathbf{\tilde{s}}},\underline{\tilde{\mathbf{v}}},\underline{\tilde{\mathbf{x}}})\in A_{\epsilon_3}(\underline{S},\underline{V},\underline{X}|\mathbf{y}),
~\tilde{\mathbf{s}} \neq \underline{\mathbf{s}}\Big\}.
\end{equation}
 Based on the conditions in Lemma \ref{lem: prob of sG}, we partition this set into five subsets $\mathcal{L}_i(\underline{\mathbf{s}}), i=1,2,..., 5$. Hence, if $p_{e_i}(\underline{\mathbf{s}}), i\in [1,5]$ represents the contribution of each subset, then $p_e(\underline{\mathbf{s}})=\sum_{i=1}^5 p_{e_i}(\underline{\mathbf{s}})$. 
 In what follows,  we characterize these subsets and provide an upper bound to each term $p_{e_i}(\underline{\mathbf{s}}), i\in [1,5]$. }

\noindent{\bf Case 1, $\mathbf{\tilde{s}}_1 \neq \mathbf{s}_1,  \mathbf{\tilde{s}}_2= \mathbf{s}_2$:}

\hspace{10pt} In this case, using Lemma \ref{lem: prob of sG}, (\ref{eq: PP 2}) equals to $q^{-3n}\mathbbm{1}\{\mathbf{\tilde{v}}_2=\mathbf{v}_2\}$. As $\mathbf{s}_2=\mathbf{\tilde{s}}_2$ and $\mathbf{v}_2=\mathbf{\tilde{v}}_2$, then $X_2(\tilde{\mathbf{s}}_2, \tilde{\mathbf{v}}_2)=X_2(\mathbf{s}_2, \mathbf{v}_2)$. Therefore, we define 
\begin{align*}
\mathcal{L}_1(\underline{\mathbf{s}})\deq \Big\{ (\underline{\mathbf{v}}, \underline{\mathbf{x}},\mathbf{y}, \underline{\mathbf{\tilde{s}}},\underline{\tilde{\mathbf{v}}},\underline{\tilde{\mathbf{x}}})\in \mathcal{L}(\underline{\mathbf{s}}): \mathbf{\tilde{s}}_1 \neq \mathbf{s}_1,  \mathbf{\tilde{s}}_2= \mathbf{s}_2, \mathbf{\tilde{v}}_2= \mathbf{v}_2, \mathbf{x}_2=\mathbf{\tilde{x}}_2\Big\},
\end{align*}
where $\mathcal{L}(\underline{\mathbf{s}})$ is defined as in \eqref{eq:L(s)}.
Thus, the contribution of this case equals to 
\begin{align*}
p_{e_1} (\underline{\mathbf{s}}) \deq &\sum_{\mathcal{L}_1(\underline{\mathbf{s}})}  P^n_{Y|\underline{X}}(\mathbf{y}|\underline{\mathbf{x}})q^{-3n}  P\{\mathbf{x}_l=\mathbf{X}_l(\mathbf{s}_l, \mathbf{v}_l), \tilde{\mathbf{x}}_l=\mathbf{X}_l(\tilde{\mathbf{s}}_l, \tilde{\mathbf{v}}_l), ~  l=1,2,3\}.
\end{align*}
 

Note that $\mathbf{X}_l({\mathbf{s}}_l, \mathbf{{v}}_l)$ is independent of $\mathbf{X}_k(\tilde{\mathbf{s}}_k, \mathbf{\tilde{v}}_k)$, if $l \neq k$ or $\mathbf{s}_l \neq \mathbf{\tilde{s}}_l$ or $\mathbf{v}_l \neq \mathbf{\tilde{v}}_l$. Moreover, since $\mathbf{X}_l(\mathbf{s}_l, \mathbf{v}_l)$ is generated IID according to $P_{X_l|S_l,V_l}$, then for jointly typical sequences $(\mathbf{x}_l, \mathbf{s}_l,\mathbf{v}_l)$,
\begin{equation*}
\frac{-1}{n}\log_2 P\{\mathbf{x}_l=\mathbf{X}_l(\mathbf{s}_l, \mathbf{v}_l)\} \geq  H(X_l|S_lV_l))-\delta_1(\epsilon),
\end{equation*} 
where $\epsilon$ is defined as in \eqref{eq: epsilon} and $\delta_1(\epsilon)\geq 0$ is a continuous function satisfying $\lim_{\epsilon \rightarrow 0}\delta_1(\epsilon)= 0$.  Therefore, 
\begin{align*}
P\{\mathbf{x}_l=X_l(\mathbf{s}_l, \mathbf{v}_l), \tilde{\mathbf{x}}_l=X_l(\tilde{\mathbf{s}}_l, &\tilde{\mathbf{v}}_l) ~  l=1,2,3\}\\
&\leq 2^{-n[2H(X_1|S_1V_1)+H(X_2|S_2V_2)+ 2H(X_3|S_3V_3)-\delta_2(\epsilon)]} \mathbbm{1}\{\mathbf{\tilde{x}}_2=\mathbf{x}_2\},
\end{align*}
where $\delta_2$ is a non-negative and continuous function with $\lim_{\epsilon \rightarrow 0}\delta_2(\epsilon)= 0$. Note that for jointly typical sequences $(\mathbf{y}, \underline{\mathbf{x}})$, the conditional probability $P^n_{Y|\underline{X}}(\mathbf{y}|\underline{\mathbf{x}})$ is upper bounded by $2^{-n(H(Y|\underline{X})-\delta_3(\epsilon))}$.
Hence, we have:
\begin{align*}
p_{e_1}(\underline{\mathbf{s}})&\leq |\mathcal{L}_1(\underline{\mathbf{s}})| 2^{-nH(Y|\underline{X})}\frac{1}{q^{3n}} 2^{-n[2H(X_1|S_1V_1)+H(X_2|S_2V_2)+ 2H(X_3|S_3V_3)-\delta_4(\epsilon)]},
\end{align*}
where $\delta_4(\epsilon)\rightarrow 0$ as $\epsilon \rightarrow 0$ and $|\mathcal{L}_1(\underline{\mathbf{s}})|$ is the cardinality of $\mathcal{L}_1(\underline{\mathbf{s}})$. Note that for $\epsilon_1$-typical sequences $\underline{\mathbf{s}}$, the following inequality holds:
\begin{equation*}
\frac{1}{n}\log_2 |\mathcal{L}_1(\underline{\mathbf{s}})| \leq H(\underline{V}, \underline{X}, Y|\underline{S})+ H(S_1,V_1,X_1, S_3, V_3, X_3| Y S_2 V_2 X_2)+\delta_5(\epsilon),
\end{equation*}
where $\delta_5(\epsilon)\rightarrow 0$ as $\epsilon \rightarrow 0$. Note 
\begin{align}\nonumber
H(\underline{V},\underline{X}, Y| \underline{S})&=H(\underline{V}|\underline{S})+H(\underline{X}|\underline{S}, \underline{V})+H(Y|\underline{X})\\\label{eq:H(VXY|S)}
&=2\log_2q+\sum_{i=1}^3H(X_i|S_i,V_i)+H(Y|\underline{X}),
\end{align}
where the first equality holds by chain rule and the Markov chain $(\underline{S}, \underline{V})\leftrightarrow \underline{X}\leftrightarrow Y$. The second equality holds, because, from \eqref{eq: thm mac corr scr joint dist}, $\underline{V}$ are independent of the other random variables and $P_{V_1V_2V_3}=\frac{1}{q^2}\11\{V_3=V_1\+_q V_2\}$. Therefore, $p_{e_1}\rightarrow 0$ as $n\rightarrow \infty$,  if 
\begin{align}\label{eq: initial_inequality_case 1}
H(S_1,V_1,X_1, S_3, V_3, X_3| Y S_2 V_2 X_2)\leq \log_2q+ H(X_1|S_1V_1)+H(X_3|S_3V_3)
\end{align}
Next, we simplify the right-hand side terms in \eqref{eq: initial_inequality_case 1}. From \eqref{eq: thm mac corr scr joint dist}, the Markov chain $(S_{i^c}, V_{i^c}, X_{i^c})\leftrightarrow S_i \leftrightarrow X_i$ holds for all $i\in \{1,2,3\}$, where $i^c\deq \{1,2,3\}/\{i\}$.  Therefore, the right-hand side above equals to 
\begin{equation}\label{eq:log q right HS}
\log_2 q+ H(X_1X_3|\underline{S}, V_1 V_3 X_2 V_2)= H(X_1X_3V_1V_3|\underline{S} X_2V_2),
\end{equation}
where the equality holds by chain rule and the following argument:
\[
H(V_1V_3|\underline{S}X_2V_2)=H(V_1|\underline{S}X_2V_2)=H(V_1|\underline{S}V_2)=H(V_1|V_2)=H(V_1)=\log_2 q.
\]

 We simplify the left-hand side in \eqref{eq: initial_inequality_case 1}. Using  chain rule  
\begin{align*}
H(S_1,V_1,X_1, S_3, V_3, X_3| Y S_2 V_2 X_2)&=H(V_1,X_1, V_3, X_3| Y S_2 V_2 X_2)+H(S_1 S_3|Y S_2 \underline{V}~\underline{X})\\
 &= H(V_1,X_1, V_3, X_3| Y S_2 V_2 X_2)+H(S_1|S_2 \underline{V}~\underline{X}),
\end{align*}
where the second equality holds due to the Markov chain $\underline{S}\leftrightarrow \underline{X}\leftrightarrow Y$ and the assumption that $S_1\+_q S_2 \+_q S_3=0$. Note that
\begin{align*}
H(S_1|S_2 \underline{V}~\underline{X})&=H(S_1|S_2X_2V_2)-I(S_1;X_1V_1X_3V_3|S_2V_2X_2)\\
&=H(S_1|S_2)-I(S_1;X_1V_1X_3V_3|S_2V_2X_2),
\end{align*}
where, the last equality holds because $V_2$ is independent of $S_1$ and $X_2$ is a function of $(S_2, V_2)$. 
Therefore, using the above arguments, the inequality in (\ref{eq: initial_inequality_case 1}) is simplified to
\begin{align*}
H(S_1|S_2)&\leq I(S_1;X_1V_1X_3V_3|S_2V_2X_2)-H(V_1,X_1, V_3, X_3| Y S_2 V_2 X_2)+H(X_1X_3V_1V_3|\underline{S} X_2V_2)\\
&=I(X_1V_1X_3V_3;Y|S_2V_2X_2)=I(X_1X_3;Y|S_2V_2X_2).
\end{align*}

As a result, $p_{e_1}(\underline{\mathbf{s}})$ can be made sufficiently small for large enough $n$, if the inequality  $$H(S_1|S_2)\leq I(X_1X_3;Y|S_2V_2X_2)$$ is satisfied.

\noindent{\bf Case 2, $\mathbf{\tilde{s}}_1 = \mathbf{s}_1,  \mathbf{\tilde{s}}_2 \neq  \mathbf{s}_2$:}

\hspace{10pt} This case corresponds to  $p_{e_2}(\underline{\mathbf{s}})$ which is defined using a similar expression as for  $p_{e_1}(\underline{\mathbf{s}})$; but with the conditions in the second summation replaced with $\underline{\tilde{\mathbf{s}}} \neq \underline{\mathbf{s}},  \mathbf{\tilde{s}}_1= \mathbf{s}_1,  \mathbf{\tilde{v}}_1= \mathbf{v}_1$. Therefore, we have 
\begin{align*}
\mathcal{L}_2(\underline{\mathbf{s}})\deq \Big\{ (\underline{\mathbf{v}}, \underline{\mathbf{x}},\mathbf{y}, \underline{\mathbf{\tilde{s}}},\underline{\tilde{\mathbf{v}}},\underline{\tilde{\mathbf{x}}})\in \mathcal{L}(\underline{\mathbf{s}}): \mathbf{\tilde{s}}_1 = \mathbf{s}_1,  \mathbf{\tilde{s}}_2 \neq \mathbf{s}_2, \mathbf{\tilde{v}}_1= \mathbf{v}_1,  \mathbf{\tilde{x}}_1= \mathbf{x}_1\Big\}.
\end{align*}
By symmetry and using a similar argument as in the first case, we can show that $p_{e_2}(\underline{\mathbf{s}})\rightarrow 0$ as $n\rightarrow \infty$ if the following inequality holds $$H(S_2|S_1) \leq I(X_2X_3;Y|S_1V_1X_1).$$

\noindent{\bf Case 3, $\mathbf{\tilde{s}}_1 \neq \mathbf{s}_1,  \mathbf{\tilde{s}}_2 \neq  \mathbf{s}_2, \mathbf{\tilde{s}}_1 \oplus_q \mathbf{\tilde{s}}_2=\mathbf{s}_1\oplus_q \mathbf{s}_2 $:}

\hspace{10pt} In this case
\begin{align*}
&P\{\mathbf{v}_j=\mathbf{V}_j(\mathbf{s}_j), \tilde{\mathbf{v}}_j=\mathbf{V}_j(\tilde{\mathbf{s}}_j)~ j=1,2\} = q^{-3n} \mathbbm{1}\{ \mathbf{\tilde{v}}_1 \oplus_q \mathbf{\tilde{v}}_2=\mathbf{v}_1\oplus_q \mathbf{v}_2 \}\\
&P\{\mathbf{x}_l=\mathbf{X}_l(\mathbf{s}_l, \mathbf{v}_l), \tilde{\mathbf{x}}_l=\mathbf{X}_l(\tilde{\mathbf{s}}_l, \tilde{\mathbf{v}}_l), l=1,2,3\}\\
&\hspace{140pt}\leq 2^{-n[2H(X_1|S_1V_1)+2H(X_2|S_2V_2)+H(X_3|S_3V_3)-\delta_6(\epsilon)]} \mathbbm{1}\{\mathbf{\tilde{x}}_3=\mathbf{x}_3\}.
\end{align*}
By assumption $\mathbf{s}_1 \+_q \mathbf{s}_2 \+_q \mathbf{s}_3=0$ and $\mathbf{v}_1 \+_q \mathbf{v}_2 \+_q \mathbf{v}_3=0$. Therefore, the first probability is nonzero only when $\mathbf{\tilde{v}}_3=\mathbf{v}_3$. Hence, as $\mathbf{s}_3=\mathbf{\tilde{s}}_3$, we get $X_3(\tilde{\mathbf{s}}_3, \tilde{\mathbf{v}}_3)=X_3(\mathbf{s}_3, \mathbf{v}_3)$. As a result, we can define
\begin{align*}
\mathcal{L}_3(\underline{\mathbf{s}})\deq \Big\{ (\underline{\mathbf{v}}, \underline{\mathbf{x}},\mathbf{y}, \underline{\mathbf{\tilde{s}}},\underline{\tilde{\mathbf{v}}},\underline{\tilde{\mathbf{x}}})\in \mathcal{L}(\underline{\mathbf{s}}):& \mathbf{\tilde{s}}_1 \neq \mathbf{s}_1,  \mathbf{\tilde{s}}_2\neq  \mathbf{s}_2,\\
&\mathbf{\tilde{s}}_1 \oplus_q \mathbf{\tilde{s}}_2=\mathbf{s}_1\oplus_q \mathbf{s}_2, \mathbf{\tilde{v}}_1 \oplus_q \mathbf{\tilde{v}}_2=\mathbf{v}_1\oplus_q \mathbf{v}_2, \mathbf{\tilde{x}}_3=\mathbf{x}_3\Big\}.
\end{align*}
As a result,  the contribution of this case ($p_{e_3}$) is bounded by 
\begin{align*}
p_{e_3}(\underline{\mathbf{s}})&\leq |\mathcal{L}_3(\underline{\mathbf{s}})| 2^{-nH(Y|\underline{X})}\frac{1}{q^{3n}} 2^{-n[2H(X_1|S_1V_1)+2H(X_2|S_2V_2)+ H(X_3|S_3V_3)-\delta_7(\epsilon)]},
\end{align*}

Note that for $\epsilon_1$-typical $\underline{\mathbf{s}}$, we have 
\begin{equation*}
\frac{1}{n}\log_2 |\mathcal{L}_3(\underline{\mathbf{s}})| \leq H(\underline{V}, \underline{X}, Y|\underline{S})+ H(S_1,V_1,X_1, S_2, V_2, X_2| Y S_3 V_3 X_3)+\delta_8(\epsilon).
\end{equation*}
Therefore, from \eqref{eq:H(VXY|S)} and the above inequality, 
$p_{e_3}(\underline{\mathbf{s}}) \rightarrow 0$, if 
\begin{align*}
H(S_1,V_1,X_1, S_2, V_2, X_2| Y S_3 V_3 X_3)&\leq \log_2 q+H(X_1|S_1V_1)+H(X_2|S_2V_2)\\
 &= H(X_1,X_2,V_1,V_2|S_1S_2S_3V_3X_3),
\end{align*}
where the inequality above holds using a similar argument applied in \eqref{eq:log q right HS}. By symmetry and using a similar argument as in the first case, this inequality is equivalent to $$ H(S_1S_2|S_3) \leq I(X_1, X_2; Y| S_3V_3X_3).$$

{
\noindent{{\bf Case 4, $\mathbf{\tilde{s}}_1\+a \mathbf{\tilde{s}}_2 = \mathbf{s}_1\+a \mathbf{s}_2$, $a\in \FF_q/\{0,1\}$:}}

\hspace{10pt} From Lemma \ref{lem: prob of sG}, 
\begin{align*}
P\{\mathbf{v}_j=\mathbf{V}_j(\mathbf{s}_j), \tilde{\mathbf{v}}_j=\mathbf{V}_j(\tilde{\mathbf{s}}_j)~ j=1,2\} = q^{-3n} \mathbbm{1}\{ \mathbf{\tilde{v}}_1 \oplus_q a \mathbf{\tilde{v}}_2=\mathbf{v}_1\oplus_q a \mathbf{v}_2 \}.
\end{align*}
Therefore, the error probability in this case, i.e., $p_{e_4}(\underline{\mathbf{s}})$ satisfies
\begin{align*}
p_{e_4} (\underline{\mathbf{s}}) \deq &\sum_{a=1}^{q-1}  \sum_{\mathcal{L}_4(a, \underline{\mathbf{s}})}  P^n_{Y|\underline{X}}(\mathbf{y}|\underline{\mathbf{x}})q^{-3n}  P\{\mathbf{x}_l=\mathbf{X}_l(\mathbf{s}_l, \mathbf{v}_l), \tilde{\mathbf{x}}_l=\mathbf{X}_l(\tilde{\mathbf{s}}_l, \tilde{\mathbf{v}}_l) ~  l=1,2,3\},
\end{align*}
where  
\begin{align*}
\mathcal{L}_4(a,\underline{\mathbf{s}})\deq \Big\{ (\underline{\mathbf{v}}, \underline{\mathbf{x}},\mathbf{y}, \underline{\mathbf{\tilde{s}}},\underline{\tilde{\mathbf{v}}},\underline{\tilde{\mathbf{x}}})\in \mathcal{L}(\underline{\mathbf{s}}): ~ \mathbf{\tilde{s}}_1\+a \mathbf{\tilde{s}}_2 = \mathbf{s}_1\+a \mathbf{s}_2,~ \mathbf{\tilde{v}}_1\+a \mathbf{\tilde{v}}_2 = \mathbf{v}_1\+a \mathbf{v}_2\Big\}.
\end{align*}
Also, observe that 
\begin{align*}
P\{\mathbf{x}_l=\mathbf{X}_l(\mathbf{s}_l, \mathbf{v}_l), \tilde{\mathbf{x}}_l=\mathbf{X}_l(\tilde{\mathbf{s}}_l, \tilde{\mathbf{v}}_l) ~  l=1,2,3\}\leq 2^{-2n[\sum_{i=1}^3H(X_i|S_iV_i)]-\delta_9(\epsilon)}.
\end{align*}
where $\delta_9(\cdot)$ is a continuous function of $\epsilon$ with $\lim_{\epsilon\rightarrow 0}\delta_9(\epsilon)=0$. Consequently, for any typical sequences $\underline{\mathbf{s}}$,  the following upper bound holds: 
\begin{align*}
p_{e_4}(\underline{\mathbf{s}}) \leq \sum_{a=1}^{q-1} |\mathcal{L}_4(a,\underline{\mathbf{s}})| 2^{-nH(Y|\underline{X})}q^{-3n} 2^{-2n[\sum_{i=1}^3H(X_i|S_iV_i)]}2^{n\delta_{10}(\epsilon)}.
\end{align*} 

Note that for any non-zero $a\in \FF_q$ and any typical sequence $\underline{\mathbf{s}}$, the cardinality of $\mathcal{L}_4$ satisfies the inequality
\begin{equation*}
\frac{1}{n}\log_2 |\mathcal{L}_4(a,\underline{\mathbf{s}})|\leq H(\underline{V}, \underline{X},Y|\underline{{S}})+H(\underline{S},\underline{V},\underline{X}|Y, S_1\+_q aS_2, V_1\+_q aV_2 )+\delta_{11}(\epsilon).
\end{equation*}

Note that $$H(\underline{V},\underline{X}, Y| \underline{S})=2\log_2q+\sum_{i=1}^3H(X_i|S_i,V_i)+H(Y|\underline{X}).$$
Therefore, from the above inequalities, $p_{e_4}(\underline{\mathbf{s}}) \rightarrow 0$ as $n\rightarrow \infty$, if 
\begin{equation}\label{ineq: case 4}
H(\underline{S},\underline{V},\underline{X}|Y, S_1\+_q aS_2, V_1\+_q aV_2 )< \log q+ \sum_{i=1}^3H(X_i|S_i,V_i)
\end{equation}
From the joint probability distribution given in \eqref{eq: thm mac corr scr joint dist}, conditioned on $(\underline{S}, \underline{V})$ the random variables $(X_1, X_2, X_3)$ are mutually independent. Hence, $\sum_{i=1}^3H(X_i|S_i,V_i)=H(\underline{X}|\underline{S}, \underline{V})$ and the right-hand side of the above inequality simplifies to $\log q + H(\underline{X}|\underline{S}, \underline{V})$. 
Next, we simplify the left-hand side of the above inequality. For that we have
\begin{align*}
H(\underline{S},\underline{V},\underline{X}|& Y, S_1\+_q aS_2, V_1\+_q aV_2 )=H(\underline{V},\underline{X}|Y, S_1\+_q aS_2, V_1\+_q aV_2 )+H(\underline{S}|S_1\+_q aS_2, \underline{X}, \underline{V})\\
&=H(\underline{S}|S_1\+_q aS_2, V_1\+_q aV_2 )-I(\underline{X}, \underline{V}; Y|S_1\+_q aS_2,  V_1\+_q aV_2 )\\
&+H(\underline{X},\underline{V}|\underline{S}, S_1\+_q aS_2, V_1\+_q aV_2 )\\
&=H(\underline{S}|S_1\+_q aS_2 )-I(\underline{X}, \underline{V}; Y|S_1\+_q aS_2,  V_1\+_q aV_2 )+H(\underline{X},\underline{V}|\underline{S}, V_1\+_q aV_2 )
\end{align*}
where the first equality holds by chain rule and the Markov chain $\underline{S}\leftrightarrow \underline{X}\leftrightarrow Y$. The second equality holds by the definition of the mutual information. The last equality holds as $(V_1, V_2, V_3)$ are independent of $(S_1,S_2,S_3)$. As a result of the above argument, the inequality in \eqref{ineq: case 4} is equivalent to the following inequality:
\begin{align*}
H(\underline{S}|S_1\+_q aS_2) &< I(\underline{X}, \underline{V};Y|S_1\+_q aS_2, V_1\+_q aV_2 )-H(\underline{X},\underline{V}|\underline{S}, V_1\+_q aV_2 )+\log q+H(\underline{X}|\underline{S}, \underline{V})\\
&=I(\underline{X}, \underline{V};Y|S_1\+_q aS_2, V_1\+_q aV_2 )-H(\underline{V}|V_1\+_q aV_2 )\\
&~~~-H(\underline{X}|\underline{S}, \underline{V}, V_1\+_q aV_2 )+\log q+H(\underline{X}|\underline{S}, \underline{V})\\
&=I(\underline{X}, \underline{V};Y|S_1\+_q aS_2, V_1\+_q aV_2 )-H(\underline{V}|V_1\+_q aV_2 )+\log q,
\end{align*}
where the first equality holds by the chain rule and the fact that $\underline{V}$ is independent of $\underline{S}$. In what follows, we show that the last two terms above cancel each other.  Since $V_1$ and $V_2$ are independent random variables with uniform distribution over $\FF_q$, then so is $V_1$ and $V_1\+_q aV_2$ for any $a \in \FF_1/\{0\}$. Therefore, as $V_3 \+_q V_1\+_q V_2=0$ we have 
\begin{align*}
H(\underline{V}|V_1\+_q aV_2 )&=H(V_1,V_2|V_1\+_q aV_2 )= H(V_1,V_1\+_q aV_2 |V_1\+_q aV_2 )\\
&=H(V_1 | V_1\+_q aV_2 )=\log q.
\end{align*}

As a result, we showed that $p_{e_4}(\underline{\mathbf{s}}) \rightarrow 0$ as $n\rightarrow \infty$, if 
$$H(\underline{S}|S_1\+_q aS_2) \leq I(\underline{X}, \underline{V};Y|S_1\+_q aS_2, V_1\+_q aV_2 ).$$ }

\noindent{\bf Case 5}, $\mathbf{\tilde{s}}_i \neq \mathbf{s}_i,  i=1,2,3$ and  $\mathbf{\tilde{s}}_1\+a\mathbf{\tilde{s}}_2\neq \mathbf{{s}}_1\+a\mathbf{{s}}_2 $ for all $a\in \FF_q$:\\
\hspace{10pt} Observe that,
\begin{align*}
&\mathcal{L}_5(\underline{\mathbf{s}})\deq \Big\{ (\underline{\mathbf{v}}, \underline{\mathbf{x}},\mathbf{y}, \underline{\mathbf{\tilde{s}}},\underline{\tilde{\mathbf{v}}},\underline{\tilde{\mathbf{x}}})\in \mathcal{L}(\underline{\mathbf{s}}):  \mathbf{\tilde{s}}_1 \neq \mathbf{s}_1,  \mathbf{\tilde{s}}_2\neq  \mathbf{s}_2, \mathbf{\tilde{s}}_1\+a\mathbf{\tilde{s}}_2\neq \mathbf{{s}}_1\+a\mathbf{{s}}_2, \forall a\in \FF_q\Big\}\\
&P\{\mathbf{v}_j=\mathbf{V}_j(\mathbf{s}_j), \tilde{\mathbf{v}}_j=\mathbf{V}_j(\tilde{\mathbf{s}}_j)~ j=1,2\} = q^{-4n}\\
&P\{\mathbf{x}_l=\mathbf{X}_l(\mathbf{s}_l, \mathbf{v}_l), \tilde{\mathbf{x}}_l=\mathbf{X}_l(\tilde{\mathbf{s}}_l, \tilde{\mathbf{v}}_l) ~  l=1,2,3\}\leq 2^{-2n[\sum_{l=1}^3 H(X_l|S_lV_l)-\delta_9(\epsilon)]}.
\end{align*}
Therefore, the contribution of this case is simplified to $p_{e_5}(\underline{\mathbf{s}}) \approx q^{-2n}  2^{nH(  \underline{S}, \underline{V}, \underline{X}| Y)}  2^{-n\sum_{l=1}^3 H(X_l|S_lV_l)}$. As a result, one can show that $P_{e_5}\rightarrow 0$, if $H(S_1S_2S_3) \leq I(X_1X_2X_3; Y)$.

Finally, note that $P_e(\underline{\mathbf{s}}) = \sum_{i=1}^5 P_{ei}(\underline{\mathbf{s}}).$ Moreover, $P_{ei}(\underline{\mathbf{s}})$ depends on $\underline{\mathbf{s}}$ only through its PMF. Therefore, for any typical $\underline{\mathbf{s}}$,  $P_e$ approaches zero as $n\rightarrow \infty$, if the following bounds are satisfied:
{ \begin{align*}
H(S_1|S_2) &\leq I(X_1 X_3;Y| S_2 V_2 X_2)\\
H(S_2|S_1) &\leq I(X_2 X_3;Y| S_1 V_1 X_1)\\
H(S_1S_2| S_1 \oplus_q S_2) & \leq I(X_1 X_2;Y| S_1 \oplus_q S_2,  V_3 X_3)\\
H(S_1S_2| S_1 \oplus_q a S_2) & \leq I(X_1, X_2, X_3;Y| S_1 \oplus_q a S_2,  V_1\+_q a V_2)\\
H(S_1,S_2) &\leq I(X_1 X_2 X_3;Y).
\end{align*} }

\end{proof}

\section{Proof of Theorem \ref{thm:CES is subopt} }\label{appx: proof_ thm_CES is sub opt}

\begin{lem}\label{lem:MAC in example is unstructured}
For the MAC in Example \ref{ex: CES is suboptimal},  $I(X_1, X_2, X_3; Y)\leq 2-H(N)$, with equality if and only if  $X_3=X_1\oplus_2 X_2$ with probability one, and $X_3$ is uniform over $\{0,1\}$. 
\end{lem}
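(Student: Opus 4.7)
The plan is to apply the identity
\begin{equation*}
I(X_1, X_2, X_3; Y) = H(Y) - H(Y \mid X_1, X_2, X_3)
\end{equation*}
and to bound each term separately using the structure of the channel in Example \ref{ex: CES is suboptimal}. Since the output $Y$ takes values in a four-element alphabet (matching the support of the noise $N$), we immediately have $H(Y) \leq 2$.

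For the conditional entropy, I would introduce the indicator event $E = \{X_3 = X_1 \oplus_2 X_2\}$ and write $p = \Pr(E)$. By the channel's definition (in the spirit of Example \ref{ex: additive source MAC}), on $E$ the output is an invertible function of $(X_1, X_2, X_3)$ shifted by the noise $N$, so that $H(Y \mid X_1, X_2, X_3, E) = H(N)$; on $E^c$, $Y$ is uniform on its four-element alphabet and independent of the inputs, so that $H(Y \mid X_1, X_2, X_3, E^c) = 2$. Averaging over $E$ yields
\begin{equation*}
H(Y \mid X_1, X_2, X_3) = p\, H(N) + 2(1 - p).
\end{equation*}
Combining this with $H(Y)\leq 2$, we get
\begin{equation*}
I(X_1, X_2, X_3; Y) \leq 2 - p\, H(N) - 2(1 - p) = p\,(2 - H(N)) \leq 2 - H(N),
\end{equation*}
where the final step uses $H(N) \leq \log_2 4 = 2$. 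This gives the upper bound in the lemma.

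For the equality condition, both bounds must be tight: $p = 1$ recovers the first required condition $X_3 = X_1 \oplus_2 X_2$ almost surely, and $H(Y) = 2$ means $Y$ is uniform on its alphabet. To translate the latter into a condition on $X_3$, I would observe that under $p = 1$ the dependence of $Y$ on $(X_1, X_2)$ collapses through $X_3 = X_1 \oplus_2 X_2$, so the marginal of $Y$ is a two-component mixture over $x_3 \in \{0,1\}$ of shifts of $P_N$. A direct computation using the explicit distribution in Table \ref{tab: N} then shows that these shifts interleave in such a way that $P_Y$ is uniform on the four-element alphabet if and only if $P_{X_3}$ is uniform on $\{0,1\}$, giving the second equality condition. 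The main obstacle is this last case-specific verification of when the mixture yields a uniform marginal; the rest of the argument is a routine chain of entropy inequalities.
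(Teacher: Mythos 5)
There is a genuine gap: your argument models the wrong channel. The MAC in Example \ref{ex: CES is suboptimal} is \emph{not} the two-state channel of Example \ref{ex: additive source MAC}; it is the single-output channel $Y=(X_1\oplus_2 X_2)\oplus_4 X_3\oplus_4 N$, with $N$ distributed as in Table \ref{tab: N}. For this channel, conditioned on $(X_1,X_2,X_3)$ the output is \emph{always} a deterministic mod-$4$ shift of $N$, so $H(Y\mid X_1,X_2,X_3)=H(N)$ identically; there is no regime in which ``$Y$ is uniform on its four-element alphabet and independent of the inputs.'' Consequently your formula $H(Y\mid X_1,X_2,X_3)=p\,H(N)+2(1-p)$ is false, and so is the intermediate bound $I(X_1,X_2,X_3;Y)\le p\,(2-H(N))$. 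Concretely, take $X_1,X_2,X_3$ i.i.d.\ uniform and $\delta$ small: then $p=\tfrac12$, while $Y$ has distribution close to $(\tfrac18,\tfrac38,\tfrac38,\tfrac18)$, so $I(X_1,X_2,X_3;Y)=H(Y)-H(N)\approx 1.81-1=0.81$, which exceeds your claimed bound $p\,(2-H(N))\approx 0.5$.

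Because of this, your equality analysis also collapses. With the correct channel, $I(X_1,X_2,X_3;Y)=H(Y)-H(N)$ holds for \emph{every} input distribution, so the upper bound $2-H(N)$ is immediate and the entire content of the lemma is characterizing when $H(Y)=2$, i.e., when $Y$ is uniform on $\{0,1,2,3\}$. That is what the paper's proof does: writing $X_2'=X_1\oplus_2 X_2$ and $q(i)=P(X_2'\oplus_4 X_3=i)$, it solves the linear system $P(Y=i)=\tfrac14$, $i=0,1,2,3$, and shows the unique solution is $q(0)=q(2)=\tfrac12$, $q(1)=0$; since $q(1)=P(X_3\neq X_2')$, this forces $X_3=X_1\oplus_2 X_2$ with probability one, and then $q(0)=q(2)=\tfrac12$ forces $X_3$ to be uniform. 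In your write-up the condition $p=1$ comes ``for free'' from the erroneous conditional-entropy term, and your final mixture computation is carried out only under $p=1$. The step you dismiss as routine --- ruling out uniformity of $Y$ when $0<p<1$, i.e., when $q(1)>0$ --- is precisely the case-analysis the proof must perform, and it is where the actual work of the paper's proof lies.
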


\begin{proof}
Note  $I(X_1,X_2,X_3;Y)=H(Y)-H(N)$. We proceed by finding all the necessary and sufficient conditions on $P_{X_1,X_2,X_3}$ for which $Y$ is uniform over $\ZZ_4$. From Figure \ref{fig: Exp1}, $Y= (X_1\oplus_2 X_2) \oplus_4 X_3 \oplus_4 N$. Denote $X'_2=X_1\oplus_2 X_2$. Let $P(X'_2 \oplus_4 X_3=i)=q(i)$ where $i=1,2,3,4$. Since $X'_2$ and $X_3$ are binary, $q(3)=0$. Given the distribution of $N$ is Table \ref{tab: N},  the distribution of $Y$ is as follows:
\begin{subequations}\label{eqs: P(Y)}
\begin{align}
P(Y=0)&=q(0)(\frac{1}{2}-\delta)+q(2)\delta,\\
P(Y=1)&=q(0)\frac{1}{2}+q(1)(\frac{1}{2}-\delta),\\
P(Y=2)&=q(0)\delta+q(1)\frac{1}{2}+q(2) (\frac{1}{2}-\delta),\\
P(Y=3)&=q(2)\frac{1}{2}+q(1)\delta.
\end{align}
\end{subequations}
It's not difficult to check that the only solution for the equations in \eqref{eqs: P(Y)} is 
\begin{equation*}
q(0)=q(2)=\frac{1}{2}, \quad q(1)=0.
\end{equation*} 
Note that by definition 
 \begin{align*}
 q(1)=P(X'_2=0,X_3=1)+P(X'_2=1,X_3=0).
 \end{align*}
 Therefore, $q(1)=0$ implies that  $X_3=X'_2$ with probability one. If this condition is satisfied, then $q(0)=P(X_3=0)$ and $q(2)=P(X_3=1)$. Since $q(0)=q(2)=\frac{1}{2}$ then $X_3$ is uniform over $\{0,1\}$.
To sum up, we proved that $Y$ is uniform, if and only if 1) $X_3=X_1\oplus_2 X_2$. 2) $X_3$ is uniform over $\{0,1\}$.
\end{proof}

\begin{lem}\label{lem:total variation dist}
Let $\mathscr{P}_1$ be the set of all distributions $P_{X_1,X_2,X_3}^*$ that satisfies the conditions in Lemma \ref{lem:MAC in example is unstructured}. Let $\mathscr{P}_2$ be the set of all distributions $P_{X_1, X_2, X_3}$ which is the marginal of  $P_{S_1,S_2,S_3}P_{X_1,X_2,X_3|S_1,S_2,S_3}$ for some source triplet $(S_1,S_2,S_3)$ in Example \ref{ex: CES is suboptimal} with parameters $\sigma \in (0, \frac{1}{2}], \gamma\in (0, \gamma^*]$ and conditional distribution of the form $P_{X_1,X_2,X_3|S_1,S_2,S_3}=\prod_{i=1}^3 P_{X_i|S_i}$. Then the total variation distance between $\mathscr{P}_1$ and $\mathscr{P}_2$ satisfies
\begin{equation*}
TV(\mathscr{P}_1, \mathscr{P}_2)\geq  \frac{1}{6}-\frac{\gamma^*}{3}.
\end{equation*}

Moreover, there exists $\alpha(\gamma^*) >0$ such that $I(X_1,X_2,X_3; Y) \leq 2-H(N)-\alpha(\gamma^*)$ for all $P_{X_1,X_2,X_3}\in \mathscr{P}_2$.
\end{lem}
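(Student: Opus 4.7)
The plan is a two-step Fourier/compactness argument. For the TV bound, I would introduce the test functions $f(x_1,x_2,x_3) = (-1)^{x_1+x_2+x_3}$ and $g(x_1,x_2,x_3) = (-1)^{x_3}$, both bounded by $1$ in absolute value. By Lemma \ref{lem:MAC in example is unstructured}, every $P \in \mathscr{P}_1$ satisfies $E_P[f] = 1$ and $E_P[g] = 0$, so the dual characterization $TV(P,Q) \geq \tfrac{1}{2}|E_P[h]-E_Q[h]|$ valid for $\|h\|_\infty \leq 1$ gives $TV(P,Q) \geq \tfrac{1}{2}\max\big(1 - E_Q[f],\, |E_Q[g]|\big)$ for any $Q$.

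The core step is to upper-bound $E_Q[f]$ for $Q \in \mathscr{P}_2$ in terms of $|E_Q[g]|$ plus an $O(\gamma^*)$ slack. Using $S_2 = S_1 \oplus S_3$ with independent $S_1\sim\mathrm{Bern}(\sigma)$ and $S_3\sim\mathrm{Bern}(\gamma)$, together with the product form $\prod_i P_{X_i|S_i}$, I would set $a(s)=E[(-1)^{X_1}|S_1=s]$, $b(s)=E[(-1)^{X_2}|S_2=s]$, $c(s)=E[(-1)^{X_3}|S_3=s]$, each in $[-1,1]$. A direct expansion yields
\[ E_Q[f] = (1-\gamma)c(0)\,A + \gamma c(1)\,B, \qquad E_Q[g] = (1-\gamma)c(0) + \gamma c(1), \]
with $A = (1-\sigma)a(0)b(0) + \sigma a(1)b(1)$ and $B = (1-\sigma)a(0)b(1) + \sigma a(1)b(0)$, both in $[-1,1]$. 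Subtracting produces the key identity $E_Q[f] = A\,E_Q[g] + (B-A)\gamma c(1)$, and the elementary bounds $|A|\leq 1$, $|B-A|\leq 2$, $|c(1)|\leq 1$ give $E_Q[f] \leq |E_Q[g]| + 2\gamma \leq |E_Q[g]| + 2\gamma^*$. Substituting back, $TV(P,Q) \geq \max\big((1 - t - 2\gamma^*)/2,\, t/2\big)$ with $t = |E_Q[g]|\in[0,1]$, which is minimized at the crossover $t = (1-2\gamma^*)/2$ with value $(1-2\gamma^*)/4$; since $\gamma^*\leq 1/2$, one has $(1-2\gamma^*)/4 \geq (1-2\gamma^*)/6 = 1/6 - \gamma^*/3$, establishing the first claim.

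For the mutual information bound, note that the channel gives $I(X_1,X_2,X_3;Y) = H(Y) - H(N)$, so it suffices to produce $\alpha(\gamma^*) > 0$ with $H(Y) \leq 2 - \alpha(\gamma^*)$ on $\mathscr{P}_2$. The map $\Psi: P_{X_1X_2X_3} \mapsto H(Y)$ is continuous on the compact simplex $\Delta(\{0,1\}^3)$, and by Lemma \ref{lem:MAC in example is unstructured} it attains its maximum value $2$ precisely on $\mathscr{P}_1$ (and is strictly less elsewhere). The set $K=\{P: TV(P,\mathscr{P}_1)\geq 1/6-\gamma^*/3\}$ is closed in the simplex (TV being continuous), hence compact, and disjoint from $\mathscr{P}_1$; by continuity on a compact set, $\sup_K \Psi < 2$. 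Setting $\alpha(\gamma^*) := 2 - \sup_K \Psi > 0$ and observing $\mathscr{P}_2 \subseteq K$ from the first part completes the argument.

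The main obstacle I anticipate is the rearrangement $E_Q[f] = A\,E_Q[g] + (B-A)\gamma c(1)$: the needed $O(\gamma)$ error term (rather than a constant) only appears once $E_Q[f]$ and $E_Q[g]$ are grouped by the value of $S_3$ so that the $(1-\gamma)c(0)$ contributions share the common coefficient $A$ and all mismatch is confined to the $\gamma c(1)$ branch. Everything else is essentially routine optimization and a standard compactness argument.
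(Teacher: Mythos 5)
Your proposal is correct, and for the main (TV) claim it takes a genuinely different route from the paper. The paper argues directly on event probabilities: if $Q\in\mathscr{P}_2$ is within total variation $\epsilon$ of some $P^*\in\mathscr{P}_1$, then $Q(X_3\neq X_1\oplus X_2)\leq\epsilon$ and $|Q(X_3=1)-\tfrac12|\leq\epsilon$; it then pushes these two constraints through the conditionals $P_{X_3|S_3}$ and evaluates the first one at the single source realization $s_1=s_2=s_3=0$, whose probability is $(1-\sigma)(1-\gamma)\geq\tfrac12(1-\gamma^*)$ (this is where the hypothesis $\sigma\leq\tfrac12$ enters), arriving at $\epsilon\geq\tfrac13\bigl(\tfrac12-\gamma^*\bigr)=\tfrac16-\tfrac{\gamma^*}{3}$. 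Your character/duality argument replaces this with the exact identity $E_Q[f]=A\,E_Q[g]+(B-A)\gamma c(1)$ followed by an optimization over $t=|E_Q[g]|$; I verified the expansion and the crossover computation, and they are sound. This buys you two things the paper's proof does not give: a strictly stronger constant $(1-2\gamma^*)/4$ in place of $(1-2\gamma^*)/6$, and independence from the assumption $\sigma\leq\tfrac12$. (Both arguments need $\gamma^*<\tfrac12$, which holds since the paper takes $\gamma^*\in[h_b^{-1}(0.5),\tfrac12)$.) For the second claim, the paper disposes of it in one sentence by citing ``continuity of the mutual information in total variation distance''; your explicit version --- $H(Y)$ is continuous on the simplex, equals $2$ exactly on $\mathscr{P}_1$ by Lemma \ref{lem:MAC in example is unstructured}, and its supremum over the compact set $K=\{P: TV(P,\mathscr{P}_1)\geq \tfrac16-\tfrac{\gamma^*}{3}\}$, which is disjoint from $\mathscr{P}_1$ and contains $\mathscr{P}_2$ by the first part, is therefore strictly below $2$ --- is the careful way to extract a \emph{uniform} $\alpha(\gamma^*)>0$ from that one-liner, and is in spirit what the paper intends.
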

\begin{proof}
 Let $\overline{\gamma^*}\deq 1-\gamma^*$ and assume for some $\epsilon \geq 0$ there exist sources with parameters $\sigma_{\epsilon}\in (0, \frac{1}{2}]$ and $\gamma_{\epsilon}\in (0, \gamma^*]$ and conditional distributions  $P^{\epsilon}_{X_i|S_i}, i=1,2,3$ and a distribution $P_{X_1,X_2,X_3}^*$ satisfying the conditions in Lemma \ref{lem:MAC in example is unstructured}  such that total variation distance between the resulted PMF $P^{\epsilon}_{X_1,X_2,X_3}$ and $P_{X_1,X_2,X_3}^*$ is equal to $\epsilon$. Then for  $P^{\epsilon}_{X_1,X_2,X_3}$ the following inequalities hold:
\begin{align}\label{eq:cond for P_eps}
P^{\epsilon}(X_3\neq X_1\+X_2)\leq \epsilon, \quad  \text{and} \quad   \Big|P^{\epsilon}(X_3=1)-\frac{1}{2}\Big|\leq  \epsilon.
\end{align}
The second inequality implies 
\begin{subequations}
\begin{align}\label{subeq:P_eps 1}
\gamma_{\epsilon} ~P^{\epsilon}_{X_3|S_3}(1|1)+\overline{\gamma_{\epsilon}} ~P^{\epsilon}_{X_3|S_3}(1|0)\in [\frac{1}{2}-\epsilon, \frac{1}{2}+\epsilon],\\\label{subeq:P_eps 2}
\gamma_{\epsilon} ~P^{\epsilon}_{X_3|S_3}(0|1)+\overline{\gamma_{\epsilon}} ~P^{\epsilon}_{X_3|S_3}(0|0)\in [\frac{1}{2}-\epsilon, \frac{1}{2}+\epsilon].
\end{align}
\end{subequations}
Since the first terms in \eqref{subeq:P_eps 1} and \eqref{subeq:P_eps 2} are non-negative and $\gamma_\epsilon\leq \gamma^*$, then
\begin{align*}
\frac{1}{2}+\epsilon \geq & ~ \overline{\gamma_{\epsilon}} ~P^{\epsilon}_{X_3|S_3}(1|0) \geq  \overline{\gamma^*} ~P^{\epsilon}_{X_3|S_3}(1|0),\\
\frac{1}{2}+\epsilon \geq & ~ \overline{\gamma_{\epsilon}} ~P^{\epsilon}_{X_3|S_3}(0|0) \geq  \overline{\gamma^*} ~P^{\epsilon}_{X_3|S_3}(0|0).
\end{align*}
Since $P^{\epsilon}_{X_3|S_3}(0|0)+P^{\epsilon}_{X_3|S_3}(1|0)=1$, then the above inequalities imply the following
\begin{subequations}\label{subeq:P_eps}
\begin{align}
     \frac{1}{2}+\epsilon   &\geq  \overline{\gamma^*} ~P^{\epsilon}_{X_3|S_3}(1|0) \geq \overline{\gamma^*}-\frac{1}{2}-\epsilon\\
          \frac{1}{2}+\epsilon   &\geq  \overline{\gamma^*} ~P^{\epsilon}_{X_3|S_3}(0|0) \geq \overline{\gamma^*}-\frac{1}{2}-\epsilon
\end{align}
\end{subequations}
From the law of total probability, the first condition in \eqref{eq:cond for P_eps} is equivalent to 
\begin{align*}
\sum_{\underline{s}} \sum_{x_1, x_2} P^{\epsilon}_{\underline{S}}(\underline{s})  P^{\epsilon}_{X_1|S_1}(x_1|s_1)P^{\epsilon}_{X_2|S_2}(x_2|s_2)P^{\epsilon}_{X_3|S_3}(\overline{x_1\+x_2}|s_3)\leq \epsilon,
\end{align*}
where $P^{\epsilon}_{\underline{S}}$ is the joint PMF of the sources with parameters $\sigma_{\epsilon}, \gamma_{\epsilon}$, and $\overline{x_1\+_2 x_2}\deq 1\+_2 x_1\+_2x_2 $. By considering the case $s_1=s_2=s_3=0$, the above inequality implies
\begin{align*}
\epsilon &\geq \sum_{x_1, x_2} \overline{\gamma_{\epsilon}}~ \overline{\sigma_{\epsilon}} ~ P^{\epsilon}_{X_1|S_1}(x_1|0)P^{\epsilon}_{X_2|S_2}(x_2|0)P^{\epsilon}_{X_3|0}(\overline{x_1\+_2 x_2}|0)\\
&\geq \sum_{x_1, x_2}  \overline{\gamma^*}~ \frac{1}{2}~ P^{\epsilon}_{X_1|S_1}(x_1|0)P^{\epsilon}_{X_2|S_2}(x_2|0)P^{\epsilon}_{X_3|0}(\overline{x_1\+_2 x_2}|0)\\
&\geq \sum_{x_1, x_2}   \frac{1}{2}~ (\overline{\gamma^*}-\frac{1}{2}-\epsilon) P^{\epsilon}_{X_1|S_1}(x_1|0)P^{\epsilon}_{X_2|S_2}(x_2|0)=\frac{1}{2}~ (\overline{\gamma^*}-\frac{1}{2}-\epsilon),
\end{align*}
where the third inequality holds from the bounds in \eqref{subeq:P_eps}. As a result, these inequalities imply that $\epsilon \geq \frac{1}{3}(\overline{\gamma^*}-\frac{1}{2})$. From Lemma \ref{lem:MAC in example is unstructured} and the continuity of the mutual information in total variation distance \cite{Csiszar}, the second statement of the lemma follows. 

\end{proof}

\begin{lem}\label{lem: neigh of gamma* is not achievable by CES}
For the setup in Example \ref{ex: CES is suboptimal}, there exists $\epsilon >0$ such that any source triple $(S_1,S_2,S_3)$ with parameters  $(\sigma >0, \gamma\geq \gamma^* - \epsilon)$ does not satisfy the sufficient conditions stated in Proposition \ref{prep: CES_three_user}. 
\end{lem}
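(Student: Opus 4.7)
My plan is to reduce the sum-rate condition of Proposition \ref{prep: CES_three_user} to an inequality that Lemma \ref{lem:total variation dist} forces to fail in a neighborhood of $\gamma^*$.

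The first step is to observe that, whenever $\sigma \in (0, 1/2]$ and $\gamma \in (0, 1/2)$, all univariate common parts of $(S_1, S_2, S_3)$ are deterministic constants. Indeed, $S_1$ and $S_3$ are independent and nondegenerate, so $W_{13}$ (and hence $W_{123}$) is trivial. Any nontrivial common part of two binary sources would force them to coincide or be complements almost surely; but $S_1 \neq S_2$ with probability $\gamma \in (0, 1)$ and $S_2 \neq S_3$ with probability $\sigma \in (0, 1)$, so $W_{12}$ and $W_{23}$ are trivial as well. Consequently, the factorization in \eqref{eq: CES scr joint dist} degenerates so that $\underline{U} = (U_{123}, U_{12}, U_{13}, U_{23})$ is generated independently of $\underline{S}$. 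For any realization $\underline{U} = \underline{u}$, the conditional joint law of $(\underline{S}, \underline{X})$ is $P_{\underline{S}}(\underline{s}) \prod_{i=1}^3 P_{X_i | S_i, \underline{U}}(x_i | s_i, \underline{u})$, so the resulting marginal $P_{X_1 X_2 X_3 | \underline{U} = \underline{u}}$ belongs to $\mathscr{P}_2$ as defined in Lemma \ref{lem:total variation dist}. Applying that lemma to every $\underline{u}$ and averaging yields
\begin{equation*}
I(X_1, X_2, X_3; Y | \underline{U}) \leq 2 - H(N) - \alpha(\gamma^*) = h_b(\gamma^*) - \alpha(\gamma^*).
\end{equation*}

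With all univariate common parts trivial, the fourth inequality in Proposition \ref{prep: CES_three_user} with $\mathcal{B} = \{12, 13, 23\}$ collapses to $H(S_1, S_2, S_3) \leq I(X_1, X_2, X_3; Y | \underline{U})$; since $S_1$ and $S_3$ are independent and $S_2 = S_1 \oplus S_3$, the left-hand side equals $h_b(\sigma) + h_b(\gamma)$. Thus any CES-feasible parameter pair must satisfy $h_b(\sigma) + h_b(\gamma) \leq h_b(\gamma^*) - \alpha(\gamma^*)$. By continuity of $h_b$ at $\gamma^*$, I choose $\epsilon > 0$ such that $h_b(\gamma^*) - h_b(\gamma^* - \epsilon) < \alpha(\gamma^*)$. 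Then, for every $\sigma > 0$ and every $\gamma \geq \gamma^* - \epsilon$,
\begin{equation*}
h_b(\sigma) + h_b(\gamma) > h_b(\gamma^* - \epsilon) > h_b(\gamma^*) - \alpha(\gamma^*),
\end{equation*}
which violates the displayed CES inequality and proves the claim.

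The main obstacle I anticipate lies in the second half of the first paragraph: cleanly arguing that triviality of the univariate common parts lets one replace the CES auxiliary-conditioned mutual information by an average over distributions drawn from $\mathscr{P}_2$, so that the total-variation gap $\alpha(\gamma^*)$ supplied by Lemma \ref{lem:total variation dist} survives the averaging. Once this reduction is secured, the remaining continuity and entropy accounting is essentially bookkeeping.
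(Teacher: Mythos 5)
Your proposal is correct and follows essentially the same route as the paper's proof: trivialize the univariate common parts, reduce the fourth CES inequality to $h_b(\sigma)+h_b(\gamma)\leq I(X_1X_2X_3;Y|\underline{U})$, invoke Lemma \ref{lem:total variation dist} to cap the right side by $h_b(\gamma^*)-\alpha(\gamma^*)$, and finish by continuity of $h_b$ (the paper packages this last step as a contradiction via $h_b^{-1}$, which is equivalent). Your handling of the conditioning --- applying Lemma \ref{lem:total variation dist} pointwise at each $\underline{U}=\underline{u}$ and averaging --- is a slightly more explicit justification of the step the paper dispatches with ``$U'$ is independent of the sources and appears in the conditioning,'' but it is the same idea.
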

\begin{proof}
We prove the lemma by a contradiction. Suppose $\forall \epsilon>0$ there exist $\sigma>0$ and $\gamma\geq \gamma^*-\epsilon$ such that the sufficient conditions in Proposition \ref{prep: CES_three_user} are satisfied. Consider the fourth inequality in Proposition  \ref{prep: CES_three_user}.  Since $\sigma>0$ there is no common part. Let $U'=U_{123}U_{12}U_{13}U_{23}$. Then,  the following holds
\begin{equation}\label{eq: CES outer bound 1}
h(\gamma)+h(\sigma) \leq \max_{p(u')p(\underline{x}|u'\underline{s})} I(X_1X_2 X_3;Y|U'),
\end{equation}
where $$p(\underline{s},\underline{x}, u')=p(\underline{s})p(u')p(x_1|s_1, u')p(x_2|s_2,u')p(x_3|s_3,u').$$
Since $U'$ is independent of the sources, and appears in the conditioning in the mutual information term, the inequality in (\ref{eq: CES outer bound 1}) is equivalent to 
\begin{equation}\label{eq: CES outter bound 2}
h(\gamma)+h(\sigma) \leq \max_{p(\underline{x}|\underline{s})} I(X_1X_2 X_3;Y),
\end{equation}
where $p(\underline{s},\underline{x})=p(\underline{s})p(x_1|s_1)p(x_2|s_2)p(x_3|s_3).$
From Lemma \ref{lem:total variation dist}, the right-hand side in (\ref{eq: CES outter bound 2}) is less than $2-H(N)-\alpha$, for some $\alpha>0$ (which depends only on $\gamma^*$ which is a function of $\delta$). As $h(\gamma^*)=2-H(N)$, by the bound above, $h(\gamma)+h(\sigma)\leq h(\gamma^*)-\alpha$. Thus, as $h(\sigma)>0$, we get $h(\gamma)<h(\gamma^*)-\alpha$. By the continuity and monotonicity of  the binary entropy function, $\gamma < h^{-1}(h(\gamma^*)-\alpha)=
\gamma^* - \lambda(\alpha)$, where $\lambda(\alpha)>0$. Hence, as $\gamma\geq \gamma^*-\epsilon$, then $\epsilon$ must be greater than $\lambda(\alpha)$ which is a contradiction.
%
%
%
\end{proof}
\begin{lem}\label{lem: neigh of gamma is achievable}
There exists a non-negative function $\sigma_0(\gamma)$ such that 1) $\sigma_0(\gamma)>0$ for all $\gamma\in [0,\gamma^*)$, and 2)  any source with parameters $0\leq \gamma \leq \gamma^*, 0\leq\sigma \leq \sigma_0(\gamma)$ is transmissible.
\end{lem}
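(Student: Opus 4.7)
The plan is to exhibit a single-letter joint distribution $P \in \mathscr{P}$ satisfying the sufficient conditions of Theorem~\ref{them: achievable-rate-for -proposed scheme} for all $(\sigma, \gamma)$ with $\gamma \in [0, \gamma^*)$ and $\sigma$ sufficiently small. Since the univariate common parts $W_{ij}, W_{123}$ are all trivial when $\sigma > 0$ and the additive common parts equal the sources themselves ($T_i = S_i$ with $T_1 \oplus T_2 \oplus T_3 = 0$), I would take a time-sharing auxiliary $U_{123} \in \{0,1\}$ with $\Pr\{U_{123} = 1\} = \alpha$ (and the remaining $U_{ij}$ set to constants), together with $V_1, V_2$ i.i.d.\ $\operatorname{Bern}(1/2)$ and $V_3 = V_1 \oplus V_2$.

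The conditional distribution $P_{X_i \mid S_i, U_{123}, V_i}$ would operate in two phases. In the linear phase $U_{123} = 1$, I set $X_i = V_i$ deterministically, which forces $X_3 = X_1 \oplus X_2$ almost surely and reduces the channel to $Y = 2V_3 \oplus_4 N$. This phase contributes single-letter mutual information $2 - H(N) = h_b(\gamma^*)$ to bounds conditioned on $V_1$ (or $X_1$), but contributes nothing to bounds conditioned on $V_3$ (or $X_3$), since $Y$ is then determined modulo $N$. In the unstructured phase $U_{123} = 0$, I set $X_1$ to be $\operatorname{Bern}(1/2)$ independent of $S_1, V_1$ and $X_2 = X_3 = 0$, so the channel reduces to the point-to-point channel $X_1 \mapsto X_1 \oplus_4 N$ with strictly positive single-letter capacity $C_1 \deq \max_{P_{X_1}} I(X_1; X_1 \oplus_4 N) > 0$. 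This phase contributes $C_1$ to bounds conditioned on $V_3$ (or $X_3$) but nothing to bounds conditioned on $X_1$.

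The next step is to verify the conditions of Theorem~\ref{them: achievable-rate-for -proposed scheme} under this construction. Those inequalities in \eqref{eq: transmittable bounds} whose left-hand side involves $H(S_i \mid S_j, S_k)$ or $H(\underline{S} \mid \underline{T})$ (namely the first, third, fourth, and fifth) are trivial under our source model. The remaining inequalities, after conditioning on $U_{123}$ and averaging over the two phases, reduce to the two binding inequalities
\begin{align*}
h_b(\gamma) &\leq \alpha\, h_b(\gamma^*), \\
h_b(\sigma) &\leq (1-\alpha)\, C_1.
\end{align*}
The first arises from $H(S_2, S_3 \mid S_1) = h_b(\gamma) \leq I(X_2, X_3; Y \mid S_1, U_{123}, V_1, X_1)$, which only the linear phase supports. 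The second arises from $H(S_1, S_2 \mid S_3) = h_b(\sigma) \leq I(X_1, X_2; Y \mid S_3, U_{123}, V_3, X_3)$, which only the unstructured phase supports. The remaining inequalities in Theorem~\ref{them: achievable-rate-for -proposed scheme} can be seen to be implied by these two via routine case analysis, using $H(\underline{S} \mid \cdot) \leq h_b(\sigma) + h_b(\gamma)$ throughout.

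To conclude, I set $\alpha = h_b(\gamma)/h_b(\gamma^*)$, which lies in $[0,1)$ precisely when $\gamma \in [0, \gamma^*)$, and define
\begin{align*}
\sigma_0(\gamma) \deq h_b^{-1}\!\bigl( (1 - h_b(\gamma)/h_b(\gamma^*))\, C_1 \bigr).
\end{align*}
Then $\sigma_0(\gamma) > 0$ for $\gamma \in [0, \gamma^*)$, both binding constraints hold for every $\sigma \in [0, \sigma_0(\gamma)]$, and Theorem~\ref{them: achievable-rate-for -proposed scheme} yields transmissibility. The main obstacle I anticipate is the mechanical case analysis needed to confirm that none of the remaining inequalities in Theorem~\ref{them: achievable-rate-for -proposed scheme}, enumerated over $\mathcal{B} \subseteq \{12,13,23\}$, distinct $i,j,k \in \{1,2,3\}$, and $(a,b) \in \FF_2^2$, imposes a strictly stronger constraint than the two binding ones above.
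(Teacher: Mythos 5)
Your proof is correct, and it reaches the lemma by a genuinely different instantiation of Theorem~\ref{them: achievable-rate-for -proposed scheme} than the paper's. The paper uses no time-sharing at all: it keeps $X_2=V_2$, $X_3=V_3$ throughout and sets $X_1=V_1\oplus E_1$ with $E_1\sim\operatorname{Bern}(\alpha)$, so user 1's fresh information rides as a perturbation of its own linear codeword; the resulting constraints are $h_b(\gamma)\le(1-\alpha)h_b(\gamma^*)$, $h_b(\sigma)\le\eta_1(\alpha)$ with $\eta_1(\alpha)\deq H(E_1\oplus_4 N)-H(N)$, and $h_b(\gamma)+h_b(\sigma)\le h_b(\gamma^*)-\eta_2(\alpha)$, and the paper must establish $\eta_1(\alpha)>0$ via strict convexity of $h_b$, invoke continuity of $\eta_1,\eta_2$ at $\alpha=0$, and define $\sigma_0$ through a max-min formula. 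You instead time-share through $U_{123}$ between a pure linear phase and a pure point-to-point phase for user 1, which makes the two binding constraints exactly linear in $\alpha$; moreover, the only nontrivial sum constraint --- the last inequality of \eqref{eq: transmittable bounds} with $a=b=0$, whose right-hand side is $I(X_1,X_2,X_3;Y\,|\,U_{123})=\alpha h_b(\gamma^*)+(1-\alpha)C_1$ --- is precisely the sum of your two binding inequalities, so the ``routine case analysis'' you defer does go through (I verified the cases over $(i,j,k)$, $\mathcal{B}$, and $(a,b)$), and you obtain a closed-form $\sigma_0(\gamma)$ where the paper obtains a max-min expression. What each approach buys: yours gives a cleaner, explicit $\sigma_0$ and a transparent additivity of the sum constraint; the paper's avoids the coordination auxiliary $U_{123}$ entirely and keeps all three users transmitting structured codewords in every channel use. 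Two small repairs to yours: (i) you fix $X_1\sim\operatorname{Bern}(1/2)$ in the unstructured phase but define $C_1$ as a maximum over input distributions --- either use the maximizing $P_{X_1}$ in the construction or define $C_1$ at the uniform input, so that the phase really contributes the constant appearing in your $\sigma_0$; (ii) positivity of $C_1$ deserves one line, e.g.\ the output distributions under $X_1=0$ and $X_1=1$ are $P_N=(\tfrac12-\delta,\tfrac12,\delta,0)$ and its cyclic shift $(0,\tfrac12-\delta,\tfrac12,\delta)$, which are distinct; this plays the role of the paper's convexity argument for $\eta_1(\alpha)>0$.
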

\begin{proof}
 For the setup in Example \ref{ex: CES is suboptimal}, the bounds given in Theorem \ref{them: achievable-rate-for -proposed scheme} are simplified to 
 \begin{subequations}\label{eq: transmittable sources}
\begin{align}\label{eq: transmittable sources bound 1}
h(\gamma) &\leq I(X_2 X_3;Y| X_1 S_1V_1)\\\label{eq: transmittable sources bound 2}
h(\sigma) &\leq I(X_1 X_2; Y|X_3 S_3 V_3)\\\label{eq: transmittable sources bound 3}
h(\gamma)+h(\sigma)-h(\sigma * \gamma)&\leq I(X_1 X_3;Y| X_2 S_2 V_2)\\\label{eq: transmittable sources bound 4}
h(\gamma)+h(\sigma)&\leq I(X_1 X_2 X_3;Y).
\end{align}
\end{subequations}


Let $E_1\sim Ber(\alpha)$, and set $X_1=V_1\+E_1$ and $X_2=V_2, X_3=V_3$, where $(V_1,V_2,V_3)$ are as in Theorem \ref{them: achievable-rate-for -proposed scheme}; that is they are pairwise independent Bernoulli random variables with joint PMF $P_{V_1,V_2,V_3}=\frac{1}{4}\11\{V_3=V_1\+_2V_2\}$. Next, using these random variables, we further simplify the conditions in  \eqref{eq: transmittable sources}.

We start by the first condition given in \eqref{eq: transmittable sources bound 1}. The right-hand side is simplified to
\begin{align}\nonumber
I(X_2 X_3;Y| X_1 S_1V_1)&=H((X_1\+_2 X_2)\+_4X_3\+_4N|X_1 V_1)-H(N)\\\nonumber
&=H((E_1\+_2V_1\+_2V_2)\+_4(V_1\+_2V_2)\+_4N|E_1, V_1)-H(N)\\
&=P(E_1=0) [ H((V_1\+_2V_2)\+_4(V_1\+_2V_2)\+_4N|V_1)-H(N)]\\\label{eq:transmittable sources bound 1 simpl}
&=(1-\alpha)(2-H(N)),
\end{align}
where the first equality holds as $Y=(X_1\+_2X_2)\+_4X_3\+_4N$ and $X_i, i=1,2,3$ are independent of the sources. The fourth equality holds as $H(X\+_4X\+_4N)=2$ and $H((1\+_2X)\+_4X\+_4N)=H(N)$  when $X$ is uniform over $\{0,1\}$. Therefore, from \eqref{eq:transmittable sources bound 1 simpl}, the first condition gives 
$h_b(\gamma)\leq 2-H(N)$. This condition is always satisfied for any $\gamma \leq \gamma^*$. This is due to the monotonicity of the binary entropy function.

Next, we evaluate the second condition given by \eqref{eq: transmittable sources bound 2}.  Using a similar argument, the right-hand side of \eqref{eq: transmittable sources bound 2} is simplified to
\begin{equation}\label{eq:transmittable sources bound 2 simpl}
I(X_1 X_2; Y|X_3 S_3 V_3)=H((X_1\+_2X_2)\+_4N|X_3 V_3)-H(N)=H(E_1\+_4N)-H(N).
\end{equation}
Hence, the second condition gives $h_b(\sigma)\leq \eta_1(\alpha)$, where $\eta_1(\alpha)\deq H(E_1\+_4N)- H(N)$. We show that $\eta_1(\alpha)$ is strictly positive for all $\alpha\in (0,\frac{1}{2}]$. For that we have $H(N)=1+\frac{1}{2}h_b(2\delta)$ and 
\begin{align*}
H(E_1\+_4N)&=1+\frac{1}{2}[h_b(2\alpha\delta)+h_b(2(1-\alpha)\delta+\alpha)]\\
&\geq 1+\frac{1}{2}[h_b(2\alpha\delta)+(1-\alpha)h_b(2\delta)],
\end{align*}
where the first inequality holds due to the convexity of binary entropy function and the fact that $h_b(1)=0$. 
Hence, $\eta_1(\alpha)\geq\frac{1}{2}[h_b(2\alpha\delta)-\alpha h_b(2\delta)]$. When $\delta \in (0, \frac{1}{4}]$, the equality $h_b(2\alpha\delta)=\alpha h_b(2 \delta)$  holds if and only if $\alpha\in \{0,1\}$. As a result of this and due to the convexity of binary entropy, the strict inequality $h_b(2\alpha\delta)>\alpha h_b(\delta)$ holds.

For the third and fourth conditions, the right-hand sides of \eqref{eq: transmittable sources bound 3} and \eqref{eq: transmittable sources bound 4} are simplified to
\begin{align}\label{eq:transmittable sources bound 3 simpl}
I(X_1 X_3;Y| X_2 S_2 V_2)&=H((V_1\+_2E_1)\+_4V_1\+_4N)-H(N)\\\label{eq:transmittable sources bound 4 simpl}
I(X_1 X_2 X_3;Y)&=H((E_1\+_2V_1\+_2V_2)\+_4(V_1\+_2 V_2)\+_4N)-H(N).
\end{align}
Since $V_1$ and $V_1\+_2V_2$ are both uniform over $\{0,1\}$, then the above two terms are equal. Let $\eta_2(\alpha)\deq 2-H((V_1\+_2E_1)\+_4V_1\+_4N)$. Note that $0\leq \eta_2(\alpha)\leq 2-H(N)$. Moreover, from Lemma \ref{lem:MAC in example is unstructured}, $\eta_2(\alpha)$ is strictly positive for any $\alpha \in (0, \frac{1}{2}]$. With this argument, the third and fourth conditions become 
\begin{align*}
h(\gamma)+h(\sigma)-h(\sigma * \gamma)\leq 2-H(N)-\eta_2(\alpha), \quad \text{and}~~
h(\gamma)+h(\sigma) \leq 2-H(N)-\eta_2(\alpha).
\end{align*}
Since the right-hand sides are equal and $h(\sigma * \gamma)\geq 0$, the third condition is trivial. 
 
As a result of the above argument, we obtain the following sufficient conditions:
\begin{subequations}\label{subeq:transmittable sources simp}
\begin{align}\label{subeq:transmittable sources simp 1}
h(\gamma)&\leq (1-\alpha)[2-H(N)]\\
h(\sigma) &\leq \eta_1(\alpha)\\
h(\gamma)+h(\sigma) &\leq 2-H(N)-\eta_2(\alpha)
\end{align}
\end{subequations}
For any $\gamma\leq \gamma^*$, inequality \eqref{subeq:transmittable sources simp 1} holds if $\alpha \leq 1-\frac{h_b(\gamma)}{h_b(\gamma^*)}$. Note that $\eta_1(\alpha)>0$ and $\eta_2(\alpha)> 0$ for all $\alpha \in (0, \frac{1}{2}]$, and $\eta_1(0)=\eta_2(0)= 0$. Further, they are continuous functions of $\alpha$ with $\lim_{\alpha \rightarrow 0} \eta_i(\alpha)=0, i=1,2$. Therefore, for any $\gamma<\gamma^*$, there exists $\alpha_0>0$ such that for any $\alpha\in (0,\alpha_0)$, inequality \eqref{subeq:transmittable sources simp 1} holds and $ h_b(\gamma^*)-h_b(\gamma)-\eta_2(\alpha)>0$. For any $\gamma\leq\gamma^*$, define
\begin{equation}
\sigma_0(\gamma)\deq h_b^{-1}\Big(\max_{0\leq \alpha \leq 1-h_b(\gamma)/h_b(\gamma^*) }  \min\big\{\eta_1(\alpha), ~h_b(\gamma^*)-\eta_2(\alpha)-h_b(\gamma)  \big\}  \Big).
\end{equation}
Note that the inequalities in \eqref{subeq:transmittable sources simp} are satisfied for $\gamma\leq \gamma^*$ and $\sigma=\sigma_0(\gamma)$. Hence, from the monotonicity of binary entropy function, these inequalities are also satisfied for $\sigma\leq \sigma_0(\gamma)$. This implies that any source with such parameters are transmissible. 
\end{proof}
The final step in our argument is as follows. Fix $\gamma \in (\gamma^*-\epsilon, \gamma^*)$, where $\epsilon$ is as in Lemma \ref{lem: neigh of gamma* is not achievable by CES}. From Lemma \ref{lem: neigh of gamma is achievable}, the source with such $\gamma$ and the parameter $\sigma = \sigma_0(\gamma)>0$ is transmissible; whereas from Lemma \ref{lem: neigh of gamma* is not achievable by CES} it is not transmissible using CES. Figure \ref{fig:CES_subopt} shows the set of parameters whose sources are transmissible. 

\begin{figure}[hbtp]
\centering
\includegraphics[scale=0.7]{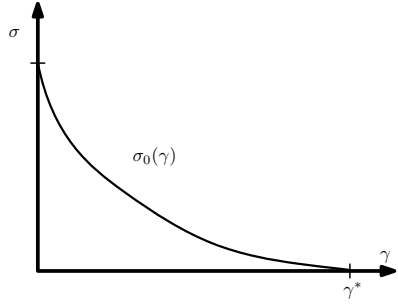}
\caption{The parameters $\sigma$ and $\gamma$ described in Lemma \ref{lem: neigh of gamma is achievable}.}
\label{fig:CES_subopt}
\end{figure}

%

\section{Proof of Theorem \ref{thm: MAC-FB achievable}}\label{sec: thm 1}
\subsection{Codebook Construction }

{ We build upon a class of codes called quasi linear codes (QLCs) \cite{QLC-ISIT16}. A QLC is defined as a subset of a linear code. By definition, any linear codebook can be viewed as the image of a linear transformation $\phi:\FF_q^k \mapsto \FF_q^n$, where $q$ is a prime number. In another words, the codewords of such a linear code are $\phi(\bfu^k), \bfu^k \in \FF_q^k$. In this representation,  a QGC over a finite field $\FF_q$ is defined as
  \begin{align} \label{eq: QLC codebook}
 \mathcal{C}\deq \{ \phi(\mathbf{u}): \mathbf{u}\in \mathcal{U}\}, 
\end{align}
where $\mathcal{U}$ is a given subset of $\FF_q^k$.  If $\mathcal{U}=\FF_q^k$, then $\mathcal{C}$ is a linear codebook.  

We begin the proof by generating a QGC for each user. Let $(W_1, W_2, W_3)$ be the random variables as in the statement of the theorem. For a fixed $\epsilon\in (0,1)$, consider the set of all $\epsilon$-typical sequences $\mathbf{w}_i^k$. Without loss of generality assume that the new message at the $i$th encoder is a sequence $\mathbf{w}_i^k$ which is selected randomly and uniformly from $A_\epsilon^{(k)}(W_i)$.  In this case $M_i=|A_\epsilon^{(k)}(W_i)|, i=1,2,3$.  }

We generate three codebooks for each user at each block $l\in [1, L]$. The codebook generations are described in the following:

{  \textbf{Codebook 1:}
For each block $l\in [1, L]$ generate $M_{0,[l]}$ sequences randomly and independently according to $P_U^n$. The parameter $M_{0,[l]}$ is to be defined later. Denote such sequences by $\bfU_{[l]}(m)$, where $m\in [1, M_{0,[l]}]$. 

 \textbf{Codebook 2:}
At each user $i=1,2,3$ and for any vector $\mathbf{w}_i^k \in \FF_2^k$, denote $$\mathbf{t}_i(\bfw_i^k)\deq \bfw_i^k \mathbf{G}\+\bfb_i^n, \quad i=1,2,3,$$
where $\mathbf{G}$ is a $k\times n$ matrix with elements chosen randomly and uniformly from $\FF_q$, and $\bfb_i^n $ is a vector selected randomly and uniformly from $\FF_q^n$. 

 \textbf{Codebook 3:}
For each user $i=1,2,3$ and given $\bfu^n \in \mathcal{U}^n$ and $\mathbf{t}^n ,\bfv^n \in \FF_q^n$ generate $M_i$ sequences randomly and independently according to the conditional distribution $\prod_{j=1}^n P(\cdot | u_j, t_j, v_j)$. Denote such sequences by $\bfx_i(\bfu^n, \bft^n, \bfv^n, m_i)$, where $m_i\in [1:M_i], i=1,2,3$. 

 }
 
\textbf{Initialization:}
Note that we are using the following notation: the subscript with bracket $[\cdot]$ denotes the index of a block, subscript without a bracket denotes the index of a user, and in line bracket $(\cdot)$ denotes the index of a codeword in a corresponding codebook. When it is clear from the context, we drop the index of the codewords.

For block $l=0$, set $M_{0,[0]}=1$. For block $l=1$, set $M_{0,[1]}=1, \mathbf{v}_{i,[1]}=\bf0$ for $i=1,2,3$. For block $l=2$, set  $M_{0,[2]}=1$. Let $\bf A \in \FF_q^{3\times 3}$, and by $a_{ij}$ denote the element in $i$th row and $j$th column. By $\mathbf{A}_i, i=1,2,3,$ denote the $i$th column of $\bf A$. At each block User $i$ intends to decode a linear combination of the messages with coefficients determined by $\mathbf{A}_i$. 

\subsection{Encoding and Decoding } 

{\textbf{Block $l=1$:}}
At block $l=1$, a new message $\mathbf{w}_{i, [1]} \in A_\epsilon^{(k)}(W_i)$, $i=1,2,3,$ is to observed by the $i$th user. Given the message, the $i$th encoder calculates $\bft_i(\mathbf{w}_{i,[1]})$. This sequence is denoted by $\bft_{i,[1]}$. Next, the encoder sends $\bfx_i(\bfu_{[1]},\bft_{i,[1]}, \bfv_{i,[1]}, \mathbf{w}_{i,[1]})$ over the channel. For shorthand, we denote such sequence by $\mathbf{x}_{i, [1]}$. The encoding and decoding processes in this block are shown in Table \ref{tab:MACFB User 1}.

{\textbf{Block $l=2$:}}

At the beginning of this block, each user receives $Y_{[1]}$ as feedback from the channel. User $i, i=1,2,3,$ wishes to decode the linear combination ${\bfw}_{\mathbf{A}_i, [1]}\deq a_{1i} \bfw_{1,[1]}\+a_{2i}\bfw_{2,[1]}\+a_{3i}\bfw_{3,[1]}$. 
Since, $\bfw_{i,[1]}$ is known at User $i$, then it finds  $\hat{\bfw}_{\mathbf{A}_i, [1]} \in A_\epsilon^{(k)}({W}_{\bf A_i}| \bfw_{i,[1]})$ such that
\begin{equation}\label{eq:MAC-FB decoding sum} 
 (\hat{\bfw}_{\mathbf{A}_i,[1]}\mathbf{G}\+\bfb_{\mathbf{A}_i}, Y_{[1]})\in A_{\epsilon}^{(n)}({T}_{\mathbf{ A}_i}, Y|\bfu_{[1]}, \bft_{1,[1]}, \bfx_{1,[1]}),
\end{equation}
where $W_{\mathbf{ A}_i}\deq a_{1i} W_1\+a_{2i}W_2\+a_{3i}W_3, \bfb_{\mathbf{A}_i}\deq a_{1i} \bfb_1\+a_{2i}\bfb_2\+ a_{3i}\bfb_3$ and $T_{\mathbf{ A}_i}\deq a_{1i} T_1\+a_{2i}T_2\+a_{3i}T_3$.
A decoding error $E_{i, [2]}, i=1,2,3,$ is declared if $\hat{\bfw}_{\mathbf{A}_i,[1]}$ is not found or is not unique. If it is unique, the encoder sets $\bfv_{i,[2]}=\hat{\bfw}_{\mathbf{A}_i, [1]}\mathbf{G}\+\bfb_{\mathbf{A}_i}$. Otherwise, $\bfv_{i,[2]}$ is generated at random from $\FF_q^n$.

Next, a new message $\mathbf{w}_{i, [2]}, i=1,2,3,$ is observed at the $i$th encoder. Similar to the encoding process at the first block, the $i$th encoder calculates $\bft_{i,[2]}$ and sends $\bfx_i(\bfu_{[2,1]},\bft_{i,[2]}, \bfv_{i,[2]}, \mathbf{w}_{i,[2]})$. For shorthand, such sequence is denoted by $\mathbf{x}_{i, [2]}$.  The encoding and decoding processes in this block are shown in Table \ref{tab:MACFB User 1}.

{\textbf{Block $l>2$:}}
Each user performs two decoding and three encoding processes in this block.
It is assumed that each encoder knows the common information  given by $\bfu_{[l-2]}$ and $\bfu_{[l-1]}$. For $l=3$, this is clear because $M_{0,[1]]}=M_{0,[2]}=1$. We will explain 
how this knowledge is acquired, and 
how $\bfu_{[l]}$ is generated after describing the decoding 
process.

 The first decoding process is the same as the decoding process in block $l=2$. At the beginning of the block $l>2$, User $i$ observes $Y_{[l-1]}$ as feedback from the channel and wishes to decode the linear combination ${\bfw}_{\mathbf{A}_i, [l-1]}\deq a_{1i} \bfw_{1,[l-1]}\+a_{2i}\bfw_{2,[l-1]}\+a_{3i}\bfw_{3,[l-1]}$. This decoding process is the same as the one in block $l=2$; it is successful, if the sequences $\hat{\bfw}_{\mathbf{A}_i,[l-1]}$ is unique.  Then, the codeword $\bfv_{i,[l]}$ is generated at User $i$, where $i=1,2,3$. If the decoding process at User $i, i=1,2,3,$ is not successful, an error event $E_{i, [l]}$ is declared and a codeword $\bfv_{i,[l]}$ is generated at random.

{ Next, we explain the second decoding process. Given $(\bfY_{[l-2]}, \bfY_{[l-1]})$, User $i$ decodes the messages of the other two encoders from block $l-2$. For that, User $1$ finds unique $\hat{\bfw}_{2,[l-2]}\in A_\epsilon^{(k)}(W_2)$ and $\hat{\bfw}_{3,[l-2]}\in A_\epsilon^{(k)}(W_3)$ such that 
\begin{subequations}\label{eq: Type 2 decoding error}
 \begin{align}\label{eq: Type 2 decoding error eq 1}
&a_{1,1}\bfw_{1,[l-2]} \+ a_{2,1} \hat{\bfw}_{2,[l-2]}\+ a_{3,1}\hat{\bfw}_{3,[l-2]}=\hat{\bfw}_{\mathbf{A}_1, [l-1]}, \quad \text{and}\\\nonumber
\Big(& \hat{\bft}_{2,[l-2]}, \hat{\bfx}_{2,[l-2]}, \hat{\bft}_{3,[l-2]}, \hat{\bfx}_{3,[l-2]},\hat{\bfv}_{2,[l-1]},\hat{\bfv}_{3,[l-1]},\bfY_{[l-2]}, \bfY_{[l-1]} \Big)\\\label{eq: Type 2 decoding error eq 2}
& \hspace{80pt}  \in A_\epsilon^{(n)}\Big(\tilde{T}_2\tilde{X}_2\tilde{T}_3\tilde{X}_3V_2 V_3\tilde{Y}Y |  \bfs_{1,[l-2]},  \bfs_{1,[l-1]}, \bfv_{2,[l-2]}, \bfv_{3,[l-2]}, \bfu_{[l-1]}, \bfu_{[l-2]}\Big)
%
%
\end{align} 
\end{subequations}
where $\bfu_{[l-1]}, \bfu_{[l-2]}, \bfv_{i,[l-2]}$ are known at the encoder from the previous blocks and
\begin{align*}
\hat{\bft}_{i,[l-2]}&\deq \bft_i(\hat{\bfw}_{i,[l-2]}),\\
\hat{\bfx}_{i,[l-2]}&\deq \bfx_i\big(\bfu_{[l-2]}, \hat{\bft}_{i,[l-2]}, \bfv_{i,[l-2]},\hat{\bfw}_{i,[l-2]}\big),\\
\hat{\bfx}_{i,[l-2]}&\deq \bfx_i\big(\bfu, \hat{\bft}_{i,[l-2]}, \bfv_{i,[l-2]},\hat{\bfw}_{i,[l-2]}\big),\\
\hat{\bfv}_{2,[l-1]}&\deq (a_{1,2}\bfw_{1,[l-2]} \+ a_{2,2} \hat{\bfw}_{2,[l-2]}\+ a_{3,2}\hat{\bfw}_{3,[l-2]})\mathbf{G}\+\bfb_{\mathbf{A}_2},\\
\hat{\bfv}_{3,[l-1]}&\deq (a_{1,3}\bfw_{1,[l-2]} \+ a_{2,3} \hat{\bfw}_{2,[l-2]}\+ a_{3,3}\hat{\bfw}_{3,[l-2]})\mathbf{G}\+\bfb_{\mathbf{A}_3}.
\end{align*}
If the messages are not unique, an error event will be declared. { This decoding process is repeated for User 2 and 3.  With these decoding processes each user obtains an estimate of the messages of the other two users.  By $\tilde{E}_{i,[l]}$ denote the error event in the second phase of the decoding process at User $i$ and block $l$.}

{ Next, the transmitters and the receiver generate a common list of highly likely messages for block $l-2$. In what follows, we define this list.  For any triplet of the messages $(\tilde{\bfw}_1,\tilde{\bfw}_2,\tilde{\bfw}_3)$ let 
\begin{align*}
\tilde{\bfx}_{i,[l-2]}(\tilde{\bfw}_i)&\deq \bfx_i\big(\bfu_{[l-2]}, \bft_i(\tilde{\bfw}_{i}), \bfv_{i,[l-2]},\tilde{\bfw}_{i}\big)
\end{align*}
where $\bfu_{[l-2]}$ and $\bfv_{i,[l-2]}, i=1,2,3,$ are known from previous block. For shorthand denote $$\underline{\tilde{\bfx}}_{[l-2]}(\underline{\tilde{\bfw}})\deq \big(\tilde{\bfx}_{i,[l-2]}(\tilde{\bfw}_i)\big)_{i=1,2,3}, \qquad \underline{\tilde{\bft}}(\underline{\tilde{\bfw}})\deq \big(\bft_i(\tilde{\bfw}_{i}) \big)_{i=1,2,3}.$$
Next, given the channel output $Y_{[l-2]}$, define the list of highly likely messages corresponding to block $l-2$ as 
\begin{align} \label{eq:L list}
\mathcal{L}[l-2] \deq \Big\{\underline{\tilde{\bfw}} &\in A_\epsilon^{(n)}(W_1,W_2,W_3): \big(Y_{[l-2]}, \bfu_{[l-2]}, \underline{\tilde{\bfx}}_{[l-2]}(\underline{\tilde{\bfw}}),  \underline{\tilde{\bft}}(\underline{\tilde{\bfw}})\big) \in A_\epsilon^{(n)}(\tilde{Y},\tilde{U}, \underline{\tilde{X}}, \underline{\tilde{T}} )\Big\}
\end{align}  }
where $\underline{\tilde{\bfw}}\deq (\tilde{\bfw}_1,\tilde{\bfw}_2,\tilde{\bfw}_3), ~ \underline{\tilde{X}}\deq (\tilde{X}_1, \tilde{X}_2, \tilde{X}_3)$ and $\underline{\tilde{T}}\deq (\tilde{T}_1, \tilde{T}_2, \tilde{T}_3)$. Note that the set $\mathcal{L}[l-2]$ represents the uncertainty of the receiver about the transmitted messages at block $l-2$. This list can be calculated at the transmitters as well as the receiver. Set $M_{0, [l]}=|\mathcal{L}[l-2]|$ as the size of codebook 1. Index all members of $\mathcal{L}[l-2]$ by $m\in [1, M_{0, [l]}]$.

Suppose the decoding processes in the transmitters are successful, which means the messages are estimated correctly.  Suppose $\hat{\bfw}_{2,[l-2]},\hat{\bfw}_{3,[l-2]}$ are the estimated messages at User 1.  If $(\bfw_{1,[l-2]},\hat{\bfw}_{2,[l-2]},\hat{\bfw}_{3,[l-2]}) \in \mathcal{L}[l-2]$, then the first encoder finds its index (say $m_1$) in $\mathcal{L}[l-2]$. Similarly, User 2 and 3 find the index of their estimated messages (say $m_2$ and $m_3$).  Since the decoding processes are assumed to be successful, these indices are equal, i.e., $m_1=m_2=m_3=m$. Therefore, the transmitters can calculate the corresponding codeword in codebook 1, i.e., $\bfu_{[l]}(m)$. Note that the receiver is not able to find $m$. This is because each transmitter knows its own message and has less uncertainty comparing to the receiver.  The objective of Codebook 1 is to resolve the uncertainty at the decoder. 

The next step is the encoding process for block $l$ which is similar to the previous blocks. Given a new message $\mathbf{w}_{i, [2]}, i=1,2,3,$ at User $i$, the sequence $\bft_{i,[l]}$ is calculated and  the codeword $\bfx_i(\bfu_{[l]},\bft_{i,[l]}, \bfv_{i,[l]}, \mathbf{w}_{i,[l]})$ is sent to the channel. For shorthand, the transmitted codeword is denoted by $\mathbf{x}_{i, [l]}$.  The encoding and decoding processes in this block are shown in Table \ref{tab:MACFB User 1} and \ref{tab:MACFB User i}. }

\begin{table}
\centering
\caption{The decoding and encoding processes for User 1 in blocks $l=1,2,3$.}
\label{tab:MACFB User 1}
\begin{tabular}{|c|c|c|c|}
\cline{2-4}
    \multicolumn{1}{c|}{} & $l=1$ & $l=2$ & $l=3$  \\ [1ex]
\hline 
Decoding 1 & $\bfv_{1, [1]}=\bf0$ & \begin{tabular}{c} $\hat{\bfw}_{\mathbf{A}_1,[1]},~ \bfv_{1, [2]}={\bft}_{\mathbf{A}_1,[1]}$\end{tabular} &${\bfw}_{\mathbf{A}_1,[2]}$, $\bfv_{1, [3]}=\bft_{\mathbf{A}_1, [2]}$ \\  [1ex]
\hline \rule[-1ex]{0pt}{2.5ex}
Decoding 2 & --- & --- & $(\hat{\bfw}_{2,[1]},\hat{\bfw}_{3,[1]})$ \\ [1ex]
\hline \hline
Encoding 1 & $M_{0,[1]}=1, \bfu_{[1]}$ & $M_{0,[2]}=1, \bfu_{[2]}$ & $\mathcal{L}_{[1]},~ M_{0,[3]}=|\mathcal{L}_{[1]}|,  \bfu_{[3]}$  \\ [1ex]
\hline 
Encoding 2 & $\bfw_{1, [1]}, \bft_{1,[1]}(\bfw_{1, [1]})$ & $\bfw_{1, [2]}, \bft_{1,[2]}(\bfw_{1, [2]})$ & $\bfw_{1, [3]}, \bft_{1,[3]}(\bfw_{1, [3]})$ \\ [1ex]
\hline 
Encoding 3 & $\bfx_1(\bfu_{[1]}, \bft_{1,[1]}, \bfv_{1,[1]}, \bfw_{1, [1]}) $ & $\bfx_1(\bfu_{[2]}, \bft_{1,[2]}, \bfv_{1,[2]}, \bfw_{1, [2]}) $ & $\bfx_1(\bfu_{[3]}, \bft_{1,[3]}, \bfv_{1,[3]}, \bfw_{1, [3]}) $  \\ [1ex]
\hline 
\end{tabular} 

\end{table}

\begin{table}[!htbp]
\centering
\caption{The decoding and encoding processes for User i in block $l$.}
\label{tab:MACFB User i}
\begin{tabular}{|c|c|c|c|}
\cline{2-2}
    \multicolumn{1}{c|}{} & block: $l$ \\ 
\hline 
Decoding 1 &  \begin{tabular}{c} $\hat{\bfw}_{\mathbf{A}_i,[l-1]}= a_{1i} \bfw_{1,[l-1]}\+a_{2i}\bfw_{2,[l-1]}\+a_{3i}\bfw_{3,[l-1]}$\\ $\bfv_{i, [l]}={\bft}_{\mathbf{A}_i,[l-1]}$\end{tabular} \\ [1ex]
\hline
Decoding 2 & $(\hat{\bfw}_{j,[l-2]},\hat{\bfw}_{k,[l-2]})$ \\[1ex] 
\hline \hline
Encoding 1 & $\mathcal{L}_{[l-2]}, \bfu_{[l]}$  \\ [1ex]
\hline 
Encoding 2 & $\bfw_{i, [l]}, \bft_{i,[l]}(\bfw_{i, [l]})$ \\ [1ex]
\hline 
Encoding 3 & $\bfx_i(\bfu_{[l]}, \bft_{i,[l]}, \bfv_{i,[l]}, \bfw_{i, [l]}) $ \\ [1ex]
\hline 
\end{tabular} 

\end{table}

\paragraph*{\textbf{Decoding at block $l$}}
The decoder knows the list of highly likely messages. This list is $\mathcal{L}[l-2]$ as defined in \eqref{eq:L list}. Given $Y_{[l]}$ the decoder wishes to decode $U_{[l]}$ using which it can find the transmitted messages at block $l-2$. This decoding process is performed by finding an index $m\in [1: M_{0,[l]}]$ such that $$(U_{[l,m]}, Y_{[l]}) \in A_\epsilon^{(n)}(U, Y).$$
If the index is not found or is not unique, then an error event $E_{d, [l]}$ is declared. 

\subsection{Error Analysis}
There are three types of decoding errors:

{ \qquad  1. Error in decoding the linear combination of the messages, i.e., ${E}_{i,[l]}, i=1,2,3, l \geq 2$.

\qquad 2. Error in the decoding of the messages of the other encoders, i.e., $\tilde{E}_{i,[l]}, i=1,2,3, l \geq 3$.

\qquad  3. Error at the decoder, i.e. ${E}_{d,[l]}, l \geq 3$.  }

{ The total error probability is the probability of the union of above error events: 
\begin{align}\nonumber
P_e&=\PP\left\{\bigcup_{l\geq 2} \left({E}_{d,[l]} \medcup \left[ \medcup_{i=1}^3 {E}_{i,[l]} \medcup \tilde{E}_{i,[l]}\right] \right)\right\}\\\nonumber
&\leq B~ \PP\left\{{E}_{d,[3]} \medcup \left[ \medcup_{i=1}^3 {E}_{i,[3]} \medcup \tilde{E}_{i,[3]}\right]\right\}\\\nonumber
&\leq B~ \PP\left\{\medcup_{i=1}^3 {E}_{i,[3]} \medcup \tilde{E}_{i,[3]}\right\} + B~ \PP\left\{{E}_{d,[3]}~ \Big|~ \medcap_{i=1}^3 {E}^c_{i,[3]} \medcap \tilde{E}^c_{i,[3]}\right\}\\\label{eq:MACFB Pe bound}
&\leq B~ \sum_{i=1}^3 \left[  \PP\{{E}_{i,[3]}\} +\PP\{ \tilde{E}_{i,[3]} ~\big|~  {E}^c_{i,[3]}\}\right] + B~ \PP\left\{{E}_{d,[3]} ~ \Big|~  \medcap_{i=1}^3 {E}^c_{i,[3]} \medcap \tilde{E}^c_{i,[3]}\right\},
\end{align}
where $B$ is the number of blocks. The first inequality holds due to the union bound on $l$ and the fact that $l$ does not change the probability of the error events. The second and third inequality hold because $P(A\medcup B) \leq P(A)+P(B| A^c)$ and the union bound on $i$.
Using standard arguments for each type of the errors we get the following bounds:

The probability of the first type of the errors ($\PP\{{E}_{i,[3]}\}$) can be made arbitrary small for sufficiently large $n$, if for any distinct $i,j,k \in \{1,2,3\}$ the following bound holds: }
\begin{align}\label{eq: bound 1}
\frac{k}{n}H(W_{\mathbf{A}_i}|W_i) \leq I(T_{\mathbf{A}_i}; Y|U T_i V_i X_i)-\delta_1(\epsilon).
\end{align}
The argument follows by standard error analysis for decoding $\bfw_{\mathbf{A}_i}$ at User $i$. At User $i$, with probability  sufficiently close to 1, $\bfw_{\mathbf{A}_i}$ satisfies \eqref{eq:MAC-FB decoding sum}. Hence, to analyze ${E}_{i,[3]}$, it suffices to find the probability that a codewrod $\hat{\bfw}_{\mathbf{A}_i} \neq \bfw_{\mathbf{A}_i}$ satisfies \eqref{eq:MAC-FB decoding sum}. Note that $\bfw_i$ is known at User $i$. Hence, there are approximately $2^{kH(W_{\mathbf{A}_i}|W_i)}$ $\epsilon$-typical sequences $\hat{\bfw}_{\mathbf{A}_i}$. From standard arguments, one can show that the probability that each of such sequences satisfies \eqref{eq:MAC-FB decoding sum} is approximately equals to $2^{-nI}$, where $I$ is the mutual information on the right-hand side of \eqref{eq: bound 1}. Therefore, the error probability $\PP\{{E}_{1,[3]}\}$ approaches zero, if \eqref{eq: bound 1} is satisfied. 

The probability of the second type of the errors ($\PP\{ \tilde{E}_{i,[3]}  |   {E}^c_{i,[3]}\} $) approaches zero for sufficiently large $n$, if
{ \begin{align}\label{eq: bound 2}
\frac{k}{n}H(W_j, W_k|W_i, W_{\mathbf{A}_i})\leq I(\tilde{T}_j \tilde{X}_j \tilde{T}_k \tilde{X}_k; Y \tilde{Y}| \tilde{U} \tilde{S}_i {U} {S}_i  \tilde{V}_j \tilde{V}_k )-\delta_2(\epsilon).
\end{align}
For this type of error it is assumed that the linear combination $\bfw_{\mathbf{A}_i}$ is decoded correctly. Hence, one needs to find the probability that \eqref{eq: Type 2 decoding error} is satisfied for a pair $(\hat{\bfw}_j, \hat{\bfw}_k)\neq ({\bfw}_j, {\bfw}_k)$. There are approximately $2^{kH(W_j, W_k|W_i, W_{\mathbf{A}_i})}$ such jointly typical pairs satisfying \eqref{eq: Type 2 decoding error eq 1}. The probability that any of such pairs satisfies \eqref{eq: Type 2 decoding error eq 2} is sufficiently small for large enough $n$ if the following inequality holds
\begin{equation*}
\frac{k}{n}H(W_j, W_k|W_i, W_{\mathbf{A}_i})\leq I(\tilde{T}_j \tilde{X}_j \tilde{T}_k \tilde{X}_k V_j  V_k; Y \tilde{Y}| \tilde{U} \tilde{S}_i {U} {S}_i  \tilde{V}_j \tilde{V}_k )-\delta_3(\epsilon)
\end{equation*}
The mutual information above equals to the one in \eqref{eq: bound 2}. This is due to the fact that $V_i=\tilde{T}_{\mathbf{A}_i}$, as stated below the equation in \eqref{eq:P P tilde joint dist}.  
 }

The third type of error ($\PP\{{E}_{d,[3]} | ~\medcap_{i=1}^3 {E}^c_{i,[3]} \medcap \tilde{E}^c_{i,[3]}\}$) approaches zero, if { $|\mathcal{L}[l]| < 2^{nI(U;Y)}$}. It can be shown that for sufficiently large $n$,
\begin{align*}
\PP\left\{|\mathcal{L}[l]| < 2^{n\max_{\mathcal{B}\subseteq \{1,2,3\}} F_{\mathcal{B}} +o(\epsilon)} \right\} > 1-\epsilon, 
\end{align*}
where 
\begin{equation}\label{eq:F_A def}
F_{\mathcal{B}} \deq \frac{k}{n} H(W_{\mathcal{B}})-I(X_\mathcal{B}; Y| U S_{\mathcal{B}^c} \tilde{V}_1, \tilde{V}_2, \tilde{V}_3), \quad \forall \mathcal{B}\subseteq \{1,2,3\}. 
\end{equation}
Therefore, the probability of third type of the errors approaches zero with rate $2^{-n\delta}$ for $\delta\in (0,1)$ and sufficiently large $n$, if the following bounds hold for any subset  $\mathcal{B}\subseteq \{1,2,3\}$:
{ \begin{align*}
F_{\mathcal{B}}\leq I(U;Y)-\delta-o(\epsilon),
\end{align*}}
Using the definition of $F_{\mathcal{B}}$ in \eqref{eq:F_A def}, the above bounds are equivalent to the following:
{ \begin{align} \label{eq: bound 3}
\frac{k}{n} H(W_{\mathcal{B}}) \leq I(X_\mathcal{B}; Y| U S_{\mathcal{B}^c} \tilde{V}_1, \tilde{V}_2, \tilde{V}_3)+ I(U;Y)-\delta-o(\epsilon), \quad \forall \mathcal{B}\subseteq \{1,2,3\}
\end{align}}
{ Consequently, if the bounds in \eqref{eq: bound 1}, \eqref{eq: bound 2}, and \eqref{eq: bound 3} are satisfied for a fixed $\delta>0$, then, from the inequality in \eqref{eq:MACFB Pe bound}, we obtain 
\begin{align*}
P_e\leq 7 B 2^{-n\delta}
\end{align*}
Hence, if $B$ grows sub-exponentially as a function of $n$, then $P_e\rightarrow 0$ as $n \rightarrow \infty$.

Note that the effective rate of our coding scheme is $R_i \deq \frac{1}{n}\log_2 M_i= \frac{k}{n}H(W_i)$ for $i=1,2,3$. Therefore, from the bounds in \eqref{eq: bound 1}, \eqref{eq: bound 2}, and \eqref{eq: bound 3}, a rate triplet $(R_1,R_2,R_3)$ is achievable if there exist $\alpha \in (0,1)$ and random variables ${W}_r, {T_r}, {V_r}, {X_r}, {\tilde{T_r}}, {\tilde{V_r}}, {\tilde{X_r}}, r=1,2,3 $, distributed as described in Theorem 2,  such that 
\begin{align*}
\alpha H(W_i)&= R_i,\\
\alpha H(W_{\mathbf{A}_i}|W_i) &\leq I(T_{\mathbf{A}_i}; Y|U T_i V_i X_i),\\
\alpha H(W_j,W_k|W_{\mathbf{A}_i}, W_i)& \leq I(\tilde{T}_j \tilde{X}_j \tilde{T}_k \tilde{X}_k; Y \tilde{Y}| \tilde{U} \tilde{S}_i {U} {S}_i  \tilde{V}_j \tilde{V}_k ),\\
\alpha H(W_{\mathcal{B}}) &\leq I(X_\mathcal{B}; Y| U S_{\mathcal{B}^c} \tilde{V}_1, \tilde{V}_2, \tilde{V}_3)+ I(U;Y)
\end{align*}}

\section{Proof of Theorem \ref{thm:MAC-FB structured}}\label{appx:MAC-FB structured proof} 
We begin the proof by the following lemma.
\begin{lem}\label{lem: example achievable rate using linear codes}
For the channel given in Example \ref{ex: example}, the rate triple $(1-h(\delta), 1-h(\delta), 1-h(\delta))$ is achievable.
\end{lem}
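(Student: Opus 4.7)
The plan is to prove the lemma by explicit construction: use identical binary linear codes at all three transmitters and exploit user $3$'s feedback on the first sub-channel to precompute, one block in advance, a codeword that forces the second sub-channel into its good state. Fix $\epsilon>0$ and pick a binary linear $(n,k)$ code $\mathcal{C}$ with $k/n = 1-h(\delta)-\epsilon$ that attains vanishing block-error probability on a $\text{BSC}(\delta)$; such codes exist by the standard linear achievability theorem for symmetric channels. Write $C:\FF_2^k \to \FF_2^n$ for its encoder; the same $\mathcal{C}$ is used at every transmitter, which is precisely what enables the algebraic cancellation below. The scheme runs over $L$ blocks of length $n$ each, with a one-block pipeline delay.

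In block $l$, users $1$ and $2$ transmit $X_{i1}^{(l)} = C(W_i^{(l)})$ on the first sub-channel (fresh message) and $X_{i2}^{(l)} = C(W_i^{(l-1)})$ on the second sub-channel (replay of the previous block's message). User $3$ transmits $X_{31}^{(l)} = C(W_3^{(l)})$ and $X_{32}^{(l)} = C(\hat{S}^{(l-1)})$, where $\hat{S}^{(l-1)}$ is its estimate of $W_1^{(l-1)}\oplus W_2^{(l-1)}$ computed from the feedback $Y_1^{(l-1)}$. The key step is that user $3$ can reliably recover this sum: by linearity, $Y_1^{(l)} = C(W_1^{(l)} \oplus W_2^{(l)} \oplus W_3^{(l)}) \oplus \tilde N_\delta^n$, so after subtracting its own $C(W_3^{(l)})$ user $3$ sees a codeword of $\mathcal{C}$ corrupted by a $\text{BSC}(\delta)$ and decodes $W_1^{(l)}\oplus W_2^{(l)}$ at rate $1-h(\delta)-\epsilon$ with error $O(2^{-n\gamma})$ for some $\gamma>0$.

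When user $3$'s decoding in block $l-1$ is correct, linearity gives $X_{32}^{(l)} = C(W_1^{(l-1)}\oplus W_2^{(l-1)}) = C(W_1^{(l-1)}) \oplus C(W_2^{(l-1)}) = X_{12}^{(l)} \oplus X_{22}^{(l)}$, so the second sub-channel enters its good state in block $l$; the receiver then observes two independent $\text{BSC}(\delta)$'s via $Y_{21}^{(l)} = C(W_1^{(l-1)}) \oplus N_\delta^n$ and $Y_{22}^{(l)} = C(W_2^{(l-1)}) \oplus (N_\delta')^n$, from which $W_1^{(l-1)}$ and $W_2^{(l-1)}$ are decoded at rate $1-h(\delta)-\epsilon$. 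Armed with $W_1^{(l-1)}, W_2^{(l-1)}$, the receiver returns to $Y_1^{(l-1)}$, subtracts $C(W_1^{(l-1)}) \oplus C(W_2^{(l-1)})$, and decodes $W_3^{(l-1)}$ through one more $\text{BSC}(\delta)$ at rate $1-h(\delta)-\epsilon$. Each user therefore delivers $k$ information bits per $n$ channel uses, yielding rate $k/n = 1-h(\delta)-\epsilon$.

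A union bound over the three per-block decoding events across all $L$ blocks gives total error $O(L \cdot 2^{-n\gamma})$, which vanishes provided $L$ grows sub-exponentially in $n$. The one-block pipeline loss contributes $O(1/L)$ to the effective rate, which is absorbed by sending first $L\to\infty$ and then $\epsilon\to 0$. The only substantive ingredient is the well-known linear-code achievability on the $\text{BSC}$; the mild subtlety --- that a decoding mistake at user $3$ in block $l$ propagates into a mismatch on the second sub-channel at block $l+1$ --- is handled by the union bound. So there is no essential obstacle; the main point is simply to make the linear structure do the work that a random codebook would be unable to do, as the second part of Theorem~\ref{thm:MAC-FB structured} will show.
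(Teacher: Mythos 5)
Your proposal is correct and follows essentially the same route as the paper's proof: identical linear codes at all three encoders, a one-block pipelined scheme in which users 1 and 2 replay their previous messages on the second sub-channel, user 3 decodes $W_1\oplus W_2$ from the feedback on the additive first sub-channel and re-encodes it with the same linear code to force the second sub-channel into its good state, and the receiver decodes $W_1,W_2$ from the second sub-channel and then strips them from $Y_1$ to decode $W_3$. The only cosmetic differences are that you invoke a fixed capacity-achieving linear code for the BSC where the paper draws a random generator matrix $\mathbf{G}$, and you spell out the union bound over blocks and the $O(1/L)$ pipeline loss slightly more explicitly.
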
 
\begin{proof}
The proof is given in Appendix \ref{sec: proof of lemma 1 of the example}.
\end{proof}
\begin{remark}\label{rem: optimality}
{{The triple $(1-h(\delta), 1-h(\delta), 1-h(\delta))$ is a corner point in the capacity region of the channel in Example \ref{ex: example}. This implies the optimality of the above coding strategy in terms of achievable rates. }}
\end{remark}

The above coding strategy is different from known schemes in two ways: 1) Identical linear codes are used to encode the messages, 2) The third user uses feedback to decode only the binary sum others' messages.  

One implication of Remark \ref{rem: optimality} is that the proposed coding scheme achieves optimality.  We show a stronger result in this Subsection. We prove that every coding scheme that achieves  $(1-h(\delta), 1-h(\delta), 1-h(\delta))$, should carry certain algebraic structures such as closeness under the binary addition. 

Suppose there exists a $(N, M_1, M_2, M_3)$ transmission system with rates close to $R_i=1-h(\delta)$, and average probability of error close to $0$, in particular
\begin{align*}
\bar{P}<\epsilon, \quad \frac{1}{n}\log_2 M_i \geq 1-h(\delta)-\epsilon, \quad i=1,2,3,
\end{align*} 
where $\epsilon>0$ is sufficiently small. Since there is no feedback at the first and second encoder, the transmission system predetermines a codebook for user 1 and 2. Note that there are two outputs for encoder 1 and 2. Suppose $\mathcal{C}_{12}$ and $\mathcal{C}_{22}$ are the codebooks assigned to the second output of encoder 1 and encoder 2, respectively. 

Let $\mathbf{X}^N_{i2}$ be the second output of encoder $i$, where $i=1,2,3$.  Let $X_{i2, l}$ denote the $l$th component of $X^N_{i2}$, where  $1\leq l \leq N, ~ i=1,2,3$.  The following lemmas hold for this transmission system. 

\begin{lem}\label{lem: x_2+x_1 needs to be decoded}
 For any fixed $c>0$, define
\begin{align*}
\mathcal{I}_c^N:=\{ l\in [1:N]: P(X_{32, l} \neq X_{12, l}\oplus X_{22, l}) \geq c \}.
\end{align*}
 Then, the inequality $\frac{|\mathcal{I}_c^N|}{N}\leq   \frac{\eta(\epsilon)}{2c(1-h(\delta))}$ holds, where $\eta(\epsilon)$ is a function such that,  $\eta(\epsilon) \rightarrow 0$, as $\epsilon\rightarrow 0$. 
\end{lem}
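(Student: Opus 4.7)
The plan is to derive the bound by a cutset/Fano argument that exploits the parallel structure of the two sub-channels and the fact that the second channel degrades severely when $X_{32,l} \neq X_{12,l}\oplus X_{22,l}$. First, I apply Fano's inequality to the hypotheses $\bar P<\epsilon$ and $\frac{1}{N}\log_2 M_i \geq 1-h(\delta)-\epsilon$, giving
\begin{equation*}
3N(1-h(\delta)) \;-\; N\eta_1(\epsilon) \;-\; 1 \;\leq\; I(M_1,M_2,M_3;\,Y_1^N,Y_{21}^N,Y_{22}^N),
\end{equation*}
where $\eta_1(\epsilon)\to 0$ as $\epsilon\to 0$. Then I use the standard MAC-FB chain-rule bound
\begin{equation*}
I(M;Y^N) \;=\; \sum_{l=1}^N \bigl[H(Y_l|Y^{l-1})-H(Y_l|X_l)\bigr] \;\leq\; \sum_{l=1}^N I(X_l;Y_l),
\end{equation*}
which holds because $X_l$ is a deterministic function of $(M,Y^{l-1})$ and, by the memoryless property, $H(Y_l|X_l,M,Y^{l-1})=H(Y_l|X_l)$.

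Next I exploit the parallel structure at each time $l$. Since $Y_{1,l}$ and $(Y_{21,l},Y_{22,l})$ are produced from disjoint inputs through mutually independent noise ($\tilde N_\delta$ on the first channel; $N_\delta,N'_\delta,N_{1/2},N'_{1/2}$ on the second), they are conditionally independent given the full input $X_l$, so
\begin{equation*}
I(X_l;Y_l) \;\leq\; I(X_l;Y_{1,l}) \;+\; I(X_l;Y_{21,l},Y_{22,l}).
\end{equation*}
The first term satisfies $I(X_l;Y_{1,l})\leq 1-h(\delta)$. For the second term, let $p_l \deq P(X_{32,l}\neq X_{12,l}\oplus X_{22,l})$; the indicator of the ``bad state'' is a function of $X_l$, so I can split on it to get
\begin{equation*}
H(Y_{21,l},Y_{22,l}\mid X_l) \;=\; (1-p_l)\cdot 2h(\delta) \;+\; p_l\cdot 2 \;=\; 2h(\delta) \;+\; 2p_l(1-h(\delta)),
\end{equation*}
and therefore $I(X_l;Y_{21,l},Y_{22,l}) \leq 2-2h(\delta)-2p_l(1-h(\delta))$. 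Summing these bounds over $l$ and dividing by $N$,
\begin{equation*}
3(1-h(\delta)) \;-\; \eta_1(\epsilon) \;-\; \tfrac{1}{N} \;\leq\; 3(1-h(\delta)) \;-\; 2(1-h(\delta))\,\bar p, \qquad \bar p \deq \tfrac{1}{N}\sum_{l=1}^N p_l,
\end{equation*}
so $\bar p \leq \eta(\epsilon)/(2(1-h(\delta)))$ for some $\eta(\epsilon)$ with $\eta(\epsilon)\to 0$ (absorbing the vanishing $1/N$ term).

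The final step is a one-line Markov-type observation: every $l\in\mathcal{I}_c^N$ contributes at least $c$ to $\sum_l p_l$, so $c\,|\mathcal{I}_c^N| \leq N\bar p$, which rearranges to the claimed inequality. The only mildly delicate step is the entropy computation for the bad state of the second channel; once one verifies that conditional on any input $x$ with $x_{32}\neq x_{12}\oplus x_{22}$ the pair $(Y_{21,l},Y_{22,l})$ is uniform on $\{0,1\}^2$, the rest is routine bookkeeping. The argument is essentially tight because the combined sum-capacity of the two parallel sub-channels is $3(1-h(\delta))$, attainable only when the second channel is almost always kept in its ``good'' state, which is precisely what the lemma quantifies.
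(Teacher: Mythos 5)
Your proof is correct, and it follows the same overall strategy as the paper's: Fano's inequality, a per-letter sum-rate bound that quantifies the entropy penalty $2p_l(1-h_b(\delta))$ incurred whenever the second channel is in its bad state, and the final counting step $c\,|\mathcal{I}_c^N| \le \sum_l p_l$. The one place where you genuinely diverge---and in fact are more careful than the paper---is the single-letterization. The paper bounds $I(\Theta_1,\Theta_2,\Theta_3;\mathbf{Y}^N)\le I(\mathbf{X}^N;\mathbf{Y}^N)$ and then asserts $H(\mathbf{Y}^N|\mathbf{X}^N)=\sum_l H(Y_l|X_{1,l}X_{2,l}X_{3,l})$ ``as the channel is memoryless.'' With feedback this equality is not automatic: $X_{3,l+1}$ is a function of past outputs, so conditioning on the entire input sequence can reveal information about $Y_l$ beyond what $X_l$ gives, and in general one only has $H(\mathbf{Y}^N|\mathbf{X}^N)\le\sum_l H(Y_l|X_l)$, which is the wrong direction for upper-bounding the mutual information. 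Your decomposition $I(M;Y^N)=\sum_l\bigl[H(Y_l|Y^{l-1})-H(Y_l|X_l)\bigr]$, justified by $X_l$ being a deterministic function of $(M,Y^{l-1})$ together with the channel law, is the standard feedback-aware argument and avoids this issue entirely; your subsequent split of $I(X_l;Y_l)$ across the two parallel sub-channels is just equivalent bookkeeping to the paper's direct computation $H(Y_l|X_l)=(1+2\bar q_l)h_b(\delta)+2q_l$ with $H(Y_l)\le 3$. One cosmetic point: your use of standard Fano leaves a $1/N$ term that you ``absorb'' into $\eta(\epsilon)$, whereas the lemma wants $\eta$ to depend on $\epsilon$ alone; the paper sidesteps this by invoking the generalized Fano inequality (Lemma 4.3 of Kramer's thesis), whose bound $h_b(\bar P)\log_2(M_1M_2M_3)$ carries no additive constant, and you could do the same or simply note that the lemma is applied in the large-$N$ regime.
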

\begin{proof}
The proof is given in Appendix \ref{sec: proof of lemma 2 of the example}.
\end{proof}
The Lemma implies that in order to achieve $(1-h(\delta), 1-h(\delta), 1-h(\delta))$, the third user needs to decode $ X_{12, l}\oplus X_{22, l}$ for ``almost all" $l\in [1:N]$. This requirement is necessary to insure that the channel given in Figure \ref{fig: Exp. second chann} is in the first state. 

In the next step, we use the results of Lemma \ref{lem: x_2+x_1 needs to be decoded}, and  drive two necessary conditions for decoding $X_{12}\oplus X_{22}$.

\begin{lem}\label{lem: structure in the code}
The following holds 
\begin{align*}
\frac{1}{N}\big| ~\log ||\mathcal{C}_{12}\oplus \mathcal{C}_{22}||- \log ||\mathcal{C}_{12}||~\big| \leq \lambda_1(\epsilon),\\
\frac{1}{N}\big| ~\log ||\mathcal{C}_{12}\oplus \mathcal{C}_{22}||- \log ||\mathcal{C}_{22}||~\big| \leq \lambda_2(\epsilon),
\end{align*}
where $\lambda_j(\epsilon) \rightarrow 0$, as $\epsilon\rightarrow 0, j=1,2$.
\end{lem}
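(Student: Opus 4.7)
The plan is to sandwich $\tfrac{1}{N}\log|\mathcal{C}_{12}|$, $\tfrac{1}{N}\log|\mathcal{C}_{22}|$, and $\tfrac{1}{N}\log|\mathcal{C}_{12}\oplus\mathcal{C}_{22}|$ between the same pair of values near $1-h(\delta)$, from which the two bounds in the lemma follow by the triangle inequality.

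For the two individual codebook sizes, since Users 1 and 2 are feedback-free we have $|\mathcal{C}_{i2}|\leq M_i$, and the rate assumption gives $\tfrac{1}{N}\log M_i \geq 1-h(\delta)-\epsilon$. A matching upper bound $\tfrac{1}{N}\log M_i \leq 1-h(\delta)+\epsilon_N$ follows from the individual-user converse on the channel of Example~\ref{ex: example}: once Lemma~\ref{lem: x_2+x_1 needs to be decoded} forces $X_{32}=X_{12}\oplus X_{22}$ on almost every time index, the second channel reduces to parallel BSC$(\delta)$s for Users 1 and 2. An additional step is needed to rule out $|\mathcal{C}_{i2}|\ll M_i$: if, say, $|\mathcal{C}_{12}|$ were substantially smaller than $M_1$, then User~1's message would have to be conveyed principally through $X_{11}^N$, but the first channel's sum-capacity $1-h(\delta)$ is already exhausted by User~3 at the corner point, contradicting the sum-rate $3(1-h(\delta))$. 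Together these show $\tfrac{1}{N}\log|\mathcal{C}_{i2}| = 1-h(\delta)+O(\epsilon)$ for $i=1,2$.

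For the sumset size, set $T^N:=X_{12}^N\oplus X_{22}^N$. Lemma~\ref{lem: x_2+x_1 needs to be decoded} together with the fact that $X_{32}^N$ is a deterministic function of $(M_3,Y^{N-1})$ yields, via Fano, $H(T^N\mid M_3,Y^N)=o_\epsilon(N)$; and since $T^N\perp M_3$, this gives $H(T^N)\leq I(T^N;Y^N\mid M_3)+o_\epsilon(N)$. To bound the right-hand side I would single-letterize: with $M_3$ fixed (hence $X_{31}^N$ and $X_{32}^N$ known), the first channel becomes a BSC$(\delta)$ for $X_{11}\oplus X_{21}$ of capacity $1-h(\delta)$, and the tightness of the sum-rate bound $R_1+R_2+R_3\leq 3(1-h(\delta))$ at the corner point forces $Y_{21}^N\oplus Y_{22}^N$ to carry no independent information about $T^N$ beyond what is already attributable to the separate rates of $M_1$ and $M_2$. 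This gives $\tfrac{1}{N}I(T^N;Y^N\mid M_3)\leq 1-h(\delta)+\lambda(\epsilon)$. Combined with near-uniformity of $T^N$ on its support (so that $\tfrac{1}{N}\log|\mathcal{C}_{12}\oplus\mathcal{C}_{22}|\leq \tfrac{1}{N}H(T^N)+o(1)$) and the trivial inclusion $|\mathcal{C}_{12}\oplus\mathcal{C}_{22}|\geq\max(|\mathcal{C}_{12}|,|\mathcal{C}_{22}|)$, the sandwich closes and the lemma follows.

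The hard part will be justifying the near-uniformity of $T^N$ on the sumset, i.e., $\tfrac{1}{N}H(T^N)\geq \tfrac{1}{N}\log|\mathcal{C}_{12}\oplus\mathcal{C}_{22}|-o(1)$: in general $H(T^N)$ can be strictly smaller than $\log|\mathcal{C}_{12}\oplus\mathcal{C}_{22}|$ because the number of representations $t=\mathbf{x}\oplus\mathbf{y}$ with $\mathbf{x}\in\mathcal{C}_{12}$, $\mathbf{y}\in\mathcal{C}_{22}$ may vary widely across $t$. I would address this by restricting to a strongly-typical subset of pairs $(\mathbf{x}_1,\mathbf{x}_2)$ and exploiting that attaining the corner-point sum-rate forces the marginals of $X_{12,l}$ and $X_{22,l}$ to be essentially uniform (by an analysis in the spirit of Lemma~\ref{lem:MAC in example is unstructured}), which in turn pins the multiplicity function of the sumset to a near-constant value on a high-probability typical core, up to correction terms that vanish with $\epsilon$.
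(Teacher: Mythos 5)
Your overall skeleton (pin $|\mathcal{C}_{12}|$, $|\mathcal{C}_{22}|$ and $|\mathcal{C}_{12}\oplus\mathcal{C}_{22}|$ near $1-h_b(\delta)$ and close with the inclusion $||\mathcal{C}_{12}\oplus\mathcal{C}_{22}||\geq\max(||\mathcal{C}_{12}||,||\mathcal{C}_{22}||)$) is the same as the paper's, and your argument that $|\mathcal{C}_{j2}|$ cannot be much smaller than $M_j$ (user $j$'s message must be resolvable from the second channel because the first channel is saturated by $M_3$) matches the paper's inner claim in spirit. The genuine gap is in the step carrying all the weight: the bound $\tfrac{1}{N}I(T^N;Y^N\mid M_3)\leq 1-h_b(\delta)+\lambda(\epsilon)$. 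This is not merely unproved, it is circular. Your own Fano step gives $H(T^N\mid M_3,Y^N)=o_\epsilon(N)$, which means $I(T^N;Y^N\mid M_3)=H(T^N)-o_\epsilon(N)$; so the inequality you propose to establish is, verbatim up to $o(N)$, the assertion $H(T^N)\leq N(1-h_b(\delta))+o(N)$ --- i.e., the substance of the lemma itself. There is no channel-coding leverage in this formulation: $Y^N$ contains the second channel's outputs, and on almost every symbol (by Lemma \ref{lem: x_2+x_1 needs to be decoded}) $Y_{21,l}\oplus Y_{22,l}=T_l\oplus N_{\delta,l}\oplus N'_{\delta,l}$, so the full output observes $T^N$ through what is essentially a BSC; no data-processing bottleneck at rate $1-h_b(\delta)$ exists, and a generic single-letterization of $I(T^N;Y^N\mid M_3)$ only yields the MAC sum-rate (about $3(1-h_b(\delta))$ per symbol). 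The appeal to ``tightness of the sum-rate'' supplies no independent handle. There is also a factual error: ``with $M_3$ fixed (hence $X_{31}^N$ and $X_{32}^N$ known)'' is false, since user 3 is precisely the transmitter with feedback, so $X_{31}^N,X_{32}^N$ depend on $Y^{N-1}$ and are not functions of $M_3$ alone.

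The missing idea --- and the crux of the paper's proof --- is a reduction that strips the second channel's outputs away \emph{before} any information bound is applied: given $X_{32,l}$, the pair $(Y_{21,l},Y_{22,l})$ is conditionally independent of $T_l=X_{12,l}\oplus X_{22,l}$ (trivially when $X_{32,l}=T_l$, and because those outputs are pure $\mathrm{Ber}(1/2)$ noise otherwise), so without loss of performance user 3's estimate of $T^N$ is a function of $(M_3,Y_1^N)$ only; subtracting user 3's own $X_{31}^N$ leaves $Z^N=X_{11}^N\oplus X_{21}^N\oplus\tilde N_\delta^N$, which is independent of $M_3$ because users 1 and 2 have no feedback, so the estimate is effectively a function of $Z^N$ alone. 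Now $T^N\to(M_1,M_2)\to X_{11}^N\oplus X_{21}^N\to Z^N$ is a Markov chain whose last step is a memoryless BSC$(\delta)$, so data processing caps $I(T^N;Z^N)$ at $N(1-h_b(\delta))$, and Fano applied to this single point-to-point problem (with ``codebook'' $\mathcal{C}_{12}\oplus\mathcal{C}_{22}$) yields the sumset bound. The entropy-versus-support-size concern you flag at the end is legitimate --- it is the same issue the paper compresses into the phrase ``generalized Fano's inequality'' --- but it is secondary: without the feedback-uselessness reduction, your argument does not reach even the entropy bound.
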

\begin{proof}
The proof is given in Appendix \ref{seq: lem 3}.
\end{proof}
As a result of this lemma,  $\log ||\mathcal{C}_{12}\oplus \mathcal{C}_{22}||$ needs to be close to $\log ||\mathcal{C}_{12}||$ and $\log ||\mathcal{C}_{22}||$. This implies that $\mathcal{C}_{12}$ and $\mathcal{C}_{22}$ possesses an algebraic structure, and are \textit{almost} close under the binary addition. Not that for the case of unstructured random codes $||\mathcal{C}_{12}\oplus \mathcal{C}_{22}||\approx ||\mathcal{C}_{12}||\times ||\mathcal{C}_{22}||$. Hence, unstructured random coding schemes are suboptimal in this example.

\begin{remark}
The three-user extension of CL scheme is suboptimal. Because, the conditions in Lemma \ref{lem: structure in the code} are not satisfied. 
\end{remark}


\section{Proof of Lemma \ref{lem: example achievable rate using linear codes} to \ref{lem: structure in the code} }
\subsection{Proof of Lemma \ref{lem: example achievable rate using linear codes}}\label{sec: proof of lemma 1 of the example}
\begin{proof}[Outline of the proof]
We start by proposing a coding scheme. There are $L$ blocks of transmissions in this scheme, with new messages available at each user at the beginning of each block. The scheme sends the messages with $n$ uses of the channel. Let $\mathbf{W}^k_{i,[l]}$ denotes the message of the $i$th transmitter at the $l$th block, where $i=1,2,3$, and $1\leq l \leq L$. Let  $\mathbf{W}^k_{i,[l]}$ take values randomly and uniformly from $\FF_2^k$. In this case, the transmission rate of each user is $R_i=\frac{k}{n}, i=1,2,3$. The first and the second outputs of the $i$th encoder in block $l$ is denoted by $\mathbf{X}^n_{i1,[l]}$ and $\mathbf{X}^n_{i2,[l]}$, respectively.

\textbf{Codebook Construction:}
Select a $k\times n$ matrix $\mathbf{G}$ randomly and uniformly from $\FF_2^{k\times n}$. This matrix is used as the generator matrix of a linear code. Each encoder is given the matrix $\mathbf{G}$. Therefore, the encoders use an identical linear code generated by $\mathbf{G}$.

\textbf{Encoder 1 and 2:}
For the first block set $\mathbf{X}^n_{i2,[1]}=0$, for  $i=1,2,3$. For the block $l$, encoder 1 sends $\mathbf{X}^n_{11,[l]}= \mathbf{W}^k_{1,[l]} \mathbf{G}$ through its first output. For the second output, encoder 1 sends  $\mathbf{X}^n_{11,[l-1]}$ from block $l-1$, that is $\mathbf{X}^n_{12,[l]}=\mathbf{X}^n_{11,[l-1]}$. Similarly, the outputs of the second encoder are  $\mathbf{X}^n_{21,[l]}= \mathbf{W}^k_{2,[l]} \mathbf{G}$, and   $\mathbf{X}^n_{22,[l]}=\mathbf{X}^n_{21,[l-1]}$. 
  
\textbf{Encoder 3:}
The third encoder sends  $\mathbf{X}^n_{31,[l]}= \mathbf{W}^k_{3,[l]} \mathbf{G}$ though its first output. This encoder receives  the feedback from the block $l-1$ of the channel. This encoder wishes to decode $\mathbf{W}^k_{1,[l-1]}\oplus\mathbf{W}^k_{2,[l-1]}$ using $\mathbf{Y}^n_{1,[l-1]}$. For this purpose,  it subtracts  $\mathbf{X}^n_{31,[l-1]}$ from $\mathbf{Y}^n_{1,[l-1]}$. Denote the resulting vector by $\mathbf{Z}^n$. Then, it finds a unique vector $\mathbf{\tilde{w}}^k \in \FF_2^k $ such that $(\mathbf{\tilde{w}}^k\mathbf{G}, \mathbf{Z}^n)$ is $\epsilon$-typical with respect to $P_{XZ}$, where $X$ is uniform over $\FF_2$ , and $Z=X\oplus \tilde{N}_\delta$. If the decoding process is successful, the third encoder sends $\mathbf{X}^n_{32,[l]}= \mathbf{\tilde{w}}^k_{[l-1]} \mathbf{G}$. Otherwise, an event $E_{1, [l]}$ is declared.

\textbf{Decoder:}
The decoder receives the outputs of the channel from the $l$th block, that is $\mathbf{Y}^n_{1,[l]}$ and $\mathbf{Y}^n_{2,[l]}$. The decoding is performed in three steps. First, the decoder uses $\mathbf{Y}^n_{2,[l]}$ to decode $ \mathbf{W}^k_{1,[l-1]}$, and $ \mathbf{W}^k_{2,[l-1]}$. In particular, it finds unique $\mathbf{\tilde{w}}_1^k, \mathbf{\tilde{w}}_2^k \in \FF_2^k$ such that $(\mathbf{\tilde{w}}_1^k\mathbf{G}, \mathbf{\tilde{w}}_2^k\mathbf{G}, \mathbf{Y}^n_{2,[l]})$ are jointly $\epsilon$-typical with respect to $P_{X_{12}X_{22}Y_2}$. Otherwise, an error event $E_{2,[l]}$ will be declared. 

Suppose the first part of the decoding process is successful. At the second step,  the decoder calculates $\mathbf{X}^n_{11,[l-1]}$, and $\mathbf{X}^n_{21,[l-1]}$. This is possible, because  $\mathbf{X}^n_{11,[l-1]}$, and $\mathbf{X}^n_{21,[l-1]}$ are functions of the messages.  The decoder, then, subtracts $\mathbf{X}^n_{11,[l-1]}\oplus \mathbf{X}^n_{21,[l-1]}$ from $Y_{1,[l-1]}$. The resulting vector is 
\begin{align*}
\tilde{\mathbf{Y}}^n=\mathbf{X}^n_{31,[l-1]}\oplus \tilde{N}^n_\delta.
\end{align*}
In this situation, the channel from $X_{31}$ to $\tilde{Y}$ is a binary additive channel with $\delta$ as the bias of the noise. At the third step, the decoder uses $\tilde{\mathbf{Y}}^n$ to decode the message of the third user, i.e., $\mathbf{W}^k_{3,[l-1]}$. In particular, the decoder finds unique $\mathbf{\tilde{w}}_3^k \in \FF_2^k$ such that $(\mathbf{\tilde{w}}_3^k\mathbf{G},  \mathbf{\tilde{Y}}^n)$ are jointly $\epsilon$-typical with respect to $P_{X_{31}\tilde{Y}}$. Otherwise, an error event $E_{3,[l]}$ is declared.

\textbf{Error Analysis:} 
We can show that this problem is equivalent to a point-to-point channel coding problem, where the channel is described by $Z=X\oplus \tilde{N}_\delta$. The average probability of error approaches zero, if $\frac{k}{n}\leq 1-h_b(\delta)$. 

Suppose there is no error in the decoding process of the third user. That is $E_{1,[l]}^c$ occurs. Therefore, $\mathbf{X}^n_{32,[l]}=\mathbf{X}^n_{22,[l]}\oplus \mathbf{X}^n_{12,[l]}$ with probability one. As a result, the channel in Fig. \ref{fig: Exp. second chann} is in the first state. This implies that the corresponding channel consists of two parallel binary additive channel with independent noises and bias $\delta$. Similar to the argument for $E_1$, it can be shown that $P(E_{2,[l]}| E_{1,[l]})\rightarrow 0$, if $\frac{k}{n}\leq 1-h_b(\delta)$. Lastly, we can show that conditioned on $E_{1,[l]}^c$ and $E_{2,[l]}^c$,  the probability of $E_{3,[l]}$ approaches zero, if $\frac{k}{n}\leq 1-h_b(\delta)$. 

As a result of the above argument, the average probability of error approaches $0$, if $\frac{k}{n}\leq 1-h_b(\delta)$. This implies that the rates $R_i=1-h_b(\delta), i=1,2,3$ are achievable, and the proof is completed.
\end{proof}

\subsection{Proof of Lemma \ref{lem: x_2+x_1 needs to be decoded}}\label{sec: proof of lemma 2 of the example}
\begin{proof}
Let $R_i$ be the rate of the $i$th encoder. We have $R_i\geq 1-h_b(\delta)-\epsilon$. We apply the generalized Fano's inequality (Lemma 4.3 in \cite{Kramer-thesis}) for decoding of the messages. More precisely, as $\bar{P}\leq \epsilon$, we have $$\frac{1}{M_1M_2M_3}H(\Theta_1, \Theta_2, \Theta_3| \mathbf{Y}^N)\leq h(\bar{P}) \leq h(\epsilon)$$

By the definition of the rate we have 
\begin{align}\nonumber
R_1+R_2+R_3&=\frac{1}{N}H(\Theta_1, \Theta_2, \Theta_3)\\\nonumber
& \leq \frac{1}{N}I(\Theta_1, \Theta_2, \Theta_3; \mathbf{Y}^n)+o(\epsilon)\\\nonumber
&\stackrel{(a)}{\leq } \frac{1}{N}I(\mathbf{X}^n_1, \mathbf{X}^n_2, \mathbf{X}^n_3; \mathbf{Y}^N)+o(\epsilon)\\\label{eq: last bound}
&\stackrel{(b)}{\leq } 3-\frac{1}{N}H(\mathbf{Y}^n|\mathbf{X}^n)+o(\epsilon),
\end{align}
where  $(a)$ is because of (\ref{eq: chann probabilities}), and for $(b)$ we use the fact that $Y$ is a vector of three binary random variables, which implies$\frac{1}{N}H(Y^N)\leq 3$.
As the channel is memoryless, and since (\ref{eq: chann probabilities}) holds, we have 
\begin{align*}
\frac{1}{N}H(\mathbf{Y}^n|\mathbf{X}^n)=\frac{1}{N}\sum_{l=1}^N H(Y_l | X_{1,l} X_{2,l} X_{3,l}).
\end{align*}
Let $P(X_{32,l}\neq X_{12,l}\oplus X_{12,l})=q_l$, for $l\in [1:N]$. Denote $\bar{q}_l=1-q_l$. We can show that,
\begin{align*}
H(Y_l | X_{1,l} X_{2,l} X_{3,l})=(1+2\bar{q}_l)h_b(\delta)+2q_l.
\end{align*}
We use the above argument, and the last inequality in (\ref{eq: last bound}) to give the following bound
\begin{align*}\nonumber
R_1+R_2+R_3 &\leq 3-\frac{1}{N}\sum_{l=1}^N [(1+2\bar{q}_l)h_b(\delta)+2q_l]+o(\epsilon)\\\label{eq: upper bound on I_N}
&= 3- 3 h_b(\delta)+\frac{1}{N}2(1-h_b(\delta))\sum_{l=1}^N q_l+o(\epsilon)
\end{align*}
By assumption $R_1+R_2+R_3  \geq 3(1-h_b(\delta)-\epsilon).$ Therefore, using the above bound we obtain,
\begin{align*}
\frac{3 \epsilon+o(\epsilon)}{2(1-h_b(\delta))} &\geq \frac{1}{N}\sum_{l=1}^N q_l \stackrel{(a)}{\geq }\frac{1}{N}\sum_{l\in \mathcal{I}^N_c} q_l,
\end{align*}
where $(a)$ holds, because we remove the summation over all $l \notin \mathcal{I}^N_c$. We defined  $\mathcal{I}^N_c$ as in the statement of this Lemma. Note that if $l \in \mathcal{I}^N_c$, then $q_l\geq c$. Finally, we obtain 
\begin{align*}
\frac{|\mathcal{I}^N_c|}{N} \leq \frac{3 \epsilon+o(\epsilon)}{2 c (1-h_b(\delta))}
\end{align*}
\end{proof}

\subsection{Proof of Lemma \ref{lem: structure in the code}}\label{seq: lem 3}
\begin{proof}
Let $\mathcal{I}_c^N$ be as in Lemma \ref{lem: x_2+x_1 needs to be decoded}. The average probability of error for decoding $X_{12}^N\oplus X_{22}^N$ is bounded as 
\begin{align*}
\bar{P}_e&=\frac{1}{N}\sum_{l=1}^N P(X_{32,l}\neq X_{12,l}\oplus X_{22,l})\\
&=\frac{1}{N}\sum_{l\in \mathcal{I}_c^N} P(X_{32,l}\neq X_{12,l}\oplus X_{22,l})+\frac{1}{N}\sum_{l\notin \mathcal{L}_c^N} P(X_{32,l}\neq X_{12,l}\oplus X_{22,l})\\
&\leq \frac{|\mathcal{I}_c^N|}{N}+c(1-\frac{|\mathcal{I}_c^N|}{N})\\
&=(1-c)\frac{|\mathcal{I}_c^N|}{N} +c\\
&\leq (1-c)\frac{\eta(\epsilon)}{2c(1-h(\delta))}+c 
\end{align*}
As a result as $\epsilon\rightarrow 0$, then $\bar{P}_e\rightarrow c$. Since $c>0$ is arbitrary, $\bar{P}_e$ can be made arbitrary small. Hence, for any $\epsilon'>0$, and there exist $\epsilon>0$ and large enough $N$ such that $\bar{P}_e < \epsilon'$. Note that $X^N_{32}$ is a function of $M_3, Y_1^N, Y_{12}^N$ and $Y_{22}^N$. Next we argue that to get $\bar{P}_e < \epsilon'$, it is enough for $X_{32}^N$ to be a function of $M_3, Y_1^N$.  More precisely, given $X_{32, l}$, the random variables $Y_{12,l}$ and $Y_{22,l}$ are independent of $X_{12, l}\oplus X_{22, l}$. To see this, we need to consider two cases.  If $X_{32, l}=X_{12, l}\oplus X_{22, l}$ then the argument follows trivially. Otherwise, $Y_{12,l}=X_{12,l}\oplus N_{1/2}$, where $N_{1/2}\sim Ber(1/2)$, and it is independent of $X_{12,l}$. Hence in this case,  $Y_{12,l}$ is independent of $X_{12,l}$. Similarly, $Y_{22,l}$ is independent of $X_{22,l}$. 

By subtracting $X_{31}^N$ from $Y_1^N$, we get $Z^N := X_{11}^N \oplus X_{21}^N\oplus N_{\delta}^N$. Next, we argue that the third encoder uses $Z^N$ to decode $X_{12}^N\oplus X_{22}^N$. Since $M_3$ is independent of $M_1$ and $M_2$, it is independent of $X_{1j}^N, X_{j2}^N$ for $j=1,2$. Therefore $Z^N$ is independent of $M_3$. Hence, $X_{32}^N$ is function of $Z^N$. Intuitively, we convert the problem of decoding $X_{11}^N \oplus X_{21}^N$ to a point to point channel coding problem. The channel in this case is a binary additive channel with noise $N_\delta \sim Ber(\delta)$. In this channel coding problem the codebook at the encoder is $\mathcal{C}_{12}\oplus \mathcal{C}_{22}$.  The capacity of this channel equals $1-h_b(\delta)$. Since the average probability of error is small,  we can use the generalized Fano's inequality to bound the rate of the encoder. As a result, it can be shown that  
\begin{equation}\label{eqe:bound on sum codebook}
\frac{1}{N}\log_2||\mathcal{C}_{12}\oplus \mathcal{C}_{22}|| \leq 1-h_b(\delta)+ \eta(\epsilon),
\end{equation}
where $\eta(\epsilon)\rightarrow 0$ as $\epsilon \rightarrow 0$. 
\begin{lem}
The following bound holds 
\begin{align} \label{eqe: bound on C_12 and C_22}
\frac{1}{N}\log_2||\mathcal{C}_{j2}|| \geq 1-h_b(\delta)- \gamma_j(\epsilon),
\end{align}

 where $j=1,2$ and $\gamma_j(\epsilon)\rightarrow 0$ as $\epsilon \rightarrow 0$.
\end{lem}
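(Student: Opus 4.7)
The plan is to prove $\log_2|\mathcal{C}_{j2}| \geq N(1-h_b(\delta) - \gamma_j(\epsilon))$ by showing that essentially all of user $j$'s rate $R_j \approx 1-h_b(\delta)$ must be carried through the second channel via $X_{j2}^N$. Since $\log_2|\mathcal{C}_{j2}| \geq H(X_{j2}^N)$, it suffices to lower bound $H(X_{j2}^N)$ by $N(1-h_b(\delta)) - o(N)$, where $o(N)$ denotes quantities that vanish with $\epsilon$ after the standard limit $N\to\infty$. First, choosing $c=\sqrt{\epsilon}$ in Lemma~\ref{lem: x_2+x_1 needs to be decoded}, one obtains $\bar{q}:=\frac{1}{N}\sum_l q_l = o(1)$, where $q_l = \Pr[X_{32,l}\neq X_{12,l}\oplus X_{22,l}]$. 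Since $X_{32,l}$ is determined by $(X_{12,l},X_{22,l})$ together with a one-bit state indicator, concavity of $h_b$ yields $H(X_{32}^N\mid X_{12}^N,X_{22}^N)\leq \sum_l h_b(q_l)\leq N h_b(\bar{q})=o(N)$.

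Next I would leverage this estimate to show that $M_3$ carries almost no information through channel~2: using $M_3\perp (X_{12}^N,X_{22}^N)$ (users~1 and~2 have no feedback) together with the conditional independence $(Y_{21},Y_{22})\perp M_3\mid (X_{12},X_{22},X_{32})$,
\[
I(M_3;Y_{21}^N,Y_{22}^N)\leq I(X_{32}^N;Y_{21}^N,Y_{22}^N\mid X_{12}^N,X_{22}^N)\leq H(X_{32}^N\mid X_{12}^N,X_{22}^N)=o(N),
\]
and by the same reasoning $I(M_3;Y_{21}^N,Y_{22}^N\mid M_1,M_2)=o(N)$. Combined with Fano's inequality $I(M_3;Y^N)\geq N(1-h_b(\delta))-o(N)$ (valid conditional on $(M_1,M_2)$ as well, by independence of the messages), the chain rule then forces $I(M_3;Y_1^N\mid M_1,M_2,Y_{21}^N,Y_{22}^N)\geq N(1-h_b(\delta))-o(N)$.

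I would then invoke the channel-$1$ sum-capacity bound $I(M_1,M_2,M_3;Y_1^N\mid Y_{21}^N,Y_{22}^N)\leq N(1-h_b(\delta))$ (which follows from $H(Y_1^N\mid X_1^N,X_2^N,X_3^N,Y_{21}^N,Y_{22}^N)=Nh_b(\delta)$) and expand via the chain rule to force $I(M_j;Y_1^N\mid Y_{21}^N,Y_{22}^N)\leq I(M_1,M_2;Y_1^N\mid Y_{21}^N,Y_{22}^N)=o(N)$ for $j\in\{1,2\}$. Applying Fano to $M_j$ together with $I(M_j;Y^N)=I(M_j;Y_{21}^N,Y_{22}^N)+I(M_j;Y_1^N\mid Y_{21}^N,Y_{22}^N)$ yields $I(M_j;Y_{21}^N,Y_{22}^N)\geq N(1-h_b(\delta))-o(N)$. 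A final approximate data-processing step for the near-Markov chain $M_j\to X_{j2}^N\to (Y_{21}^N,Y_{22}^N)$---whose only residual leakage is through feedback into $X_{32}^N$, controlled again by $H(X_{32}^N\mid X_{12}^N,X_{22}^N)=o(N)$ via the identity $I(M_j;Y_{21}^N,Y_{22}^N\mid X_{j2}^N)\leq H(X_{32}^N\mid X_{12}^N,X_{22}^N)+o(N)$---gives $H(X_{j2}^N)\geq I(X_{j2}^N;Y_{21}^N,Y_{22}^N)\geq N(1-h_b(\delta))-o(N)$.

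The main obstacle is the rigorous handling of feedback: it couples $X_{32}^N$ to $(M_1,M_2)$, so every ``near-independence'' or ``near-Markov'' invocation is only approximate and must be controlled quantitatively by the $h_b(\bar{q})$ estimate supplied by Lemma~\ref{lem: x_2+x_1 needs to be decoded}. All the resulting $o(N)$ corrections from the three different places they appear---the $M_3$-channel-2 bound, the conditional version used inside Fano, and the final approximate DPI---must be collected consistently into a single vanishing function $\gamma_j(\epsilon)$.
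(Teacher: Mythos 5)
Your overall strategy coincides with the paper's: show that channel 2 conveys essentially no information about $M_3$, so that $M_3$ must be recovered from $Y_1^N$ alone; observe that $R_3\approx 1-h_b(\delta)$ saturates the capacity of channel 1 given the interference $X_{11}^N\oplus X_{21}^N$, which forces the decoder to resolve that interference---and hence $M_1,M_2$---from the second channel; and finally apply Fano plus data processing through $X_{j2}^N$ to lower-bound $\frac{1}{N}\log ||\mathcal{C}_{j2}||$. The paper's own proof is only a qualitative outline of exactly this chain, and your proposal is its quantitative rendition, so in spirit the two match.

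However, two of your displayed steps are false as written, and both failures stem from the issue you defer to a closing remark: in the presence of feedback, block-level Markov chains and independence statements do not hold. First, the inequality $I(M_3;Y_{21}^N,Y_{22}^N\mid X_{12}^N,X_{22}^N)\le I(X_{32}^N;Y_{21}^N,Y_{22}^N\mid X_{12}^N,X_{22}^N)$ needs the Markov chain $M_3\rightarrow(X_{12}^N,X_{22}^N,X_{32}^N)\rightarrow(Y_{21}^N,Y_{22}^N)$, which fails because $X_{32,l}$ is a function of past outputs; conditioning on the entire input block therefore couples $M_3$ to the channel-2 noise. Second, the claim $H(Y_1^N\mid X_1^N,X_2^N,X_3^N,Y_{21}^N,Y_{22}^N)=Nh_b(\delta)$ holds only with ``$\le$'' (with feedback the inputs are correlated with $\tilde{N}_\delta^N$), whereas your conditional sum-capacity bound needs a statement in the opposite direction, namely $H(Y_1^N\mid M_1,M_2,M_3,Y_{21}^N,Y_{22}^N)\ge Nh_b(\delta)-o(N)$; ruling out the possibility that encoder 3 leaks channel-1 noise into channel 2 is precisely what must be proved here, not assumed. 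Both conclusions are nevertheless true, and the missing ingredient is a construction you never spell out: introduce the error indicators $E_l=X_{32,l}\oplus X_{12,l}\oplus X_{22,l}$, bound $H(E^N)\le N h_b(\bar q)=o(N)$ via Lemma~\ref{lem: x_2+x_1 needs to be decoded} and concavity, and note that given $(M_1,M_2,E^N)$ and the channel-2 noises, $(Y_{21}^N,Y_{22}^N)$ is a deterministic function of random variables jointly independent of $(M_3,\tilde{N}_\delta^N)$ (when $E_l=1$ the output is pure fresh noise, and when $E_l=0$ it is $(X_{12,l}\oplus N_{\delta,l},X_{22,l}\oplus N'_{\delta,l})$). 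Conditioning on messages, noises, and $E^N$---rather than on input blocks---is what converts the $h_b(\bar q)$ estimate into the three near-independence claims your chain requires; as the chain stands, those steps are unsupported.
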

\begin{proof}[Outline of the proof]
First, we  show that the decoder must decode $M_3$ from $Y_1^N$. We argued in the above that $X_{32}^N$ is independent of $M_3$. Hence, the message $M_3$ is encoded only to $X_{31}^N$. Since $X_{31}^N$ is sent though the first channel in Example 1, the decoder must decode $M_3$ from  $Y_1^N$. Next, we argue that the receiver must decode $M_1$ and $M_2$ from $Y_{21}^N$ and $Y_{22}^N$, respectively. Note that the rate of the third encoder is $1-h_b(\delta)$, which equals to the capacity of the first channel given $X_{11}^N \oplus X_{21}^N$. Therefore, the decoder can decode $M_3$, if it has $X_{11}^N \oplus X_{21}^N$. Hence, the decoder must reconstruct $X_{11}^N \oplus X_{21}^N$ from the second channel. It can be shown that this is possible, if the decoder can decode $M_1 $ and $M_2$ from the second channel. As a result, from Fano's inequality, the bounds in the Claim hold.  
\end{proof}

Finally, using \eqref{eqe:bound on sum codebook} and (\ref{eqe: bound on C_12 and C_22}) we get 
\begin{align*}
0 \leq \frac{1}{N}\log_2||\mathcal{C}_{12}\oplus \mathcal{C}_{22}||-\frac{1}{N}\log_2||\mathcal{C}_{j2}|| \leq \eta(\epsilon)+\gamma_j(\epsilon), \quad j=1,2.
\end{align*}
This completes the proof.
\end{proof}

\bibliographystyle{IEEEtran}


\end{document}